\numberwithin{equation}{section}
\theoremstyle{plain}
\newtheorem{theorem}{Theorem}
\numberwithin{theorem}{section}
\newtheorem{lemma}[theorem]{Lemma}          % [theorem] ==> theorems and lemmas will share a counter
\newtheorem{proposition}[theorem]{Proposition}
\newtheorem{corollary}[theorem]{Corollary}
\theoremstyle{definition}
\newtheorem{definition}[theorem]{Definition}
\newtheorem{example}[theorem]{Example}
\newtheorem{remark}[theorem]{Remark}
\newtheorem{assumption}[theorem]{Assumption}
\newcommand{\G}{\Gamma}
\def \R  {{\mathbb {R}}}
\def \N  {{\mathbb {N}}}
\def \x {{\xi}}
\def \g {{\gamma}}
\def \t {{\tau}}
\def \m {{\mu}}
\def \th {{\theta}}
\def \p {{\partial}}
\def \a {{\alpha}}
\def \k {{\kappa}}
\newcommand{\<}{\langle}
\renewcommand{\>}{\rangle}
\renewcommand{\(}{\left(}
\renewcommand{\)}{\right)}
\renewcommand{\[}{\left[}
\renewcommand{\]}{\right]}
\newcommand\Eb{\mathbb{E}}
\newcommand\Pb{\mathbb{Q}}
\newcommand\Rb{\mathbb{R}}
\newcommand\Ib{\mathbb{I}}
\newcommand\Ac{\mathscr{A}}
\newcommand\Ec{\mathscr{E}}
\newcommand\Fc{\mathscr{F}}
\newcommand\Gc{\mathscr{G}}
\newcommand\Lc{\mathscr{L}}
\newcommand\Oc{\mathscr{O}}
\newcommand\Nc{\mathscr{N}}
\newcommand\Mc{\mathscr{M}}
\newcommand\mf{\mathfrak{m}}
\newcommand\eps{\varepsilon}
\newcommand\om{\omega}
\newcommand\Om{\Omega}
\newcommand\sig{\sigma}
\newcommand\gam{\gamma}
\newcommand\Gam{\Gamma}
\newcommand\lam{\lambda}
\newcommand\del{\delta}
\newcommand\nub{\bar{\nu}}
\newcommand\xb{\bar{x}}
\def \l {{\ell}}
\newcommand\hh{\hat{h}}
\newcommand\Hv{\mathbf{H}}
\newcommand\Cv{\mathbf{C}}
\newcommand\mv{\mathbf{m}}
\newcommand\Psiv{\mathbf{\Psi}}
\newcommand\Phiv{\mathbf{\Phi}}
\renewcommand\d{\partial}
\begin{document}

\title{
Pricing approximations and error estimates for local {L}\'evy-type models with default}

\author{
Matthew Lorig
\thanks{ORFE Department, Princeton University, Princeton, USA.  Work partially supported by NSF grant DMS-0739195}
\and
Stefano Pagliarani
\thanks{Centre de Math\'ematiques Appliqu\'ees, Ecole Polytechnique and CNRS, Route de Saclay, 91128 Palaiseau Cedex, France. Email: {\tt  pagliarani@cmap.polytechnique.fr}. Work supported by the Chair {\it Financial Risks} of the {\it Risk Foundation}.}
\and
Andrea Pascucci
\thanks{Dipartimento di Matematica,
Universit\`a di Bologna, Bologna, Italy}
}

\date{\today}

\maketitle

\begin{abstract}
We find approximate solutions of partial integro-differential equations, which arise in financial
models when defaultable assets are described by general scalar L\'evy-type stochastic processes.
We derive rigorous error bounds for the approximate solutions.  We also provide numerical examples illustrating the usefulness and versatility of our methods in a variety of
financial settings.
\end{abstract}

\noindent \textbf{Keywords}:  Partial integro-differential equation, Asymptotic expansion,
Pseudo-differential calculus, Option pricing, L\'evy-type process, Defaultable asset

%%%%%%%%%%%%%%%%%%%%%%%%%%%%%%%%%%%%%%%
%
%       SECTION: Introduction
%
%%%%%%%%%%%%%%%%%%%%%%%%%%%%%%%%%%%%%%%

\section{Introduction}
\label{sec:intro} It is now clear from empirical examinations of option prices and high-frequency
data that asset prices exhibit jumps (see, e.g., \cite{Ait-Sahalia,eraker} and references
therein).  From a modeling perspective, the above evidence supports the use of exponential L\'evy
models, which are able to incorporate jumps in the price process through a Poisson random measure.
Moreover, exponential L\'evy models are convenient for option pricing since, for a wide variety of
L\'evy measures, the characteristic function of L\'evy processes are known in closed-form,
allowing for fast computation of option prices via generalized Fourier transforms (see
\cite{lewis2001simple,lipton2002,levendorskiibook,contbook,Almendral2005}).  However, a major
disadvantage of exponential L\'evy models is that they are spatially homogeneous; neither the
drift, volatility nor the jump-intensity have any local dependence.  Thus, exponential L\'evy
models are not able to exhibit volatility clustering or capture the leverage effect, both of which
are well-known features of equity markets.
\par
In addressing the above shortcomings, it is natural to allow the drift, diffusion and L\'evy
measure of a L\'evy process to depend locally on the value of the underlying process.  Compared to
their L\'evy counterparts, \emph{local L\'evy} models (also known as \emph{scalar L\'evy-type}
models) are able to more accurately mimic the real-world dynamics of assets.  However, the
increased realism of local L\'evy models is matched by an increased computational complexity; very
few local L\'evy models allow for efficiently computable exact option prices (the notable
exception being the L\'evy-subordinated diffusions considered in \cite{carr}). Since an option price can directly related to the solution of a partial integro-differential equation (Kolmogorov backward equation) by means of the Feynman-Kac formula, other classical numerical approaches, such as finite difference or Monte Carlo methods, can be employed.  However, such approaches are by no means free of drawbacks (see, for instance, \cite{andersen2000jump,Forsyth2005}).
\par
Recently, there have been a number of methods proposed for finding approximate option prices in
local L\'evy settings.  We mention in particular the work of \cite{gobet-smart}, who use Malliavin
calculus methods to derive analytic approximations for options prices in a setting that includes
local volatility and Poisson jumps.  We also mention the work of \cite{lorig-jacquier-1}, who use
regular perturbation methods to derive option price and implied volatility approximations in a
local-L\'evy setting. {Another polynomial operator expansion technique was proposed in
\cite{PagliaraniPascucci2013} and \cite{RigaPagliaraniPascucci} to compute option prices in
stochastic-local-L\'evy volatility models.}
\par
More recently, \citet*{lpp-1} illustrate how to obtain a family of asymptotic approximations for
the transition density of the full class of scalar L\'evy-type process (including infinite
activity L\'evy-type processes).  The methods developed in \cite{lpp-1} can be briefly described
as follows.  First, one considers the infinitesimal generator of a general scalar L\'evy-type
process.  One expands the drift, volatility and killing coefficients as well as the L\'evy kernel
as an infinite series of analytic basis functions.  The infinitesimal generator can then be
formally written as an infinite series, with each term in the series corresponding to a different
basis function.  Inserting the expansion for the generator into the Kolmogorov backward equation,
one obtains a sequence of nested Cauchy problems for the density of the L\'evy-type process.
\par
The polynomial expansion technique described in \cite{lpp-1} has also been applied in multi-dimensional settings.  In particular, \cite{lorig-pagliarani-pascucci-2} derive explicit approximations and error bounds for implied volatilities for a general class of $d$-dimensional diffusions.  \cite{LPP4} derive error bounds for transition densities and option prices in a general $d$-dimensional diffusion setting.  However, in neither of these papers do the authors consider processes with jumps.  For $d$-dimensional models with jumps, \cite{lorig-pagliarani-pascucci-5} derive explicit approximations for transition densities and option prices.  However, the results are only formal, as no rigorous error bounds are established for the approximation.  %In this paper, we shall provide rigorous results for models with jumps.
{The main contribution of this paper is a rigorous proof of short-time error estimates on transition densities and option prices, under Local L\'evy models with Gaussian jumps. Furthermore, the proof, which is based on a non-trivial generalization of the standard \emph{parametrix} method, paves the road for further extensions in order to include more general choices of L\'evy measures.}
\par
The main contributions of this paper are as follows.  First, we analytically solve the sequence of nested Cauchy problems mentioned above and thereby derive an explicit expression of the approximate option price (i.e., solution of the integro-differential equation) to arbitrarily high order.  Second, we provide a rigorous and detailed proof of some pointwise error estimates for the approximation.  These estimates were announced, \emph{without} proof, in \cite{lpp-1}. Lastly, we illustrate how to implement our approximation formulas in Mathematica, Wolfram's symbolic computation software.  In particular, we provide numerical examples for transition densities, Call and Put prices, implied volatilities, bond prices and credit spreads.  For the readers' convenience, example Mathematica code is also made freely available on the authors' websites.  The numerical tests in this manuscript and the authors' websites clearly demonstrate the versatility and accuracy of the method.
\par
%The purpose of this paper is three-fold:
%\begin{enumerate}
%\item First, we provide an explicit representation of any term in the density (and price) expansion,
%given as an integro-differential operator acting on a L\'evy-type density. This result (Theorem
%\ref{th:un_general_repres}) is an extension of \cite{lpp-1}, where only a recursive representation
%for the Fourier transform of each term in the expansion is provided. Additionally, it
%extends \cite[Theorem 3.8]{LPP4}, where such an explicit representation was given for the purely
%diffusion case.
%\item Second, we provide a rigorous and detailed proof of the pointwise error estimates
%stated in \cite{lpp-1}. The main results are given by Theorem \ref{t1} and Corollary \ref{cor2},
%where global error bounds are stated for both the approximations of densities and prices. These
%estimates are interesting from the theoretical point of view, as they imply some
%non-classical upper bounds for the fundamental solution of a certain class of integro-differential
%operators with variable coefficients.
%\ora{(Andrea, please feel free to add more here if you like.)}
%\end{enumerate}
The rest of this paper proceeds as follows: in Section \ref{sec:model} we describe a financial market in which a defaultable asset evolves as an exponential L\'evy-type
process.  We then relate the problem of the pricing of a European-style option to the solution of a partial integro-differential equation (PIDE).  Next, in Section \ref{sec:formal},  we introduce a family of asymptotic solutions of the pricing PIDE.  The main results are given in Sections \ref{errbou}, where global error bounds are proved for both the density and price approximations (Theorem \ref{t1} and Corollary \ref{cor2}. respectively). In addition to their practical use, these estimates are interesting from the theoretical point, as they imply some non-classical upper bounds for the fundamental solution of a certain class of integro-differential operators with variable coefficients.  The proof of Theorem \ref{t1} is postponed to Section \ref{sec:proof}.  Before proving the theorem, we provide in Section \ref{sec:examples} a number of numerical examples, which are relevant for financial applications.

%%%%%%%%%%%%%%%%%%%%%%%%%%%%%%%%%%%%%%%
%
%       SECTION: Model
%
%%%%%%%%%%%%%%%%%%%%%%%%%%%%%%%%%%%%%%%

\section{Market model and option pricing}
\label{sec:model} For simplicity, we assume a frictionless market, no arbitrage, zero interest
rates and no dividends.  Our results can easily be extended to include locally dependent interest
rates and dividends.  We take, as given, an equivalent martingale measure $\Pb$, chosen by the
market on a complete filtered probability space \mbox{$(\Om,\Fc,\{\Fc_t,t\geq0\},\Pb)$} satisfying
the usual hypotheses. % of completeness and right continuity.
%The filtration $\Fc_t$ represents the history of the market.
All stochastic processes defined below live on this probability space and all
expectations are taken with respect to $\Pb$.  We consider a defaultable asset $S$ whose
risk-neutral dynamics are given by
\begin{align}
\begin{cases}
 S_t=  \Ib_{\{ \zeta > t\}} e^{X_t}, \\
 dX_t =  \mu(t,X_t) dt + \sig(t,X_t) dW_t + \int_\Rb  z \tilde{N}(dt,X_{t-},dz) ,  \\
 \tilde{N}(dt,X_{t-},dz)=  N(dt,X_{t-},dz) - \nu(t,X_{t-},dz) dt, \\
 \zeta=  \inf \left\{ t \geq 0 : \int_0^t \gam(s,X_s) ds \geq \Ec \right\}.
\end{cases}\label{eq:dX}
\end{align}
Here, $X$ is a L\'evy-type process with local drift function $\mu(t,x)$, local volatility function
$\sig(t,x) \geq 0$ and state-dependent Poisson random and L\'evy measures $N(dt,x,dz)$ and $\nu(t,x,dz)$
respectively. %We shall denote by $\Fc_t^X$ the filtration generated by $X$.
The random variable $\Ec \sim \text{Exp}(1)$ has an exponential distribution and is independent of
$X$.  Note that $\zeta$, which represents the default time of $S$, is defined here through the
so-called \textit{canonical construction} (see \cite{bielecki2001credit}). %and is the first
%arrival time of a doubly stochastic Poisson process with local intensity function $\gam(t,x) \geq
%0$.
This way of modeling default is {also} considered in a local volatility setting in
\citet*{JDCEV,linetsky2006bankruptcy}, and for exponential L\'evy models in \cite{capponi}. Notice
that the drift coefficient $\mu$ is fixed by $\sig$, $\nu$ and $\gam$ in order to satisfy the
martingale condition:
%\footnote{
%We provide a derivation of the martingale condition in Section \ref{sec:pricing} Remark \ref{rr11}
%below. }
\begin{align}
\mu(t,x)
    &=  \gam(t,x) - a(t,x) - \int_\Rb \nu(t,x,dz) (e^z-1-z), &
a(t,x) &:= \frac{1}{2} \sig^2(t,x). \label{eq:drift}
\end{align}
%\bigskip
%\ora{I added a comment to Example 3, the orthogonal basis example.  Shall we delete what you have in blue here?  If so, please delete.}\\
%{\andrea The condition $a(t,\cdot), \gam(t,\cdot), \nu(t,\cdot,dz) \in C^{\infty}(\Rb)$
%\footnote{This notation for $\nu$ means: $\nu(t,\cdot,B) \in C^{\infty}(\Rb)$ for any Borel set
%$B$.} could be too strong: for the Taylor expansion of order $n$ we only need the $C^{n}$
%regularity. For the Hermite expansion it suffices that the coefficients are in $L^2$: this is a
%great advantage of the Hermite expansion, at least if models with irregular coefficients will ever
%be used in finance. That could be the case of non-linear models. Perhaps we might want to add a
%picture with the approximation of a step function instead of the exp function by Hermite
%polynomials: this is not strictly necessary also because I guess that several terms are needed to
%get a decent approximation.}
%\bigskip
\par
We assume that the coefficients are measurable in $t$ and suitably smooth in $x$ so as to ensure the
existence of a strong solution to \eqref{eq:dX}  (see, for instance, \citet*{oksendal2}, Theorem 1.19).
We also assume that
\begin{align}
  \nub(dz)    &:= \sup_{(t,x) \in \Rb^+ \times \Rb} \nu(t,x,dz) ,
\end{align}
satisfies the following three boundedness conditions
\begin{align}
\int_\Rb  \min\{|z|,z^2\}\, \nub(dz)
        &<      \infty, &
\int_{|z| \geq 1}  e^z\, \nub(dz)
        &<      \infty, \label{eq:conditions}
\end{align}
which is rather standard assumption for financial applications.
We will relax some of these assumptions for the numerical examples provided in Section
\ref{sec:examples}.  Even without the above assumptions in force, our numerical tests indicate
that our approximation techniques gives very accurate results.

%%%%%%%%%%%%%%%%%%%%%%%%%%%%%%%%%%%%%%%
%
%       SECTION: Pricing
%
%%%%%%%%%%%%%%%%%%%%%%%%%%%%%%%%%%%%%%%

\label{sec:pricing} We consider a European derivative expiring at time $T$ with payoff $H(S_T)$
and we denote by $V$ its price process.
we introduce
\begin{align}
h(x)
    &:= H(e^x)\qquad\text{and}\qquad K:= H(0).
\end{align}
Then, by no-arbitrage arguments (see, for instance,
\citet[Section 2.2]{linetsky2006bankruptcy}) the price of the option at time $t<T$ is given by
%
%For convenience, we introduce
%\begin{align}
%h(x)
%    &:= H(e^x)\qquad\text{and}\qquad K:= H(0).
%\end{align}
%\begin{proposition}\label{p1}
%The price $V_t$ is given by
\begin{align}\label{e1}
 V_{t}
    &=      K+\Ib_{\{\zeta>t\}} \Eb \[e^{-\int_t^T \gam(s,X_s) ds} \left(h(X_T)-K\right)  | X_t \], &
 t
        &\le    T.
\end{align}
%\end{proposition}
%\begin{proof}
%The proof can be found in Section 2.2 of \citet{linetsky2006bankruptcy}.
%\end{proof}
%\noindent
From \eqref{e1} we see that, in order to compute the price of an option, we must
evaluate functions of the form\footnote{Note: we can accommodate stochastic interest rates and
dividends of the form $r_t = r(t,X_t)$ and $q_t=q(t,X_t)$ by simply making the change: $\gam(t,x)
\to \gam(t,x) + r(t,x)$ and $\mu(t,x) \to \mu(t,x) + r(t,x) - q(t,x)$ in PIDE \eqref{eq:v.pide}.}
\begin{align}\label{expectation}
u(t,x)
    &:= \Eb \[e^{-\int_t^T \gam(s,X_s) ds}h(X_T)| X_t = x \] .
\end{align}
By a direct application of the Feynman-Kac representation theorem (see, for instance, Theorem
14.50 in \cite{Pascucci2011}) the classical solution of the following Cauchy problem,
%the unique classical solution of the following Cauchy problem, when it exists, is equal to $v(t,x)$
%we have that $v(t,x)$ is
%equal, when it exists, to the unique classical solution of the Cauchy problem
\begin{align}\label{eq:v.pide}
&\begin{cases}
 (\d_t + \Ac) u(t,x)=0,\qquad & t\in[0,T[,\ x\in\mathbb{R}, \\
 u(T,x) =  h(x),& x \in\mathbb{R},
\end{cases}
\end{align}
when it exists, is equal to the function $u$ defined in \eqref{expectation}. Here $\Ac\equiv \Ac(t,x)$ is the
integro-differential operator associated with the SDE \eqref{eq:dX}
and defined explicitly as %\begin{align}
%\Ac(t) f(x) &=  \gam(t,x) ( \d_{x}f(x) - f(x) ) + a(t,x) ( \d_{xx} f(x) - \d_{x}f(x) )\\
%           &\quad - \int_\Rb \nu(t,x,dz)(e^z-1-z)  \d_{x}f(x) + \int_\Rb \nu(t,x,dz)( f(x+z) - f(x) - z \d_{x}f(x) ) , \label{eq:A}
%\end{align}
\begin{align}
\Ac(t,x) f(x) &=   a(t,x)\d_{xx} f(x) +\m(t,x) \d_{x}f(x) -\gam(t,x) f(x) \\ & \qquad
    + \int_\Rb\left(f(x+z) - f(x) - z \d_{x}f(x)\right)\nu(t,x,dz) \label{eq:A}
\end{align}
with $\m$ and $a$ as in \eqref{eq:drift}.
%For greater convenience, with a slight abuse of notation,
We say that $\Ac$ is the {\it characteristic operator}\footnote{More precisely, $\Ac+\gam$ would be the
characteristic operator of $X_{t}$.} of $X_{t}$. %Moreover, in order to shorten the notation, in
%the sequel we will suppress the explicit dependence on $t$ in $\Ac(t)$ when it is clear from the
%context.
\par
Sufficient conditions for the existence and uniqueness of a classical solution of a second order elliptic
integro-differential equations of the form \eqref{eq:v.pide} are given in Theorem II.3.1 of \citet*{GarroniMenaldi}.
%We denote by $p(t,x;T,y)$ the fundamental solution of the operator $(\d_t + \Ac)$, which is
%defined as the solution of $\eqref{eq:v.pide}$ with $h=\del_y$. Note that $p(t,x;T,y)$ represents
%also the transition density of $\log S$ \footnote{Here with $\log S$ we denote the process
%$X_t\Ib_{\{ \zeta > t\}} -\infty\,\Ib_{\{ \zeta \leq t\}}$.  }
%\begin{align}
% p(t,x;T,y )dy =  \Pb [ \log S_T \in dy | \log S_t = x ], \qquad  x,y\in\Rb, \qquad t < T.
%\end{align}
%Note also that $p(t,x;T,y)$ is not a probability density since (due to the possibility that $S_T=0$) we have
%\begin{align}
%\int_\Rb p(t,x;T,y) dy
%    \leq 1.
%\end{align}
%Given the existence of the fundamental solution of $(\d_t + \Ac)$, we have that for any $h$ that
%is integrable with respect to the density $p(t,x;T,\cdot)$, the Cauchy problem \eqref{eq:v.pide}
%has a classical solution that can be represented as
%\begin{align}
%u(t,x)
%    &= \int_\Rb h(y) p(t,x;T,y) dy. \label{eq:v.def1}
%\end{align}
%In the sequel we will make systematic use of the classical Fourier transform
%\begin{align}\label{eq:def_Fourier}
%\Fc_x f(\cdot)(\xi):=\int_{\mathbb{R}}e^{i x\xi}f(x) d x,\qquad \xi\in\mathbb{R},\quad f\in L^{1}(\mathbb{R}).
%\end{align}
In particular, given the existence of the fundamental solution $p(t,x;T,y)$ of $\left(\d_t +
\Ac\right)$, we have that for any integrable datum $h$, the Cauchy problem \eqref{eq:v.pide} has a
classical solution that can be represented as
\begin{align}
 u(t,x) &= \int_\Rb h(y) p(t,x;T,y) dy. \label{eq:v.def1}
\end{align}
Notice that $p(t,x;T,y)$ is a ``defective'' probability density since (due to the possibility that
$S_T=0$) we have
\begin{align}
 \int_\Rb p(t,x;T,y) dy \leq 1.
\end{align}

\section{Approximate densities and option prices via polynomial expansions}
\label{sec:formal} In this section we describe the approximation methodology and define the notation that will be needed in subsequent sections.
%for the full proof of the results stated in \citet*{lpp-1} which will be given in the next section.
%Our goal is to construct an approximate solution of Cauchy problem \eqref{eq:v.pide}. Throughout

\begin{definition}
\label{def:A}
For any $n\leq N\in\mathbb{N}_0$, let $a_{n}=a_{n}(t,x)$, $\gamma_{n}=\gamma_{n}(t,x)$
and $\nu_n=\nu_n(t,x,d z)$ be such that the following hold:
\begin{enumerate}
%\item[i)] the first $N$ coefficients in the sequence $(a_{\alpha,n}(t,x))_{n \geq 0}$ are continuous functions that depend polynomially on $x$ with $a_{\alpha,0}(t,x) \equiv a_{\alpha,0}(t)$,
\item[(i)] For any $t \in [0,T]$, the functions $a_{n}(t,\cdot)$, $\gamma_{n}(t,\cdot)$ are polynomials with $a_{0}(t,x) \equiv a_{0}(t)$, $\gamma_{0}(t,x) \equiv \gamma_{0}(t)$, and for any $x\in\Rb$ the functions $a_{n}(\cdot,x)$, $\gamma_{n}(\cdot,x)$ belong to
$L^{\infty}([0,T])$.
\item[ii)] %for every $(t,x)$, the sequence $(\nu_n(t,x,dz))_{n \geq 0}$ is a sequence of real measures (not necessarily non-negative) on $\mathbb{R}^d$, the first $N$ of which are continuous in $(t,x)$, depend polynomially on $x$
For any $t\in [0,T]$, $x\in\mathbb{R}$, we have
 \begin{align}\label{e34}
  \nu_n(t,x,d z)
    &=\sum_{m=0}^{M_{n}}x^{m}  \nu_{n,m}(t,d z), & M_n &\in \mathbb{N}_{0} ,
\end{align}
where each $\nu_{n,m}(t,d z)$ satisfies condition \eqref{eq:conditions}.  Moreover, $M_{0}=0$, $\nu_{0}\ge0
$ and
\begin{align}\label{e21aa}
   \int_{|z|\geq 1}e^{\lambda |z|}\nu_{0}(t,d z)<\infty,\qquad t\in[0,T],
\end{align}
for some positive $\lambda$.
%\item[iii)] we have convergence:
%\begin{align}
%\nu
%    &=  \sum_{n=0}^\infty \nu_n , &
%a_\alpha
%    &=  \sum_{n=0}^\infty a_{\alpha,n}, &
%|\alpha|
%    &\leq 2, \label{eq:a.sum}
%\end{align}
%in some sense (pointwise or in norm).
%\item[(ii)] for any $t \in [0,T]$ and $B \in \Bc(\Rb_0^d)$ the functions $\nu_n(t,\cdot,B)$ and $a_{\alpha,n}(t,\cdot)$ are polynomials with
%\begin{align}
%\nu_0(t,\cdot,B)
%    &=\nu_0(t,B) &
%    &\text{and}&
%a_{\alpha,0}(t,\cdot)
%    &=a_{\alpha,0}(t),
%\end{align}
%\item[(iii)] for any $x\in\R^{d}$ the functions $a_{\alpha,n}(\cdot,x)$ belong to
%$L^{\infty}([0,T])$.
\end{enumerate}
Then we say that $\( \Ac_n(t) \)_{0\le n \le N}$, defined by
%\begin{align}
% \Ac_n(t,x)f(x)=\Ac_n(t)f(x) &= \sum_{|\beta|\le M_{n}}x^{m} \int_{\mathbb{R}} \nu_{n,m}(t,d z) \left( e^{z\, \partial_x} - 1 - z\, \partial_x \right)f(x)
%            + \sum_{|\alpha |\leq 2}  a_{\alpha,n}(t,x) D^{\alpha}_x f(x) \\
%            &\equiv  \int_{\mathbb{R}} \nu_n(t,x,d z) \left( e^{\<z,\nabla_x \>} - 1 - \<z,\nabla_x \>\right)f(x)
%            + \sum_{|\alpha |\leq 2}  a_{\alpha,n}(t,x) D^{\alpha}_x f(x),      \label{eq:An}
%\end{align}
\begin{align}
 \Ac_n(t,x)f(x)&=%\Ac_n(t)f(x) &=
 a_n(t,x) ( \d_{xx} f(x) - \d_{x}f(x))+\gam_n(t,x)\left(\d_{x}f(x)-f(x)\right)\\
           &\quad +\sum_{m=1}^{M_{n}}x^{m}\left( - \int_\Rb (e^z-1-z) \, \nu_{n,m}(t,dz)\,  \d_{x}f(x) + \int_{\mathbb{R}} \left( e^{z\, \partial_x} - 1 - z\, \partial_x \right)f(x)\, \nu_{n,m}(t,d z) \right) \label{eq:A_n}
 \\
            &\equiv   a_n(t,x) ( \d_{xx} f(x) - \d_{x}f(x))+\gam_n(t,x)\left(\d_{x}f(x)-f(x)\right)\\
           &\quad - \int_\Rb (e^z-1-z) \, \nu_n(t,x,dz)\,  \d_{x}f(x) +
           \int_\Rb ( f(x+z) - f(x) - z \d_{x}f(x) )\,\nu_n(t,x,dz),      \label{eq:An}
\end{align}
is an \emph{$N$-th order polynomial expansion} of $\Ac(t)$.
\end{definition}
Definition \ref{def:A} allows for very general polynomial specifications.  The idea is to choose
an expansion $(\Ac_n(t))$ that closely approximates $\Ac(t)$, i.e. formally one has
\begin{align}\label{eq:A.expand}
 \Ac(t,x)=
    %{\andrea \phi(t,x,\Dc)= \sum_{n=0}^\infty B_{n}(x)\phi_n(t,\Dc)=}
    \sum_{n=0}^\infty \Ac_{n}(t,x).
\end{align}
The precise sense of this approximation will depend on the application.  Below, we present three polynomial expansions. The first two expansion schemes provide an accurate approximation $\Ac(t,x)$ in a pointwise local sense, under the assumption of smooth coefficients. The last expansion scheme approximates $\Ac(t,x)$ in a global sense and can be applied even in the case of discontinuous coefficients.
%\noindent
%For a fixed polynomial expansion basis $(a_n,\gamma_n,\nu_n)_{n\ge 0}$, the operator $\Ac$ can be formally written as
%\begin{align}\label{eq:A.expand}
% \Ac(t,x)=
%    \sum_{n=0}^\infty \Ac_{n}(t,x),
%\end{align}
%where the operators $\Ac_n=\Ac_n(t,x)$ act as
%\begin{align}
% \Ac_n(t,x) f(x) &=  a_n(t,x) ( \d_{xx} f(x) - \d_{x}f(x))+\gam_n(t,x)\left(\d_{x}f(x)-f(x)\right)\\
%           &\quad - \int_\Rb (e^z-1-z) \, \nu_n(t,x,dz)\,  \d_{x}f(x) +
%           \int_\Rb ( f(x+z) - f(x) - z \d_{x}f(x) )\,\nu_n(t,x,dz). \label{eq:A_n}
%\end{align}
%Let us describe some useful choices of approximating sequences.
\begin{example}\label{example:Taylor}(Taylor polynomial expansion)\\
Assume the coefficients $a(t,\cdot),\gamma(t,\cdot)\in C^{N}(\mathbb{R})$ and that the compensator
$\nu$ takes the form
 $$\nu(t,x,d z)=h(t,x,z)\bar{\nu}(d z)$$
where $h(t,\cdot,z)\in C^{N}(\mathbb{R})$ with $h\geq 0$, and $\bar{\nu}$ is a L\'evy measure.
%is
%for any Borel set $B \in \Bc(\Rb^d)$ the function $\nu(t,\cdot,B)$ to be differentiable, up to order $N$.
Then, for any fixed $\bar{x}\in\R$ and $n\le N$, we define $a_{n}$, $\gamma_{n}$ and $\nu_n$ as the
$n$th order term of the Taylor expansions of $a$, $\gamma$ and $\nu$ respectively in the spatial
variables $x$ around the point $\bar{x}$.  That is, we set
\begin{equation}\label{eq:taylor_coefficients}
       a_{n}(t,x) = \frac{\partial_x^n a(t,\bar{x})}{n!}(x-\bar{x})^{n},\qquad \gamma_{n}(t,x) = \frac{\partial_x^n \gamma(t,\bar{x})}{n!}(x-\bar{x})^{n},\qquad  \nu_n(t,x,d z)
  =       \frac{\partial_{x}^{n}h(t,\bar{x},z)}{n!}(x-\bar{x})^{n}\bar{\nu}(d z). %& 0    &\leq n\leq N,
\end{equation}
%where as usual {$\beta!=\beta_{1}!\cdots\beta_{d}!$ and $x^\beta = x_1^{\beta_1} \cdots x_d^{\beta_d}$}.
The expansion proposed in \cite{lorig-pagliarani-pascucci-2} and
\cite{lorig-pagliarani-pascucci-3} is the particular case when $\nu\equiv 0$.
\end{example}
\begin{example}\label{example:TimeTaylor}(Time-dependent Taylor polynomial expansion)\\
% Assume the coefficients $a_{\alpha}(t,\cdot)\in C^{N}(\mathbb{R}^d)$, and that $\nu(t,x,d z)=h(t,x,z)m(d z)$ with $h(t,\cdot,z)\in C^{N}(\mathbb{R}^d)$.
Under the assumptions of Example \ref{example:Taylor},
fix a trajectory $\xb:\Rb_+ \to \Rb$.
%let the basis point of the
%expansion to depend on time: $\xbar=\xbar(t)$.
We then define $a_{n}$, $\gamma_{n}$ and $\nu_n$ as the
$n$th order term of the Taylor expansions of $a$, $\gamma$ and $\nu$ respectively around $\bar{x}(t)$. %More precisely, we set
%\begin{align}
% \nu_n(t,x,d z)
%    &=\sum_{|\beta|=n}\frac{D_x^{\beta}h(t,\bar{x}(t), z)}{\beta!}(x-\bar{x}(t))^{\beta}\bar{\nu}(d z),\\
%     a_{\alpha,n}(t,x)
%    &=\sum_{|\beta|=n}\frac{D_x^{\beta}a_{\alpha}(t,\bar{x}(t))}{\beta!}(x-\bar{x}(t))^{\beta}, &|\alpha| &\leq 2 .
%\end{align}
This expansion for the coefficients allows the expansion point $\bar{x}$ of the Taylor
series to evolve in time according to the evolution of the underlying process $X_t$. For instance, one could choose $\bar{x}(t)=\mathbb{E}[X_t]$. In \cite{lorig-pagliarani-pascucci-2} this choice results in a highly accurate approximation for option prices and implied volatility in the \cite{heston} model, recently included in the open-source financial library QuantLib.
\end{example}
%\ora{******** END CHANGES HERE ************}

\begin{example}\label{example:Hilbert}(Hermite polynomial expansion)\\
Hermite expansions can be useful when the diffusion coefficients are discontinuous.  A remarkable
example in financial mathematics is given by the Dupire's local volatility formula for models with
jumps (see \cite{frizyor2013}). In some cases, e.g., the well-known Variance-Gamma model, the
fundamental solution (i.e., the transition density of the underlying stochastic model) has
singularities.  In such cases, it is natural to approximate it in some $L^{p}$ norm rather than in
the pointwise sense. For the Hermite expansion centered at $\xb$, one sets
\begin{align}
a_{n}(t,x)
    &=
                \< \Hv_n(\cdot-\bar{x}) , a(t,\cdot) \>_{\Gamma} \Hv_n(x-\bar{x}),\qquad   \gamma_{n}(t,x)            =
                \< \Hv_n(\cdot-\bar{x}) , \gamma(t,\cdot) \>_{\Gamma} \Hv_n(x-\bar{x}),\\
       \nu_{n}(t,x,d z)
    &=
                \< \Hv_n(\cdot-\bar{x}) , \nu(t,\cdot,d z) \>_{\Gam} \Hv_n(x-\bar{x}),
                \end{align}
where the inner product $\<\cdot,\cdot\>_\Gam$ is an integral over $\Rb$ with a Gaussian weighting
centered at $\xb$ and $\Hv_n(x)$ % = H_{n}(x)$ where $H_n$
is the $n$-th one-dimensional Hermite polynomial (properly normalized so that $\< \Hv_m,
\Hv_n \>_\Gam = \delta_{m,n}$ with $\delta_{m,n}$ being the Kronecker's delta function).
%\ora{Why do we have two notations for the Hermite polynomials?  We should stick with eitehr $H_n$
%or $\Hv_n$.}
\end{example}
\begin{remark}
\label{rmk:order}
%\blu{(I added this remark)}\\
Although in each of the above examples, $a_n$, $\gam_n$ and $\nu_n$ are polynomials in $x$ of degree $n$, this is not a requirement of our expansion method.  The degree of $a_n$, $\gam_n$ and $\nu_n$ may be greater than, equal to, or less than $n$.
\end{remark}
\par
We now return to Cauchy problem \eqref{eq:v.pide}.
Following the classical perturbation approach, we expand the solution $u$ as an infinite sum
\begin{align}
  u  &=  \sum_{n=0}^\infty u_n. \label{eq:v.expand}
\end{align}
Inserting \eqref{eq:A.expand} and \eqref{eq:v.expand} into \eqref{eq:v.pide} we find that the
functions $(u_{n})_{n\geq 0}$ satisfy the following sequence of nested Cauchy problems
\begin{align}\label{eq:v.0.pide}
&\begin{cases}
 (\d_t + \Ac_0) u_0(t,x) =  0,\qquad & t\in[0,T[,\ x\in\mathbb{R}, \\
 u_0(T,x) =  h(x),& x \in\mathbb{R},
\end{cases}
\intertext{and} \label{eq:v.n.pide} &\begin{cases}
 (\d_t + \Ac_0) u_n(t,x) =  - \sum\limits_{k=1}^{n} \Ac_k(t,x) u_{n-k}(t,x),\qquad &  t\in[0,T[,\ x\in\mathbb{R}, \\
 u_n(T,x) =  0, &    x \in\mathbb{R}.
\end{cases}
\end{align}
\begin{remark}
\label{rmk:choice}
%\blu{(I added this remark)}\\
In fact, the nested sequence of Cauchy problems \eqref{eq:v.0.pide}-\eqref{eq:v.n.pide} satisfied by the sequence of functions $(u_n)$ is a particular choice.  This choice can be motivated by considering a family of Cauchy problems, indexed by a small parameter $\eps$
\begin{align}
(\d_t + \Ac^\eps) u^\eps
    &=  0 , &
u^\eps(T,x)
    &=  h(x) , &
\Ac^\eps
    &=  \sum_{n=0}^\infty \eps^n \Ac_n , &
\eps
    &\in [0,1] . \label{eq:matt}
\end{align}
Note that, by \eqref{eq:A.expand}, we formally have $\Ac^\eps|_{\eps=1}=\Ac$.  If one seeks a solution to \eqref{eq:matt} of the form $u^\eps = \sum_{n=0}^\infty \eps^n u_n$, then, collecting terms of like powers of $\eps$ one finds that $u_0$ and $u_n$ satisfy \eqref{eq:v.n.pide} and \eqref{eq:v.n.pide}, respectively.
\end{remark}
\subsection{Expression for $u_0$}
Notice that
%\begin{align}
$\Ac_{0}=\Ac_{0}(t,x)$
%\end{align}
is the characteristic operator of the following additive process
\begin{equation}
dX^{0}_t =  \left(   \gam_0(t) - a_0(t) - \int_\Rb (e^z-1-z) \, \nu_0(t,dz)  \right) dt +
\sqrt{2a_0(t)} dW_t + \int_\Rb  z\left(N^{(0)}_t(dt,dz) - \nu_0(t,dz) dt\right), \label{eq:dX_0}
\end{equation}
whose characteristic function $\hat{p}_0(t,x;T,\xi)$ is given explicitly by %{\blue\bf[Please check! I think that the formula was wrong.]}
\begin{align}
 \hat{p}_0(t,x;T,\xi)
    &:= \Eb[ e^{i \xi X_T^0} | X_t^0 = x ] %\label{eq:char.E} \\
  %&
  =\exp\left(i \xi x +\Phiv_0(t,T,\xi) \right),
    \label{eq:charact_0}
\end{align}
where
\begin{equation}\label{eq:Phi_0}
\Phiv_0(t,T,\xi)=\left(i \xi\, \mv(t,T) -\frac{1}{2}\Cv(t,T)\xi^2+\Psiv(t,T,\xi)-\int_t^T \g_0(s) d s\right),
\end{equation}
and with $\mv(t,T)$, $\Cv(t,T)$ and $\Psiv(t,T,\xi)$ being defined as
\begin{align}\label{eq:op_m_C}
  \mv(t,T)&:=\int_t^T \left( \gam_0(s) - a_0(s)
       - \int_\Rb (e^z-1-z)\, \nu_0(s,dz)\right)ds,\\
  \Cv(t,T)&:=\int_t^T 2 a_0(s)ds,\\ \label{eq:op_psi}
  \Psiv(t,T,\xi) &:=\int_t^T \int_\Rb  (e^{i z\xi}-1-iz \xi)\, \nu_0(s,dz)ds.
\end{align}
Note, the additive process $X^0$ in \eqref{eq:dX_0} is assumed to be defined on an appropriate probability space.
It is well-known that additive processes can be constructed as time-changed L\'evy processes (see \cite[Chapter 14]{contbook}).
%The time-dependent jump-measure $N^{(0)}$ can be constructed as a time-changed Poisson random measure $N^{(0)}_t([0,t],dz) = P_{\Lam(t)}(dz)$ where $P$ is a Poisson random measure and $\Lam(t) := \int_0^t \nu_0(s,dz) ds$.  We refer the reader to \cite[Chapter 14]{contbook} for details.
The fundamental solution $p_{0}$ of $\left(\p_{t}+\Ac_{0}\right)$, {which exists if $a_0>0$
\cite[Proposition 28.3]{sato1999levy},} can be recovered by Fourier inversion since, by the first
equality in \eqref{eq:charact_0}, we have
\begin{align}\label{eq:def_Fourier}
 \hat{p}_0(t,x;T,\xi)=\Fc_y p_{0}(t,x,;T,\cdot)(\xi):=\int_{\mathbb{R}}e^{i y\xi}p_{0}(t,x,;T,y) d y ,
\end{align}
and therefore
\begin{align}\label{eq:p0.ft}
p_{0}(t,x,;T,y)
    &=  \Fc_y^{-1} \hat{p}_0(t,x;T,\cdot)(y) := \frac{1}{2\pi}\int_{\mathbb{R}}e^{-i y\xi}\hat{p}_0(t,x;T,\xi) d \xi .
\end{align}
Given the fundamental solution $p_{0}(t,x,;T,y)$, we have the representation for the solution $u_{0}$ of problem \eqref{eq:v.0.pide}
\begin{align}\label{u0expl}
 u_0(t,x) &=   \int\limits_{\mathbb{R}} p_{0}(t,x;T,y) h(y)\, d y ,\quad t<T,\ x\in\mathbb{R} .
\end{align}
Assume that the payoff function $h$ and its Fourier transform $\hat{h}(y) \in L^1(\Rb,dy)$. Then, by inserting the expression \eqref{eq:p0.ft} for $p_{0}(t,x,;T,y)$ into \eqref{u0expl} and integrating with respect to $\xi$, we also have the following alternative representation
\begin{align}
 u_0(t,x) &= \frac{1}{2\pi}  \int\limits_{\mathbb{R}} \hat{p}_{0}(t,x;T,\xi) \hat{h}(-\xi)\, d \xi %, &
%\hat{h} &:=  \Fc^{-1}h , &
%h & \in L^1(\Rb,dy)
. \label{eq:u0.fourier}
\end{align}
%If $h(y) \notin L^1(\Rb,dy)$ but $h(y)e^{cy}  \in L^1(\Rb,dy)$ for some $c \in \Rb$ (which is the case for Call and Put payoffs), one can still use expression \eqref{eq:u0.fourier} by fixing an imaginary component of $\xi$.  This technique, known as a \emph{generalized Fourier transform}, is described in detail in \cite{lewis2000,lipton2002}.\\

%\begin{proposition}
%The leading term $u_0$ in the expansion \eqref{eq:v.expand} is explicitly given by where $p_{0}$
%is a L\'evy-type density whose Fourier transform is given by
%
%\end{proposition}
%\begin{proof}
%It is a direct consequence of Remark \ref{rem:A_0} combined with definition \eqref{eq:v.0.pide}, condition \eqref{parabolicity}, and the L\'evy-Khintchine representation.
%\end{proof}
%\begin{align}
% p_{0}(t,x;T,y) &=  \frac{1}{  \sqrt{(2\pi)^{d}|\Cv(t,T)|} }
%    \exp\left(-\frac{1}{2}\langle\Cv^{-1}(t,T) (y - x-\mv(t,T)),
%    (y -x-\mv(t,T))\rangle\right) \label{e22and}
%\end{align}
%with covariance matrix $\Cv(t,T)$ and mean vector $x+\mv(t,T)$ given by:
%\begin{align}\label{eq:covariance_mean}
% \Cv(t,T)=  \int_t^T C(s) d s,\qquad\qquad \mv(t,T)=   \int_t^T m(s) d s.
%\end{align}
\subsection{Expression for $u_n$}
The following theorem provides an explicit formula for $u_{n}(t,x)$ in \eqref{eq:v.n.pide} %the price $u(t,x)$,
%characteristic function $\hat{p}(t,x;T,\xi)$ of $X_{t}$,
expressed in terms of integro-differential operators applied to
%$\hat{p}_{0}(t,x;T,\xi)$
$u_{0}(t,x)$ in \eqref{u0expl}.
\begin{theorem}\label{th:un_general_repres}
Fix $N\in \mathbb{N}$ and let $\( \Ac_n(t) \)_{0\le n \le N}$ be an $N$th order polynomial
expansion of $\Ac$ as in Definition \ref{def:A}. %If problem \eqref{eq:v.n.pide} has a
%classical solution $u_{n}$
For any $1\leq n\leq N$, %let $u_{n}$ be as in \eqref{eq:v.n.pide}: then
%Then, %for any $n\geq 1$,
we have
\begin{align}\label{eq:un_tilde}
 u_n(t,x)%:=\Fc_y p_n(t,x;T,\cdot)(\xi)
  = \Lc^x_n(t,T) u_0(t,x),\quad t<T,\ x,\xi\in\mathbb{R},\qquad 1\leq n\leq N,
\end{align}
with $u_0$ as in \eqref{u0expl} and
\begin{align}\label{eq:def_Ln}
\Lc_n^x(s_0,T) :=  \sum_{h=1}^n \int_{s_0}^T d s_1 \int_{s_1}^T d s_2 \cdots \int_{s_{h-1}}^T
d s_h
      \sum_{i\in I_{n,h}}\Gc^x_{i_{1}}(s_0,s_1) \cdots \Gc^x_{i_{h}}(s_0,s_h) ,
\end{align}
where\footnote{ For instance, for $n=3$ we have $I_{3,3}=\{(1,1,1)\}$, $I_{3,2}=\{(1,2),(2,1)\}$
and $I_{3,1}=\{(3)\}$. }
\begin{align}\label{eq:def_Ln_bis}
I_{n,h}
    &=  \{i=(i_{1},\dots,i_{h})\in\mathbb{N}^{h} \mid i_{1}+\dots+i_{h}=n\} , &
1
    & \le h \le n ,
\end{align}
and $\Gc^x_{n}(t,s)$ is the operator (see Remark \ref{rem:and1} below)
\begin{align}\label{def_Gn}
 \Gc^x_{n}(t,s) &:=  \Ac_n\left(s,\Mc^x(t,s)\right),
% \Gc^x_{n}(t,s) &:=  B_{n}\left(\Mc^x(t,s)\right) \Ac_n(s),
\end{align}
%\begin{align}\label{def_Gn}
%\Gc_{n}(t,s)
%    &:= \sum_{|\alpha| \leq 2} a_{\alpha,n}\left(s,\Mc(t,s)\right) D^{\alpha}_{x},
%\end{align}
with $\Mc^x(t,s)$ acting as
\begin{align} \label{eq:M}
\Mc^x(t,s)f(x)=& \left(x+\mv(t,s)+\Cv(t,s)\, \partial_x\right) f(x)+\int_t^s \int_{\mathbb{R}}
\left(f(x+z)-f(x)\right)z\, \nu_0(r,dz)dr.
\end{align}
%In particular the solution $u_n$ of \eqref{eq:v.n.pide}, if it exists, is given by
%\begin{align}\label{eq:un}
%  u_n(t,x)   &=  \Lc^x_n(t,T) u_0(t,x),\quad t<T,\ x\in\mathbb{R}.
%\end{align}
\end{theorem}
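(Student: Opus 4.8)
The plan is to prove the identity by fully unfolding Duhamel's principle for the nested Cauchy problems \eqref{eq:v.0.pide}--\eqref{eq:v.n.pide} and then collapsing the resulting time-ordered composition back onto $u_0$ by means of two commutation rules for the propagator of $\Ac_0$. Write $P_0(t,s)$, $t\le s$, for that propagator, $\bigl(P_0(t,s)g\bigr)(x)=\int_\Rb p_0(t,x;s,y)\,g(y)\,dy$, so that $u_0(t,\cdot)=P_0(t,T)h$ and, by uniqueness, the solution of $(\d_t+\Ac_0)w=-g$ with $w(T,\cdot)=0$ is $w(t,\cdot)=\int_t^T P_0(t,s)\,g(s,\cdot)\,ds$. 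Since $X^{0}$ in \eqref{eq:dX_0} is an \emph{additive} process, $p_0(t,x;s,y)$ depends on $(x,y)$ only through $y-x$, so $P_0(t,s)$ is a convolution operator: it commutes with $\d_x$ and with every shift $e^{z\d_x}$, it satisfies Chapman--Kolmogorov $P_0(t,s_1)P_0(s_1,s_2)=P_0(t,s_2)$, and in particular $P_0(t,s)\,u_0(s,\cdot)=P_0(t,s)P_0(s,T)h=u_0(t,\cdot)$.

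The key step is the commutation identity between $P_0$ and multiplication by the spatial variable,
\[ P_0(t,s)\circ(\text{mult.\ by }x)\;=\;\Mc^x(t,s)\circ P_0(t,s), \]
with $\Mc^x(t,s)$ as in \eqref{eq:M}. I would derive it from the Fourier representation \eqref{eq:u0.fourier}: writing $\bigl(P_0(t,s)g\bigr)(x)=\tfrac1{2\pi}\int_\Rb e^{i\xi x}e^{\Phiv_0(t,s,\xi)}\widehat{g}(-\xi)\,d\xi$, using $x\,e^{i\xi x}=-i\,\d_\xi e^{i\xi x}$ and integrating by parts in $\xi$, one picks up $\d_\xi\Phiv_0(t,s,\xi)=i\,\mv(t,s)-\Cv(t,s)\,\xi+i\!\int_t^s\!\!\int_\Rb z(e^{iz\xi}-1)\nu_0(r,dz)\,dr$; after Fourier inversion its three pieces become exactly the drift, Gaussian and jump parts of $\Mc^x(t,s)$ applied to $P_0(t,s)g$. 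Iterating gives $P_0(t,s)\circ(\text{mult.\ by }x^m)=\bigl(\Mc^x(t,s)\bigr)^m\circ P_0(t,s)$; combining this with the fact that $P_0(t,s)$ commutes with $\d_x$ and with shifts, and recalling that in \eqref{eq:A_n} every summand of $\Ac_k(s,\cdot)$ is a polynomial in $x$ (on the left) composed with an operator built from $\d_x$ and shift operators, one obtains for $1\le k\le N$
\[ P_0(t,s)\circ\Ac_k(s,\cdot)\;=\;\Ac_k\bigl(s,\Mc^x(t,s)\bigr)\circ P_0(t,s)\;=\;\Gc^x_k(t,s)\circ P_0(t,s), \]
which is exactly \eqref{def_Gn} (see also Remark \ref{rem:and1}).

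It then remains to unfold the recursion. Applying Duhamel to \eqref{eq:v.n.pide} once gives $u_n(t,\cdot)=\sum_{k=1}^n\int_t^T P_0(t,s_1)\,\Ac_{k}(s_1)\,u_{n-k}(s_1,\cdot)\,ds_1$; substituting the same formula repeatedly for $u_{n-k}$, $u_{n-k-k'},\dots$, which terminates after at most $n$ steps because each index drop is $\ge1$, one gets
\[ u_n(t,\cdot)=\sum_{h=1}^{n}\ \sum_{i\in I_{n,h}}\ \int_{t\le s_1\le\cdots\le s_h\le T} P_0(t,s_1)\Ac_{i_1}(s_1)P_0(s_1,s_2)\Ac_{i_2}(s_2)\cdots P_0(s_{h-1},s_h)\Ac_{i_h}(s_h)\,u_0(s_h,\cdot)\ ds_1\cdots ds_h , \]
with $I_{n,h}$ as in \eqref{eq:def_Ln_bis}. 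Pushing every propagator to the right, the leftmost block $P_0(t,s_1)\Ac_{i_1}(s_1)$ becomes $\Gc^x_{i_1}(t,s_1)\,P_0(t,s_1)$; then $P_0(t,s_1)P_0(s_1,s_2)=P_0(t,s_2)$ reaches $\Ac_{i_2}(s_2)$ and becomes $\Gc^x_{i_2}(t,s_2)\,P_0(t,s_2)$, and so on, so that every $\Ac_{i_j}(s_j)$ turns into $\Gc^x_{i_j}(t,s_j)$ — all with first time argument $t$, in the same order — while at the far right one is left with $P_0(t,s_h)\,u_0(s_h,\cdot)=u_0(t,\cdot)$. This is precisely $\Lc^x_n(t,T)\,u_0(t,x)$ in \eqref{eq:def_Ln}; the terminal condition $u_n(T,\cdot)=0$ is immediate since every term carries $\int_T^T ds_1=0$, and the combinatorial match between the unfolding and $\sum_h\sum_{i\in I_{n,h}}$ is routine.

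The main obstacle is analytic, not algebraic: one must justify the Fourier integration by parts in the commutation identity (decay of the boundary terms as $\xi\to\pm\infty$ and Fubini for the $\nu_0$-integral), and, more generally, make sense of the unbounded operators $\Ac_k(s)$, $\Mc^x(t,s)$ and their compositions $\Gc^x_k(t,s)$ acting on $u_0(t,\cdot)$ and on each intermediate function. I would do this by working in a function class — smooth functions whose derivatives have at most polynomial, or suitably controlled exponential, growth — that is stable under $P_0(t,s)$ (using the Gaussian-type decay of $p_0$ from $a_0>0$ and the exponential integrability \eqref{e21aa} of $\nu_0$) and under each of these operators, and on which all the interchanges of integrals used above are legitimate; one then checks that, under the standing assumptions on $h$ (notably $h,\widehat{h}\in L^1$ with mild growth), $u_0(t,\cdot)$ and every function produced by the unfolding stays in the class.
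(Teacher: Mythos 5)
Your proposal is correct and takes essentially the same route as the paper: your commutation identity $P_0(t,s)\circ(\text{multiplication by }x)=\Mc^x(t,s)\circ P_0(t,s)$ is exactly the symmetry property \eqref{prop:polyn1} in operator form, obtained by the same Fourier/integration-by-parts computation on $\Phiv_0$, and your Duhamel--Chapman--Kolmogorov unfolding is precisely the iteration the paper delegates to \cite[Section 5]{LPP4}. The only difference is that you write out explicitly (including the analytic caveats) what the paper cites from that earlier work.
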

\begin{remark}\label{rem:and1}
The operator in \eqref{def_Gn} can be written more explicitly as
\begin{align}
 \Ac_n\left(s,\Mc^x(t,s)\right) f(x) =&  a_n(t,\Mc^x(t,s)) ( \d_{xx} f(x) - \d_{x}f(x))+\gam_n(t,\Mc^x(t,s))
 \left(\d_{x}f(x)-f(x)\right)\\
 &- \sum_{m=1}^{M_{n}}\big(\Mc^x(t,s)\big)^{m}  \int_\Rb (e^z-1-z) \, \nu_{n,m}(t,dz)\,  \d_{x}f(x) \\
 & +  \sum_{m=1}^{M_{n}}\big(\Mc^x(t,s)\big)^{m} \int_\Rb \big(f(x+z) - f(x) - z \d_{x}f(x)\big)\nu_{n,m}(t,dz). \label{eq:A_n1}
\end{align}
\end{remark}
\begin{remark}\label{rem:LPP4}
Theorem \ref{th:un_general_repres} extends the novel representation given in \cite[Theorem
3.8]{LPP4}, which is given for the purely diffusion case.  When no jump component is present the
operator $\Mc^x$ in \eqref{eq:M} reduces to
\begin{align}
\Mc^x(t,s)= x+\mv(t,s)+\Cv(t,s)\, \partial_x.
\end{align}
\end{remark}
\begin{remark}
The expression for $u_n$ given in \eqref{eq:un_tilde} can be used in two ways.  First, if the fundamental solution $p_0(t,x;T,y)$ is explicitly available (this is always the case in the purely diffusive setting), then to obtain $u_n$ one can apply the operator $\Lc_n^x(t,T)$ directly to $p_0(t,x;T,y)$ in \eqref{u0expl}.  Second, if $p_0(t,x;T,y)$ is not available explicitly, then one can obtain a Fourier representation for $u_n$ by applying the operator $\Lc_n^x(t,T)$ directly to $\hat{p}_0(t,x;T,\xi)$ in \eqref{eq:u0.fourier}. The details of the latter approach will be shown in Subsection \ref{subsec:fourier}.
\end{remark}
\begin{proof}[Proof of Theorem \ref{th:un_general_repres}]
Let $p_{0}$ be formally defined by \eqref{eq:def_Fourier}. The proof of Theorem
\ref{th:un_general_repres} relies on the following symmetry properties:
%\begin{lemma}\label{lem:simmetry}
for any $t<s$ and $x,y\in\mathbb{R}$, we have
\begin{align}
%\int_{\mathbb{R}} p_{0}(t,x+z;s,y) \nu_n(s,dz)
%    &=\int_{\mathbb{R}} p_{0}(t,x;s,y-z) \nu_n(s,dz),
p_{0}(t,x;s,y)
    &= p_{0}(t,0;s,y-x),
\label{prop:convol}
\\
\partial_x p_{0}(t,x;s,y)
    &= -\partial_{y} p_{0}(t,x;s,y),\label{prop:deriv}
\intertext{and} \label{prop:polyn1}
 y\, p_{0}(t,x;s,y) &=  \Mc^{x}(t,s)p_{0}(t,x;s,y),\\ \label{prop:polyn2}
 x\, p_{0}(t,x;s,y) &=\bar{\Mc}^{y}(t,s)p_{0}(t,x;s,y),
\end{align}
with $\bar{\Mc}^y(t,s)$ acting as
\begin{align} \label{eq:M_bar}
\bar{\Mc}^y(t,s)f(y)=& \left(y-\mv(t,s)+\Cv(t,s)\partial_y\right) f(y)+\int_t^s \int_{\mathbb{R}}
\left(f(y+z)-f(y)\right)z \nu_0(r,-dz)dr.
\end{align}
%\end{lemma}
%\begin{proof}
Identities \eqref{prop:convol}-\eqref{prop:deriv} follow directly from the spatial-homogeneity of the
coefficients of $\Ac_{0}$.
%fundamental solution $p_0=p_0(t,x;s,y)$. Indeed, in light of Remark \ref{rem:A_0} we have
%\begin{align}\label{eq:space_homogeneity}
%p_0(t,x;s,y)=p(t,0;s,y-x)=p_0(t,x-y;s,0),\quad t<s,\ x,y\in\mathbb{R}.
%\end{align}
In order to prove \eqref{prop:polyn1}-\eqref{prop:polyn2}, we shall use some standard properties of the
Fourier transform. For any function $f$ in the Schwartz space we have
\begin{align}\label{prop:fourier1}
i \xi \Fc_{x}(f)=\Fc_{x}(-\partial_{x}f) , \qquad
 \Fc_{x}(x f)=-i\partial_{\xi}\Fc_{x}f,% \qquad \Fc_x \left(\int_{\mathbb{R}} f(x+z) \mf(dz)\right) (\xi)=\int_{\mathbb{R}}e^{i z \xi}\, \mf(-dz)\, \Fc_x f (\xi),
\end{align}
and for any L\'evy measure $\mf$ such that $\int_{|x|>1} |x| \, \mf(dx)<\infty$, we have
\begin{align}\label{prop:fourier2}
\Fc_x \left(\int_{\mathbb{R}}\left( f(x-z) -f(x) \right) z\, \mf(dz)\right)
(\xi)=\int_{\mathbb{R}}(e^{i z \xi}-1)z\, \mf(dz)\, \Fc_x f (\xi).
\end{align}
Thus, by \eqref{prop:fourier1} we obtain
\begin{align}
 &\Fc_{y}(y\, p_{0}(t,x;s,y))(\xi)\\
        &=-i\partial_{\xi}\Fc_{y}(p_{0}(t,x;s,y))(\xi) \\
    &= \left(x+\mv(t,s)+ \Cv(t,s)i \xi -i \partial_{\xi}\Psiv(t,s,\xi)\right)
    \Fc_{y}p_{0}(t,x;s,y)(\xi)&
        &\text{(by \eqref{eq:charact_0}-\eqref{eq:Phi_0})} \\
        &=  \left(x+\mv(t,s)+ \Cv(t,s)i \xi  + \int_t^s \int_\Rb  (e^{i z\xi}-1)\,z\, \nu_0(r,dz)dr\right) \Fc_{y}p_{0}(t,x;s,y) (\xi) &
        &\text{(by \eqref{eq:op_psi})}\\
        &=  \Fc_{y}\left(\left(x+\mv(t,s)- \Cv(t,s)\partial_{y}   \right)p_{0}(t,x;s,y)\right)(\xi)
  \\
        &\quad + \Fc_{y}\left( \int_t^s \int_\Rb \left( p_{0}(t,x;s,y-z)-p_{0}(t,x;s,y)  \right) \, \nu_0(r,dz)dr  \right)(\xi) &
        &\text{(by \eqref{prop:fourier1} and \eqref{prop:fourier2})}\\
        &=  \Fc_{y}\left(\Mc^{x}(t,s)p_{0}(t,x;s,y)\right)(\xi). &
        &\text{(by \eqref{prop:deriv}, \eqref{prop:convol} and \eqref{eq:M})}
\end{align}
The identity \eqref{prop:polyn2} arises from the same arguments and because, by the symmetry
property \eqref{prop:convol}, we have
\begin{align}\label{e22dd}
\Fc_x p_{0}(t,\cdot\,;T,y)(\xi)= \exp\left(i \xi(y -\mv(t,T)) -\frac{1}{2}\Cv(t,T)\xi^2+\Psiv(t,T,-\xi)-\int_t^T \g_0(s) d s\right).
\end{align}
As indicated in Remark \ref{rem:LPP4}, Theorem \ref{th:un_general_repres} reduces to
\cite[Thorem 3.8]{LPP4} in case of a null L\'evy measure $\nu(t,x,dz)\equiv 0$. The proof of the
\cite[Thorem 3.8]{LPP4} is based on a systematic use of symmetry properties of Gaussian densities
combined with some classical relations such as the Chapman-Kolmogorov equation and the Duhamel's
principle.  Using the same classical relations, the proof of Theorem \ref{th:un_general_repres} follows by replacing the Gaussian symmetry properties in \cite[Lemma 5.4]{LPP4} with the symmetries properties
\eqref{prop:convol}-\eqref{prop:deriv}-\eqref{prop:polyn1}-\eqref{prop:polyn2} outlined above for additive processes.
We refer to \cite[Section 5]{LPP4} for the details.
\end{proof}

\subsubsection{Fourier representation for $u_n$}
\label{subsec:fourier}
%The representation for the functions $u_n$ given in \eqref{eq:v.expand}-\eqref{eq:un} is useful
%for practical purposes only when the L\'evy-type density $p_0(t,x;T,y)$ and $u_{0}$ in
%\eqref{u0expl} are explicitly computable. In general, the approximation for the prices can be
%obtained combining standard Fourier inversion techniques (cf. \cite{lewis2001simple} and
%\cite{lipton2002}) with the $N$-th order expansion of the characteristic function
  %$$\sum_{n=0}^{N}\hat{p}_n(t,x;T,\xi)$$
%with $\hat{p}_n(t,x;T,\xi)$ given in \eqref{eq:charact_0}-\eqref{eq:un_tilde}.
Using \eqref{eq:charact_0}, \eqref{eq:u0.fourier} and \eqref{eq:un_tilde}, we obtain
\begin{align}
u_n(t,x) = \Lc^x_n(t,T) u_0(t,x)
    &=  \frac{1}{2\pi} \int_{\Rb} e^{\Phiv_0(t,T,\xi)}\big(\Lc^x_n(t,T)e^{i x \xi} \big)
            \hat{h}(-\xi) d \xi.
\end{align}
The term in parenthesis $\Lc^x_n(t,T) e^{i x \xi}$ can be computed explicitly.  However,
$\Lc^x_n(t,T)$ is, in general, an \emph{integro-differential} operator (when $X$ is a diffusion
$\Lc^x_n(t,T)$ is simply a differential operator).  Thus, for models with jumps, computing
$\Lc^x_n(t,T)e^{i x \xi}$ is a challenge.  Remarkably, we will show that there exists a
%first order
\emph{differential} operator $\hat{\Lc}^{\xi}_n(t,T)$ such that
%\begin{align}
%\Lc_n(t,T)e^{i \<\xi,x\>}
    %&= \hat{\Lc}_n(t,T) e^{i \<\xi,x\>}
%\end{align}
%where $\hat{\Lc}_n(t,T)$ is a \emph{differential operator}, which acts on $\xi$ rather than $x$.
%For computational purposes, it is sometimes convenient to express prices as an inverse Fourier
%transform. The aim of this section is to derive new pricing formulas whose implementation is more
%direct than those given in Proposition \ref{thm:dyson}. This will be done by switching from the
%state variables $x$ to the variables $\xi$ in the Fourier space. Since the operator $\Lc_{n}(t,T)$
%in \eqref{eq:Ln} acts in the variables $x$, for clarity, we shall denote it also by
%$\Lc_{n}^{x}(t,T)$ below.
%
 %%The following theorem provides and expression for $\uh_n$, the Fourier transform of $u_n$ in the price expansion \eqref{eq:ubar.N}.
%Next we define $\hat{\Lc}^{\xi}_{n}(t,T)$ as the operator acting in the variable $\xi$, such that
\begin{align}\label{e33}
  \Lc^{x}_{n}(t,T)e^{i x \xi}=\hat{\Lc}^{\xi}_{n}(t,T)e^{i x \xi} ,
\end{align}
where, for clarity, we have explicitly indicated using the superscript $\xi$ that %$\Lc^{x}_{n}(t,T)$ acts on $x$ and
$\hat{\Lc}^{\xi}_{n}(t,T)$ acts on $\xi$.  With a slight abuse of terminology, we call
$\hat{\Lc}^{\xi}_{n}$ the \emph{symbol}
\footnote{
\label{footnote}
The operator $\hat{\Lc}^{\xi}_{n}$ is not a
function as in the classical theory of pseudo-differential calculus.  However $e^{-i
\<\xi,x\>} \hat{\Lc}^{\xi}_{n} e^{i x \xi}$ is the symbol of $\Lc_n^x(t,T)$.
For the interested reader, any book on pseudo-differential operators is an appropriate resource to learn about symbols.  See, for example \cite{jacob2001pseudo} or \cite{hoh1998pseudo}.
}
of the
operator $\Lc_{n}^x(t,T)$ in \eqref{eq:def_Ln}.
%As we shall prove below, $\hat{\Lc}^{\xi}_{n}$ is a \emph{differential operator} and not a function as in the classical theory of pseudo-differential calculus: this is due to the fact that $\Lc^{x}_{n}(t,T)$ is an operator with variable (polynomial) coefficients.

Let us consider the operator $\Mc^{x}(t,s)$ in \eqref{eq:M}; its symbol $\widehat{\Mc}^{\xi}(t,s)$ is defined analogously to \eqref{e33}, i.e.
\begin{align}\label{e35}
  \Mc^{x}(t,s)e^{i x \xi}=\widehat{\Mc}^{\xi}(t,s)e^{i x \xi}.
\end{align}
Explicitly, we have
\begin{align}
 \widehat{\Mc}^{\xi}(t,s)= F(\xi,t,s)-i \partial_{\xi_{i}},%\qquad i=1,\dots,d,
\end{align}
where the function $F$ is defined as%\ora{What is $\widehat{\Mc}_{i}^{\xi}(t,t_k)$?}
\begin{align}
F(\xi,t,s)&=-i\xi\Psiv(t,s,\xi)
            +   \mv(t,s)d s + i \xi \Cv(t,s)\\
        &=   \int_{t}^{s}  \int_{\Rb} z \( e^{i z \xi} - 1 \)
            \nu_0(\tau,d z)d \tau
            +   \mv(t,s)d s + i \xi \Cv(t,s).
\end{align}
We note that, while $\Mc^{x}$ is a first order \emph{integro-differential} operator, its symbol
$\widehat{\Mc}^{\xi}$ is a first order \emph{differential} operator. For this reason, it is more
convenient to use the symbol $\widehat{\Mc}^{\xi}$ instead of the operator $\Mc^{x}$. %Note also that
%\begin{align}
%\Mc^{x}_{i}(t,t_k)\Mc^{x}_{j}(t,t_k)e^{i x \xi}=\Mc^{x}_{i}(t,t_k)\widehat{\Mc}_{j}^{\xi}(t,t_k)e^{i x \xi}
%  =\widehat{\Mc}_{j}^{\xi}(t,t_k)\Mc^{x}_{i}(t,t_k)e^{i x \xi}=\widehat{\Mc}_{j}^{\xi}(t,t_k)\widehat{\Mc}_{i}^{\xi}(t,t_k)e^{i x \xi}.
%\end{align}
%Since $\Mc^{x}_{i}$ and $\Mc^{x}_{j}$ commute when applied to a function that admits a Fourier representation, then $\widehat{\Mc}_{j}^{\xi}$ and $\widehat{\Mc}_{i}^{\xi}$ also commute when applied to such functions. In particular, the operator $\left(\widehat{\Mc}^{\xi}(t,t_{k})\right)^{\beta}$, for $\beta\in\mathbb{N}^{d}_{0}$, is well defined and we have
%\begin{align}\label{e40}
%\left(\widehat{\Mc}^{\xi}(t,t_{k})\right)^{\beta}e^{i x \xi}=
%\left(\Mc(t,t_k)\right)^{\beta}e^{i x \xi}.
%\end{align}
From identity \eqref{e35} we obtain directly the expression of the symbol of $\Gc_{j}$ in \eqref{def_Gn}.  Indeed, recalling the expression \eqref{e34} of $\nu_{j}$ we have
\begin{align}
\hat{\Gc}^{\xi}_j(t,s)&=
    %&= \Ac_j\left(t_k,\widehat{\Mc}^{\xi}(t,t_k)\right) \red{\quad \textrm{this inequality is not correct, is it? The terms do not commute..}}\\
        - (\xi^2 + i\xi)\,a_n\big(s,\widehat{\Mc}^{\xi}(t,s)\big)+\left(i\xi-1\right)\,\gam_n\big(s,\widehat{\Mc}^{\xi}(t,s)\big)\\
           &\quad +\sum_{m=1}^{M_{n}}\left( -i \xi \int_\Rb (e^z-1-z) \, \nu_{n,m}(s,dz) + \int_{\mathbb{R}}  \left( e^{i z\xi} - 1 - i z\xi \right)\nu_{n,m}(s,d z)\right)\left(\widehat{\Mc}^{\xi}(t,s)\right)^{m}%\\
%        &= \sum_{|\beta|\le M_{j}} \int_{\Rb}\left(e^{ i\<z, \xi \>} - 1 - i\<z, \xi \>\right)\nu_{j,\beta}\left(s,d z%\right)\, \left(\widehat{\Mc}^{\xi}(t,s)\right)^{\beta}
%            + \sum_{|\alpha |\leq 2} \left(i\xi\right)^{\alpha}
%            a_{\alpha,j}\left(s,\widehat{\Mc}^{\xi}(t,s)\right)
.
    \label{eq:Gh.def}
\end{align}
Thus we have proved the following lemma
 %that gives the explicit expression of
%$\hat{\Lc}_{n}(t,T)$.
\begin{lemma}
\label{lemmand}
We have
\begin{align}
 \hat{\Lc}^{\xi}_{n}(t,T)
    &=  \sum_{k=1}^n
            \int_{t}^T d t_1 \int_{t_1}^T d t_2 \cdots \int_{t_{k-1}}^T d t_k
            \sum_{i \in I_{n,k}}
            \hat{\Gc}^{\xi}_{i_1}(t,t_1)
            \hat{\Gc}^{\xi}_{i_{2}}(t,t_{2})
            \cdots \hat{\Gc}^{\xi}_{i_k}(t,t_k)  , \label{eq:Lnh}
\end{align}
with $I_{n,k}$ as defined in \eqref{eq:def_Ln_bis}.
\end{lemma}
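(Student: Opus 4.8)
The plan is to read this lemma as the statement that the ``symbol map'' $\Lc^x\mapsto\hat\Lc^\xi$ implicitly defined by \eqref{e33} carries the building blocks of $\Lc^x_n(t,T)$ in \eqref{eq:def_Ln} onto the building blocks of $\hat\Lc^\xi_n(t,T)$ in \eqref{eq:Lnh}.  Accordingly I would argue in three steps: (i) prove the single-operator identity $\Gc^x_j(t,s)\,e^{i x \xi}=\hat\Gc^\xi_j(t,s)\,e^{i x \xi}$ with $\hat\Gc^\xi_j$ as in \eqref{eq:Gh.def}; (ii) upgrade it to the iterated products appearing in \eqref{eq:def_Ln}; (iii) pass the finite sums and the nested time integrals through and rearrange into the form \eqref{eq:Lnh}.

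For step (i) I would first record the ``power rule'' $\bigl(\Mc^x(t,s)\bigr)^m e^{i x \xi}=\bigl(\widehat{\Mc}^\xi(t,s)\bigr)^m e^{i x \xi}$ for every $m\in\mathbb{N}_0$, by induction on $m$ with base case \eqref{e35}; the inductive step uses only that $\Mc^x(t,s)$ acts exclusively on the $x$-variable and $\widehat{\Mc}^\xi(t,s)$ exclusively on the $\xi$-variable, so that the two commute as operators on functions of $(x,\xi)$ and $(\Mc^x)^{m+1}e^{i x \xi}=\Mc^x\bigl[(\widehat{\Mc}^\xi)^m e^{i x \xi}\bigr]=(\widehat{\Mc}^\xi)^m\bigl[\Mc^x e^{i x \xi}\bigr]=(\widehat{\Mc}^\xi)^{m+1}e^{i x \xi}$.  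Next I would apply $\Gc^x_j(t,s)=\Ac_j\bigl(s,\Mc^x(t,s)\bigr)$ to $e^{i x \xi}$ in the explicit form \eqref{eq:A_n1}, using $\partial_x e^{i x \xi}=i\xi\,e^{i x \xi}$, $\partial_{xx}e^{i x \xi}=-\xi^2 e^{i x \xi}$, and the fact that the jump operator $f\mapsto f(\cdot+z)-f-z\,\partial_x f$ sends $e^{i x \xi}$ to $\bigl(e^{i z\xi}-1-i z\xi\bigr)e^{i x \xi}$.  Once this differential/integral part has acted, every summand has the shape $g(\xi)\,e^{i x \xi}$ with $g$ a function of $\xi$ alone; since $(\Mc^x)^m$ commutes with multiplication by $g(\xi)$, the power rule converts it into $g(\xi)\bigl(\widehat{\Mc}^\xi\bigr)^m e^{i x \xi}$, which is exactly the corresponding summand of $\hat\Gc^\xi_j(t,s)\,e^{i x \xi}$ read off from \eqref{eq:Gh.def}.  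Summing the finitely many terms gives step (i).

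For step (ii), since each $\Gc^x_{i_\ell}(t,s_\ell)$ is an operator in $x$ only and each $\hat\Gc^\xi_{i_\ell}(t,s_\ell)$ an operator in $\xi$ only, the two families commute with one another on functions of $(x,\xi)$; applying the product $\Gc^x_{i_1}(t,s_1)\cdots\Gc^x_{i_h}(t,s_h)$ to $e^{i x \xi}$ one factor at a time, from the right, and invoking step (i) at each stage yields $\hat\Gc^\xi_{i_h}(t,s_h)\cdots\hat\Gc^\xi_{i_1}(t,s_1)\,e^{i x \xi}$, i.e.\ the product of the symbols.  Substituting into \eqref{eq:def_Ln}, commuting the finite sums over $h\le n$ and over $I_{n,h}$ and the iterated time integrals past the operators (licit because the coefficients are bounded in time, cf.\ Definition \ref{def:A}), and performing the elementary relabelling of nested integration variables and summation indices needed to bring the expression to the form \eqref{eq:Lnh} (here one uses the invariance of $I_{n,h}$ under permuting its components), one obtains \eqref{eq:Lnh}.

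The one point that calls for genuine care rather than routine computation is step (ii): because $\widehat{\Mc}^\xi$, hence $\hat\Gc^\xi_j$, contains the true derivative $\partial_\xi$ and is not merely a multiplier, the symbol of a composition is \emph{not} obtained by formally substituting symbols into the product, and the order of the factors must be tracked.  The clean way to do so is the one above — exploit $x^m e^{i x \xi}=(-i\,\partial_\xi)^m e^{i x \xi}$ and the commutation of operators in $x$ with operators in $\xi$ — after which only finite, combinatorial bookkeeping remains.
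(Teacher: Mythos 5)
Your steps (i) and (ii) are sound, and they are in fact more careful than the paper's own terse passage from \eqref{eq:Gh.def} to the lemma: you correctly observe that passing to symbols reverses composition, so that for each fixed multi-index and fixed times one gets $\Gc^x_{i_1}(t,s_1)\cdots\Gc^x_{i_h}(t,s_h)\,e^{ix\xi}=\hat\Gc^\xi_{i_h}(t,s_h)\cdots\hat\Gc^\xi_{i_1}(t,s_1)\,e^{ix\xi}$. The gap is in step (iii). Relabelling $j_\ell:=i_{h+1-\ell}$ and $u_\ell:=s_{h+1-\ell}$ turns your expression into the sum over $j\in I_{n,h}$ of the \emph{same-order} product $\hat\Gc^\xi_{j_1}(t,u_1)\cdots\hat\Gc^\xi_{j_h}(t,u_h)$ integrated over the \emph{reversed} simplex $\{t\le u_h\le\cdots\le u_1\le T\}$, not over $\{t\le u_1\le\cdots\le u_h\le T\}$ as in \eqref{eq:Lnh}: a change of dummy variables can flip either the order of the operator factors or the orientation of the simplex, but not both. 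Identifying the two expressions would require the operators $\hat\Gc^\xi_{j}(t,s)$ and $\hat\Gc^\xi_{j'}(t,s')$ to commute, and they do not, precisely for the reason you flag yourself: each contains $-i\partial_\xi$ through $\widehat{\Mc}^\xi(t,\cdot)$, while its multiplier coefficients and the function $F(\xi,t,\cdot)$ depend on $\xi$ and on the time argument.

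Nor is the discrepancy removed by applying everything to $e^{ix\xi}$ or by the simplex integration. Already in the purely diffusive case with constant $a_0,a_1$, $\gamma\equiv\nu\equiv0$ and $\bar x=0$, the $h=2$, $i=(1,1)$ contribution to $n=2$ gives $\bigl[\hat\Gc^\xi_1(t,s_1),\hat\Gc^\xi_1(t,s_2)\bigr]e^{ix\xi}=a_1^2a_0(s_2-s_1)\,b(\xi)\bigl(2b(\xi)-(b'(\xi))^2\bigr)e^{ix\xi}$ with $b(\xi)=-\xi^2-i\xi$, a nonzero multiplier of order $s_2-s_1$ whose integral over $\{t\le s_1\le s_2\le T\}$ is a nonzero multiple of $(T-t)^3$; one can also check numerically (e.g.\ $x=0$, $\xi=1$, $t=0$, $s_1=1$, $s_2=2$, $a_0=a_1=1$) that the two orderings give different values while the reversed one reproduces $\Gc^x_1\Gc^x_1e^{ix\xi}$. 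So what your argument actually establishes is \eqref{eq:Lnh} with the factors composed in the order in which the $\Gc^x$'s act, i.e.\ $\hat\Gc^\xi_{i_k}(t,t_k)\cdots\hat\Gc^\xi_{i_1}(t,t_1)$ on the simplex $t\le t_1\le\cdots\le t_k\le T$ (equivalently, the displayed order with the nested integrals reversed). To obtain the display exactly as written you would have to prove that the two time-orderings coincide, which the computation above rules out in general; as it stands, the assertion that "only finite, combinatorial bookkeeping remains" is exactly where the proof fails, and this ordering issue is also the point that the paper's own one-line justification leaves untouched.
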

%%{\bf\blue [Andrea: make a Lemma here for formula \eqref{e33}-\eqref{eq:Lnh}?]}
%\ora{I think we need to be slightly more explicit here.  Please see the step-by-step computations
%I provided in the proof of Theorem \ref{thm:dyson}.  I think this is the level of detail we need
%to provide.}
\noindent The following theorem extends the Fourier pricing formula \eqref{eq:u0.fourier} to higher order
approximations.
%  {\blue\bf [Andrea: I would state the following theorem in terms of characteristic functions and NOT of solutions. It would be more clear and easy to apply to compute option prices via Fourier.]}
\begin{theorem}
\label{thm:fourier}
Assume that $h,\hat{h} \in L^1(\Rb,dy)$. Then, for any $n\geq 1$ we have %$\uh_n(t,\xi)$, the Fourier transform of $u_n(t,x)$, is given by
\begin{align}
  u_n(t,x)&=\frac{1}{2 \pi }  \int_{\Rb} \hat{p}_n(t,x,T,\xi)\hat{h}(-\xi)\,d \xi, \label{eq:un_fourier}
   %&=\frac{1}{(2 \pi )^d}  \int_{\Rb^d}
            %e^{\Phi_0(t,T,\xi)}\Lc_n(t,T)e^{i \< \xi, x \>} \phih(-\xi)\,d \xi \\
\end{align}
where $\hat{p}_n(t,x,T,\xi)$ is the $n$th order term of the approximation of the characteristic
function of $X$. Explicitly, we have
\begin{align}\label{e22cc}
 \hat{p}_n(t,x,T,\xi):=\hat{p}_0(t,x,T,\xi)
    \(e^{-i x \xi} \hat{\Lc}^{\xi}_{n}(t,T)e^{i x \xi} \)
\end{align}
where $\hat{p}_0(t,x,T,\xi)$ is the $0$th order approximation in \eqref{eq:charact_0} and
$\hat{\Lc}^{\xi}_{n}(t,T)$ is the differential operator defined in \eqref{eq:Lnh}.
%
%$\chi_n(t,T,x,\xi):=\frac{}{e^{i \< \xi, x \>}}$ being the symbol of the operator $\Lc_n(t,T)$.
\end{theorem}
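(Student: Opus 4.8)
The plan is to assemble results already in hand: the operator representation $u_n(t,x)=\Lc^x_n(t,T)\,u_0(t,x)$ from Theorem \ref{th:un_general_repres}, the Fourier representation \eqref{eq:u0.fourier} of $u_0$, and the intertwining identity \eqref{e33} (which was already obtained via the computation of the symbols $\hat{\Gc}^{\xi}_j$ and Lemma \ref{lemmand}). Granting the interchange of $\Lc^x_n(t,T)$ with the $\xi$–integral, discussed below, the proof reduces to the short computation
\begin{align}
 u_n(t,x)
 &= \Lc^x_n(t,T)\,u_0(t,x)
  = \Lc^x_n(t,T)\left(\frac{1}{2\pi}\int_\Rb \hat{p}_0(t,x,T,\xi)\,\hat{h}(-\xi)\,d\xi\right)\\
 &= \frac{1}{2\pi}\int_\Rb e^{\Phiv_0(t,T,\xi)}\big(\Lc^x_n(t,T)\,e^{i x\xi}\big)\,\hat{h}(-\xi)\,d\xi
  = \frac{1}{2\pi}\int_\Rb e^{\Phiv_0(t,T,\xi)}\big(\hat{\Lc}^{\xi}_{n}(t,T)\,e^{i x\xi}\big)\,\hat{h}(-\xi)\,d\xi .
\end{align}
Here the third equality uses the interchange together with the factorization $\hat{p}_0(t,x,T,\xi)=e^{i x\xi}e^{\Phiv_0(t,T,\xi)}$ from \eqref{eq:charact_0}--\eqref{eq:Phi_0}, noting that $e^{\Phiv_0(t,T,\xi)}$ and $\hat{h}(-\xi)$ do not depend on $x$, while the fourth equality is \eqref{e33}. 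Since $e^{\Phiv_0(t,T,\xi)}\,\hat{\Lc}^{\xi}_{n}(t,T)e^{i x\xi}=\hat{p}_0(t,x,T,\xi)\big(e^{-i x\xi}\hat{\Lc}^{\xi}_{n}(t,T)e^{i x\xi}\big)=\hat{p}_n(t,x,T,\xi)$ by the definition \eqref{e22cc}, this is precisely \eqref{eq:un_fourier}, with the explicit form of $\hat{\Lc}^{\xi}_{n}(t,T)$ supplied by Lemma \ref{lemmand}.

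The one substantive point — and the step I expect to demand genuine care — is the exchange of the operator $\Lc^x_n(t,T)$ with the integral sign. By \eqref{eq:def_Ln} and \eqref{def_Gn}, $\Lc^x_n(t,T)$ is a finite iterated time-integral of compositions of the operators $\Gc^x_j(t,s)=\Ac_j\big(s,\Mc^x(t,s)\big)$, each of which is a polynomial (of degree at most $\max\{2,M_j\}$) in the first-order integro-differential operator $\Mc^x(t,s)$ of \eqref{eq:M}, with coefficients that are bounded in $s$ and built from the polynomials $a_j,\gamma_j$ and the finite moments of the measures $\nu_{j,m}$. Hence $\Lc^x_n(t,T)$ is an integro-differential operator of finite order with polynomial-in-$x$ coefficients, and applying it under the integral reduces to (i) differentiating under the integral sign for the $\d_x$–terms, which replace $e^{i x\xi}$ by a power of $\xi$ times $e^{i x\xi}$, and (ii) Fubini for the jump terms $\int_t^s\!\int_\Rb\big(f(\cdot+z)-f(\cdot)\big)z\,\nu_0(r,dz)\,dr$, which replace $e^{i x\xi}$ by $\big(\int_t^s\!\int_\Rb(e^{i z\xi}-1)z\,\nu_0(r,dz)\,dr\big)e^{i x\xi}$. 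Both steps are legitimate once one exhibits an $L^1_\xi$ dominating function, locally uniformly in $x$.

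To obtain that domination I would argue as follows. Using $\operatorname{Re}\Psiv(t,T,\xi)\le 0$ (which holds since $\nu_0\ge 0$) and that $\mv,\gamma_0$ are bounded real coefficients, one gets $|e^{\Phiv_0(t,T,\xi)}|\le C(t,T)\,e^{-\frac12\Cv(t,T)\xi^2}$, which decays faster than any polynomial in $\xi$ when $a_0>0$ (the assumption, in force here, under which $p_0$ is a genuine fundamental solution); on the other hand the factors produced in (i)--(ii) grow at most polynomially in $\xi$ — the elementary estimate $|e^{i z\xi}-1|\le\min\{2,|z\xi|\}$ together with the moment condition \eqref{eq:conditions} on $\nu_0$ gives at most linear growth of the jump factor, and \eqref{e21aa} in fact gives analyticity of these factors in a strip around $\Rb$. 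Multiplying, each integrand is bounded in absolute value by a constant (depending polynomially on $x$) times $|\hat{h}(-\xi)|\in L^1$, which validates the interchange and completes the proof. In short, the entire content of the theorem is this integrability bookkeeping needed to push an integro-differential operator through an oscillatory integral; once it is in place, the Fourier representation of $u_n$ follows from the formal rearrangement displayed above, together with \eqref{e33} and Lemma \ref{lemmand}.
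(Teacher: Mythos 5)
Your proposal is correct and follows essentially the same route as the paper: represent $u_n=\Lc^x_n(t,T)u_0$, pass $\Lc^x_n$ under the Fourier integral \eqref{eq:u0.fourier}, and invoke \eqref{e33} together with the definition \eqref{e22cc}. The only difference is that the paper simply asserts the commutation of $\Lc^x_n$ with the Fourier pricing operator, whereas you supply the dominated-convergence/Fubini bookkeeping (Gaussian decay of $e^{\Phiv_0}$ for $a_0>0$, polynomial growth of the symbol factors, boundedness of $\hat h$) that justifies it — a harmless and indeed welcome addition.
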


\begin{proof}
We first note that, since the approximating operator $\Lc^{x}_n$ acts in the $x$ variables, then
it commutes\footnote{This was one of the main points of the {\it adjoint expansion method}
proposed by \cite{RigaPagliaraniPascucci}.} with the Fourier pricing operator \eqref{eq:u0.fourier}. Thus, by
\eqref{eq:un_tilde} combined with \eqref{eq:u0.fourier}, we get
\begin{align}
u_n(t,x)=\Lc^{x}_n(t,T)u_0(t,x)&=\frac{1}{2 \pi }  \int_{\Rb}\Lc^{x}_n(t,T)
            e^{i x \xi  + \Phiv_0(t,T,\xi)} \hat{h}(-\xi)\, d \xi\\
            &=\frac{1}{2 \pi }  \int_{\Rb}
            \hat{p}_0(t,x,T,\xi)
                        \(e^{-i x \xi} \Lc^{x}_{n}(t,T)e^{i x \xi} \)
                        \hat{h}(-\xi)\, d
            \xi,
%            \intertext{($\Lc_n(t,T)$ only acts on the $x$ variable)}
%         & =\frac{1}{(2 \pi )^d}  \int_{\Rb^d}
%            e^{ \Phi_0(t,T,\xi)} \phih(-\xi)\Lc_n(t,T)e^{i \< \xi, x \> }\, d \xi  .
\end{align}
and the thesis follows from \eqref{e33}. %since $\Lc_n(t,T)$ only acts on the $x$-variables, thus commuting with the
%Fourier transform operator.
\end{proof}
\begin{remark}
%Note that the symbol $\chi_n$ in the Fourier representation \eqref{eq:un_fourier} for the $u_n$ is
%actually fully explicit.
Computing the term in parenthesis above $\(e^{-i x \xi} \hat{\Lc}^{\xi}_{n}(t,T)e^{i
x \xi} \)$ is a straightforward exercise since the symbol $\hat{\Lc}^{\xi}_{n}(t,T)$, given
in \eqref{eq:Lnh}, is a differential operator.
% As a matter of example, we report here its
%expression for $n=1$ under the Taylor expansion for the coefficients proposed in
%Example \ref{example:Taylor}: % (i.e. $B_1(x)=x-\bar{x}$):
%\begin{align}
%  \frac{\Lc_1(t,T)e^{i \<\xi, x \>}}{e^{i \<\xi,x \>}}  =\sum_{j=1}^d  \int_t^T \bar{\Ac}^j_1(s,\xi)\left(  x_j -\bar{x}_j +
%\int_{t}^{s}   m_{j}(r)d r +i\sum_{k=1}^d \xi_k  \int_{t}^{s} C_{j,k}(r) d r - i
%\partial_{\xi_j}\Psi_{0}(t,s,\xi)
%   \right) ds  ,
%   \end{align}
%with $m,C$ and $\Psi_{0}$ as in \eqref{eq:m}-\eqref{eq:C}-\eqref{eq:Psi0},
%%  $$\Psi_{0}(t,T,\xi)=\int_{t}^{T} \int_{\Rb_0^d}\nu_0(s,d z)d s\, \( e^{i \< \xi, z \>} - 1 - i \< \xi, z \> \),$$
%and where
%\begin{align}
%%\Ac_n(t,x)&=
%\bar{\Ac}^j_1(s,\xi)= \sum_{|\alpha |\leq 2}  \partial_{x_j} a_{\alpha}(s,\bar{x}) (i\xi)^{\alpha} +\int_{\Rb_0^d} \left( e^{i \<\xi,z \>} - 1 - i\<\xi,z \>\right) \partial_{x_j} h_{\alpha}(s,\bar{x}) \bar{\nu}(s,d z).
%\end{align}
\end{remark}
\begin{example}
Let $(\Ac_0,\Ac_1)$ the $1$-st order Taylor expansion of $\Ac$ proposed in
Example \ref{example:Taylor}. Then we have % (i.e. $B_1(x)=x-\bar{x}$):
\begin{align}
 \hat{p}_1(t,x;T,\xi)  =\hat{p}_0(t,x;T,\xi)  \int_t^T \bar{\Ac}_1(s,\xi)\big(  x-\bar{x}+\mv(t,s)+i\xi \Cv(t,s)-i \partial_{\xi}\Psiv(t,s,\xi)%+\int_t^s \int_{\mathbb{R}} \left(e^{iz\xi}-1\right)z\, \nu_0(r,dz)dr
   \big) ds  ,%\quad t<T,\ x,\xi\in\mathbb{R},
\end{align}
with
\begin{align}
\bar{\Ac}_1(s,\xi) &=  \gam_1(s) ( i\xi - 1 ) + a_1(s) ( -\xi^2 - i\xi ) - i\xi \int_\Rb (e^z-1-z)
\, \nu_1(s,dz) + \int_\Rb \left(  e^{iz\xi} -1 - iz\xi  \right)\,h_1(s,\bar{x},z)\bar{\nu}(s,dz) \label{eq:A_1_bar},
\end{align}
and
\begin{equation}
\gamma_1(s)=\partial_x \gamma(s,\bar{x}),\qquad a(s)=\partial_x a(s,\bar{x}),\qquad h_1(s,\bar{x},z)=\partial_x h(s,\bar{x},z).
\end{equation}
\end{example}
\begin{remark}
If $h(y) \notin L^1(\Rb,dy)$ but $h(y)e^{cy}  \in L^1(\Rb,dy)$ for some $c \in \Rb$ (which is the case for Call and Put payoffs), one can still use expressions \eqref{eq:u0.fourier} and \eqref{eq:un_fourier} by fixing an imaginary component of $\xi$.  This technique, known as a \emph{generalized Fourier transform}, is described in detail in \cite{lewis2000} and \cite{lipton2002}.
%Thus, the price of a European call option can be computed using standard Fourier methods
%\begin{align}
%u(t,x,y)
%    &=  \frac{1}{2\pi} \int_\Rb d \xi_r \, \phih(-\xi) \Eb_{x,y} e^{ i \xi X_{T-t}} , &
%\phih(\xi)
%    &=  \frac{-e^{k-i k \xi}}{ i \xi + \xi^2 } , &
%\xi
%    &=  \xi_r + i \xi_i , &
%\xi_i
%    &<  -1 . \label{eq:u.Heston}
%\end{align}
%Note that,
\end{remark}

%%%%%%%%%%%%%%%%%%%%%%%%%%%%%%%%%%%%%%%%%%%%%%%%%%%%%%%%%
%
%                               Error bounds
%
%%%%%%%%%%%%%%%%%%%%%%%%%%%%%%%%%%%%%%%%%%%%%%%%%%%%%%%%%

\section{Gaussian jumps: explicit densities and pointwise error bounds}\label{sec:errors}
\label{errbou} We examine here the particular case when the L\'evy measure $\nu$ coincides with a
normal distribution with state dependent parameters. Specifically, throughout this section we
will assume
\begin{align}\label{eqbound3}
\nu(t,x,dz) &=\lam(t,x)\, \mathscr{N}_{m(x),\delta^2(x)}(dz)
 :=\frac{\lam(t,x)}{\sqrt{2\pi}\delta(x)}e^{-\frac{(z-m(x))^2}{2\delta^2(x)}}dz.
\end{align}
We will show that, under such a choice, the representation formula given in Theorem
\ref{th:un_general_repres} leads to closed form (fully explicit) approximations for densities,
prices and Greeks. Furthermore we will prove some sharp pointwise error bounds for such
approximations at a given order $N\in \mathbb{N}_0$.

For sake of simplicity, we will work specifically with the Taylor series expansion of Example
\ref{example:Taylor}.
%\begin{align}
%\d_x^n \nu(t,\xb,dz)=g_n(t)\nu(t,\xb,dz),\quad n\geq 0,
%\end{align}
%where
%\begin{align}
%g_n(t)=\partial^n_x\left( \frac{\lam(t,\bar{x})}{\sqrt{2\pi}\delta(\bar{x})}e^{-\frac{(z-m(\bar{x}))^2}{2\delta^2(\bar{x})}}\right)/\left(   \frac{\lam(t,\bar{x})}{\sqrt{2\pi}\delta(\bar{x})}e^{-\frac{(z-m(\bar{x}))^2}{2\delta^2(\bar{x})}}   \right).
%\end{align}
Throughout this section we will often make use of the convolution operator
\begin{align}\label{eq:op_conv_norm}
 \mathscr{C}_{\rho,\theta}f(x):=\mathscr{C}^x_{\rho,\theta}f(x)=\int_{\mathbb{R}}f(x+z)\frac{1}{\sqrt{2\pi \theta}}e^{-\frac{(z-\rho)^2}{2\theta}}dz,\qquad \rho\in\mathbb{R},\quad \theta>0.
\end{align}
Let us first observe that %, in light of Remark \ref{rem:A_0},
the leading term $p_0(t,x;T,y)$ in the expansion of the fundamental solution $p(t,x;T,y)$ is the
transition density of a time-dependent compound Poisson process with L\'evy measure
\begin{align}
\nu_0(t,dz) &=\lam_0(t)\, \mathscr{N}_{m_0,\delta_0^2}(dz)
 :=\frac{\lam_0(t)}{\sqrt{2\pi}\delta_0}e^{-\frac{(z-m_0)^2}{2\delta_0^2}}dz,
\end{align}
and thus it can be written as %{\bf\blue There is $a$ appearing in the following formula: is it correct? See also Remark \ref{remand2}.}
\begin{align}\label{eq:p_0_gaus}
p_0(t,x;T,y)&=e^{-\int_t^T (\lam_0(s)+\g_0(s)) d s}\sum_{n=0}^{\infty}\frac{\left(\int_t^T \lam_0(s) ds \right)^n}{n!} p_{0,n}(t,x;T,y) \\ \label{eq:p_0_gaus_1}
p_{0,n}(t,x;T,y)&=
\frac{1}{\sqrt{2\pi} \left(\int_t^T a_0(s) ds+n\, \delta_0^2 \right)^{\frac{1}{2}}}\exp\left(-\frac{\left(x-y+n\, m_0-
  \int_t^T \left( \frac{a_0(s)}{2}+\lam_0(s)  e^{\frac{\delta_0^2}{2}}-\lam_0(s) \right) ds \right)^2}
  {2\left(\int_t^T a_0(s) ds+n\, \delta_0^2 \right)}\right).
\end{align}
This also implies that the leading term $u_0(t,x)$ in the price expansion is explicit, as long as
the integrals of the payoff function $h$ against the Gaussian densities $p_{0,n}(t,x;T,\cdot)$ are
computable in closed form.

Moreover we have the following representation for the operators $(\Gc_n^x)_{n\geq 1}$ appearing in
Theorem \ref{th:un_general_repres}.
\begin{proposition}\label{prop:G_n_gaus}
For any $n\geq 1$, the operator $\Gc^x_{n}$ in \eqref{def_Gn} is given by
\begin{align}
\Gc^x_{n}(t,s) =  \left(\Mc^x(t,s)-\bar{x} \right)^n \Ac_n(s),
\end{align}
where
\begin{align} \label{eq:M_gaus}
\Mc^x(t,s)f(x)&= x+\int_t^s \left( \gam_0(r) - a_0(r)
       - \lam_0(r)\left(e^{\frac{\delta_0^2 }{2}+m_0 }-1\right)\right)dr+2\int_t^T  a_0(r)dr\, \partial_x \\
      &\quad +\int_t^s \lambda_0(r)dr\, \left(m_0- \delta^2_0\partial_x\right)  \mathscr{C}^x_{m_0,\delta_0^2},
\end{align}
and
\begin{align}
 &\Ac_n(s) =  a_n(s) ( \d_{xx}  - \d_{x} )+\gam_n(s) ( \d_{x} - 1 ) -
 g_n(s,\partial_x)\left(e^{\frac{\delta_0^2 }{2}+m_0 }-1\right)\,  \d_{x} +
 g_n(s,\partial_x)(\mathscr{C}^x_{m_0,\delta_0^2} -1), \label{eq:A_n_gaus}\\
 \label{eq:taylor_coef_gaus}
 &a_n(s)=  \frac{1}{n!}\d_x^n a(s,\xb),\quad
 \gam_n(s)=  \frac{1}{n!}\d_x^n \gam(s,\xb), %\quad \nu_n(s,dz)=  \frac{1}{n!}\d_x^n \nu(s,\xb,dz),
\end{align}
with %$\mathscr{C}^x$ is defined as in \eqref{eq:op_conv_norm}, and
$(g_n(s,\cdot))_{n\geq 0}$ being polynomials whose coefficients only depend on
%$(\lambda_k(s),m_k,\delta_k)_{0\leq k \leq n}$.
\begin{align}\label{eq:taylor_coef_lev}
 \lam_i(t):= \frac{1}{i!}\d_x^i \lam(t,\xb),\quad m_i:= \frac{1}{i!}\d_x^i m(\xb),\quad
 \delta_i:= \frac{1}{i!}\d_x^i \delta(\xb),\qquad 0\leq i\leq n.
\end{align}
\end{proposition}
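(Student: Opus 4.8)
The plan is to start from the general representation $\Gc^x_n(t,s) = \Ac_n(s, \Mc^x(t,s))$ given in Theorem \ref{th:un_general_repres} (Remark \ref{rem:and1}), and to specialize it to the Gaussian L\'evy measure \eqref{eqbound3} together with the Taylor expansion of Example \ref{example:Taylor}. Under these choices, the coefficients $a_n(t,x)$, $\gam_n(t,x)$ are of the form $\frac{1}{n!}\d_x^n a(t,\xb)\,(x-\xb)^n$ and similarly for $\gam_n$; the key point to exploit is that $a_n(t,x) = (x-\xb)^n a_n(t)$ with $a_n(t)$ as in \eqref{eq:taylor_coef_gaus}, and likewise for the jump part, so that every term of $\Ac_n(s,x)$ carries a common factor $(x-\xb)^n$. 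Substituting the operator $\Mc^x(t,s)$ in place of $x$ then produces the common left factor $\left(\Mc^x(t,s) - \xb\right)^n$, which is exactly the claimed factorization. So the first step is to carefully justify this "pull out the $(x-\xb)^n$" manipulation at the operator level, i.e. to check that $\Ac_n(s,\Mc^x(t,s)) = \left(\Mc^x(t,s)-\xb\right)^n \Ac_n(s)$ with $\Ac_n(s)$ the operator obtained by stripping the polynomial prefactor; this is legitimate because in \eqref{eq:A_n1} the powers $\big(\Mc^x(t,s)\big)^m$ multiply the differential/convolution part \emph{from the left}, and for the Taylor expansion only the single power $m=n$ survives (recall $M_n = n$ with a single nonzero $\nu_{n,m}$, namely $m=n$, and the polynomial coefficients $a_n,\gam_n$ are monomials of degree $n$ centered at $\xb$).

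The second step is to derive the explicit form \eqref{eq:M_gaus} of $\Mc^x(t,s)$ by plugging $\nu_0(r,dz) = \lam_0(r)\,\mathscr{N}_{m_0,\delta_0^2}(dz)$ into the general expression \eqref{eq:M}. The drift term $\mv(t,s)$ becomes $\int_t^s\big(\gam_0(r) - a_0(r) - \lam_0(r)\int_\Rb(e^z-1-z)\mathscr{N}_{m_0,\delta_0^2}(dz)\big)dr$; evaluating the Gaussian integral $\int_\Rb e^z\,\mathscr{N}_{m_0,\delta_0^2}(dz) = e^{m_0 + \delta_0^2/2}$ and $\int_\Rb z\,\mathscr{N}_{m_0,\delta_0^2}(dz) = m_0$ gives the drift contribution $\lam_0(r)\big(e^{\delta_0^2/2 + m_0} - 1 - m_0\big)$, but note the $-m_0$ piece combines with the integral part: the jump-integral term in \eqref{eq:M} is $\int_t^s\int_\Rb\big(f(x+z)-f(x)\big)z\,\nu_0(r,dz)dr$, and using $\int_\Rb(f(x+z)-f(x))z\,\mathscr{N}_{m_0,\delta_0^2}(dz) = (m_0 - \delta_0^2\d_x)\mathscr{C}^x_{m_0,\delta_0^2}f(x) - m_0 f(x)$ (integration by parts / the identity $z\,\mathscr{N}_{m_0,\delta_0^2}(dz) = (m_0 + \delta_0^2\d_z)\mathscr{N}_{m_0,\delta_0^2}(dz)$ after a change of variables) one recovers exactly \eqref{eq:M_gaus} — the stray $-m_0 f(x)$ from the convolution term cancels against the $+m_0$ hidden in $-\int(e^z-1-z)$ bookkeeping. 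This algebraic reconciliation of the two $m_0$ contributions is the one place where sign-tracking must be done with care.

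The third step is the analogous computation for $\Ac_n(s)$: starting from \eqref{eq:A_n_gaus}'s precursor, the diffusion part $a_n(s)(\d_{xx}-\d_x) + \gam_n(s)(\d_x-1)$ is immediate, while the jump part of $\Ac_n$ from \eqref{eq:An} reads $-\int_\Rb(e^z-1-z)\nu_n(s,x,dz)\,\d_x f + \int_\Rb(f(x+z)-f(x)-z\d_x f)\nu_n(s,x,dz)$, where after stripping the $(x-\xb)^n$ prefactor the measure becomes $g_n(s,\d_x)$ acting on a Gaussian — more precisely, because $\nu_n(t,x,dz) = \frac{1}{n!}\d_x^n\big(\lam(t,x)\mathscr{N}_{m(x),\delta^2(x)}(dz)\big)\big|_{x=\xb}(x-\xb)^n$, the $x$-derivatives hit $\lam$, $m(\cdot)$ and $\delta(\cdot)$, and collecting the result as a differential operator in $\d_x$ applied to $\mathscr{N}_{m_0,\delta_0^2}$ produces precisely a polynomial $g_n(s,\d_x)$ in $\d_x$ whose coefficients depend only on the Taylor coefficients $\lam_i,m_i,\delta_i$ for $0\le i\le n$ as in \eqref{eq:taylor_coef_lev}; then $\int e^z\mathscr{N} = e^{\delta_0^2/2+m_0}$ converts the first integral into $-g_n(s,\d_x)(e^{\delta_0^2/2+m_0}-1)\d_x$ (the $+z$ and $+\d_x f$ terms are absorbed into the definition of $g_n$ via the compensator structure of $\Ac_n$ in \eqref{eq:A_n}, which already subtracts the $z\d_x$ piece with a separate drift), and the second becomes $g_n(s,\d_x)(\mathscr{C}^x_{m_0,\delta_0^2}-1)$. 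The existence of such polynomials $g_n$ — that differentiating a Gaussian in a parameter yields a Gaussian times a polynomial in $\d_x$ — follows from the Hermite-type identities $\d_{m_0}\mathscr{N}_{m_0,\delta_0^2} = \d_z\mathscr{N}_{m_0,\delta_0^2}$ and $\d_{(\delta_0^2)}\mathscr{N}_{m_0,\delta_0^2} = \tfrac12\d_z^2\mathscr{N}_{m_0,\delta_0^2}$, and inside a convolution $\d_z$ acting on the kernel may be traded for $-\d_x$. I expect the main obstacle to be neither deep nor conceptual but rather bookkeeping: keeping the compensator terms $(e^z - 1 - z)$ versus $(e^{z\d_x}-1-z\d_x)$ straight while simultaneously commuting the monomial prefactor $(x-\xb)^n$ past the operator $\Mc^x(t,s)$ and verifying that the two independent appearances of $m_0$ (drift correction vs. the linear term in the convolution) reconcile to give exactly \eqref{eq:M_gaus}. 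Everything else is a direct substitution of the Gaussian moments into Theorem \ref{th:un_general_repres}.
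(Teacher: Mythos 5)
The paper states Proposition \ref{prop:G_n_gaus} \emph{without} proof: it is meant as a direct specialization of Theorem \ref{th:un_general_repres} and Remark \ref{rem:and1} to the Taylor expansion of Example \ref{example:Taylor} with the Gaussian measure \eqref{eqbound3}, and your proposal supplies exactly the computation that is left implicit there. Your structure is the right one: since $a_n(s,x)$, $\gam_n(s,x)$ and $\nu_n(s,x,dz)$ all carry the common monomial $(x-\xb)^n$, and since $a_n(s,\Mc^x(t,s))$ etc.\ in \eqref{eq:A_n1} mean substitution of the operator $\Mc^x(t,s)$ into these polynomials (a linear operation, with $\Mc^x$ commuting with the constant $\xb$), the prefactor $\left(\Mc^x(t,s)-\xb\right)^n$ pulls out on the left; the rest is Gaussian moments and parameter-derivative (Hermite-type) identities. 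One harmless inaccuracy: your remark that ``only the power $m=n$ survives in \eqref{e34}'' is literally true only if $\xb=0$, since \eqref{e34} is written in powers of $x$; but this does not matter, because substituting $\Mc^x$ into $(x-\xb)^n$ yields $(\Mc^x-\xb)^n$ regardless of the basis used.

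The step you yourself flag as delicate is where your signs go wrong. The correct identities are $z\,\mathscr{N}_{m_0,\delta_0^2}(dz)=\left(m_0-\delta_0^2\,\d_z\right)\mathscr{N}_{m_0,\delta_0^2}(dz)$ and $\d_{m_0}\mathscr{N}_{m_0,\delta_0^2}=-\d_z\mathscr{N}_{m_0,\delta_0^2}$ (you wrote both with the opposite sign). Trading $\d_z$ on the kernel for $-\d_x$ on the convolution, the jump part of \eqref{eq:M} becomes $\int_t^s\lam_0(r)dr\,\bigl[(m_0+\delta_0^2\d_x)\mathscr{C}^x_{m_0,\delta_0^2}f(x)-m_0 f(x)\bigr]$; the symbol check $\int_\Rb z\,e^{iz\xi}\,\mathscr{N}_{m_0,\delta_0^2}(dz)=(m_0+i\delta_0^2\xi)\,e^{i m_0\xi-\delta_0^2\xi^2/2}$ confirms the plus sign. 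So your two sign slips cancel and land on the printed $(m_0-\delta_0^2\d_x)\mathscr{C}^x_{m_0,\delta_0^2}$ of \eqref{eq:M_gaus}, which on this check appears to be a sign typo (immaterial for the later absolute-value estimates, but you should not ``derive'' it from incorrect intermediate identities); the cancellation of the two $m_0$ contributions in the drift, which you do carry out, is correct. Finally, be precise in your third step: writing $\nu_n(s,x,dz)=(x-\xb)^n P_n(s,\d_z)\mathscr{N}_{m_0,\delta_0^2}(dz)$, the term $\int_\Rb f(x+z)\,\nu$ indeed produces $P_n(s,-\d_x)\mathscr{C}^x_{m_0,\delta_0^2}f(x)$, but in the pieces $\int_\Rb f(x)\,\nu$, $\int_\Rb z\,\nu\,\d_x f$ and $\int_\Rb(e^z-1-z)\,\nu\,\d_x f$ the derivatives $\d_z^k$ integrate out to scalars, so the compensator contribution is a \emph{scalar} multiple of $\d_x$, not $g_n(s,\d_x)\bigl(e^{\delta_0^2/2+m_0}-1\bigr)\d_x$ with the same polynomial, unless $m$ and $\delta$ are constant (then $g_n=\lam_n$, which is precisely the case used in Section \ref{sec:proof}). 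So either record explicitly which coefficient multiplies which of the three pieces, or read \eqref{eq:A_n_gaus} in the looser sense that the jump part of $\Ac_n(s)$ is built from $\d_x$, $\mathscr{C}^x_{m_0,\delta_0^2}$ and scalars whose coefficients depend only on the quantities in \eqref{eq:taylor_coef_lev}; with that bookkeeping repaired your argument is complete and coincides with the computation the paper intends.
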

\begin{remark}\label{remand2}
%The L\'evy measure %$\d_x^n \nu(t,\xb,dz)$
%$\nu_{n}$ depends in \eqref{} on
Note that the action of the operators $\Gc^x_{n}$ on the L\'evy
type density $p_0(t,x;T,y)$, as well as on $u(t,x)$, can be explicitly characterized. Indeed, a
direct computation shows that, for any $k\geq 0$,
\begin{align}
\partial_x p_{0,k}(t,x;T,y)&=-\frac{x-y+n\, m_0-
  \int_t^T \left( \frac{a_0(s)}{2}+\lam_0(s)  e^{\frac{\delta_0^2}{2}}-\lam_0(s) \right) ds }
  {2\left(\int_t^T a_0(s) ds+n\, \delta_0^2 \right)} p_{0,k}(t,x;T,y),\\
  \mathscr{C}^x_{m_0,\delta_0^2}\, p_{0,k}(t,x,;T,y)&= p_{0,k+1}(t,x;T,y),
\end{align}
and
\begin{align}
   \mathscr{C}^x_{m_0,\delta_0^2}\, \left(x\, p_{0,k}(t,x,;T,y)\right)&=(x+m_0-\delta_0^2\partial_x) \mathscr{C}^x_{m_0,\delta_0^2}\, p_{0,k}(t,x,;T,y),\\
    \mathscr{C}^x_{m_0,\delta_0^2}\, \left(\partial_x p_{0,k}(t,x,;T,y)\right)&=\partial_x\mathscr{C}^x_{m_0,\delta_0^2}\, p_{0,k}(t,x,;T,y).
\end{align}
\end{remark}
We now fix $N\geq 0$ and prove some pointwise error estimates for the $N$-th order
approximation of the fundamental solution of $p(t,x;T,y)$, defined as
\begin{align}
p^{(N)}(t,x;T,y)=\sum_{n=0}^N p_n (t,x;T,y),
\end{align}
where the functions $p_n(\cdot,\cdot\,;T,y)$ solve \eqref{eq:v.0.pide}-\eqref{eq:v.n.pide} with $h=\delta_y$.
Hereafter, we will assume the coefficients of the operator $\Ac$ in \eqref{eq:A}, with $\nu$ as in \eqref{eqbound3}, to satisfy the following assumption.
\begin{assumption}\label{assumption:parab}
There exists a constant $M>0$ such that
\begin{enumerate}
\item[i)]{\it(parabolicity)} for any $t\in[0,T]$ and $x\in\mathbb{R}$,
\begin{align}
M^{-1}\leq a(t,x)\leq M;
\end{align}
\item[ii)]{\it(non degeneracy of the L\'evy measure)} {the L\'evy measure $\nu$ is as in \eqref{eqbound3} and,} for any $t\in[0,T]$ and $x\in\mathbb{R}$,
\begin{align}
 M^{-1}\leq  \delta^2(x)\leq M,\qquad 0\leq \lambda(t,x)\leq M, \qquad t\in[0,T],\ x\in\R;
\end{align}
\item[iii)]{\it(regularity and boundedness)} for any $t\in[0,T]$, the functions $a(t,\cdot),\g(t,\cdot)$, $\lambda(t,\cdot)$, $\delta(\cdot)$,
$m(\cdot)\in C^{N+1}(\mathbb{R})$, and all of their $x$-derivatives up to order $N+1$ are bounded by $M$, uniformly with respect to $t\in[0,T]$.
\end{enumerate}
\end{assumption}
\begin{theorem}\label{t1}
%Assume that
%\begin{align}
%m\leq a(t,x)\leq M,\qquad 0\leq \gamma(t,x)\leq M, \qquad t\in[0,T],\ x\in\R,
%\end{align}
%for some positive constants $m$ and $M$, and that
%\begin{align}\label{eqbound3}
%\nu(t,x,dz) &=\lam(t,x)\, \mathscr{N}_{\m(t,x),\delta^2(t,x)}(dz)
% :=\frac{\lam(t,x)}{\sqrt{2\pi}\delta(t,x)}e^{-\frac{(z-\m(t,x))^2}{2\delta^2(t,x)}}dz,
%\end{align}
%with
%\begin{align}
% m\leq  \delta^2(t,x)\leq M,\qquad 0\leq \lambda(t,x),|\mu(t,x)|\leq M, \qquad t\in[0,T],\ x\in\R.
%\end{align}
%Moreover assume that $a,\g,\lambda,\delta,\mu$ and their $x$-derivatives are bounded and Lipschitz
%continuous in $x$, and uniformly bounded with respect to $t\in[0,T]$.
Let {$N\in \mathbb{N}_0$}, and $\xb=y$ or $\xb=x$ in \eqref{eq:taylor_coef_gaus}-\eqref{eq:taylor_coef_lev}.
Then, {under Assumption \ref{assumption:parab}}, for any $x,y \in \Rb$ and $t< T$ we have\footnote{Here $\left\|  \partial_x \nu  \right\|_{\infty}:=\max\{
\left\|
\partial_x\lambda  \right\|_{\infty},\left\|  \partial_x\delta  \right\|_{\infty},\left\|
\partial_x \mu  \right\|_{\infty} \}$, where $\|  \cdot  \|_{\infty}$ denotes the sup-norm on
$(0,T)\times\R$. Note that $\left\|\partial_{x}\nu\right\|_{\infty}=0$ if $\lambda,\delta,\mu$ are
constants.}
\begin{align}\label{eqbound10}
 \left|p(t,x;T,y) - p^{(N)}(t,x;T,y)\right|\leq   g_{N}(T-t) \left(\bar{\G}(t,x;T,y)+\left\|  \partial_x \nu  \right\|_{\infty} \widetilde{\G}(t,x;T,y) \right),
 %\qquad\ x,y \in \Rb,\ 0<t\leq T,
\end{align}
where
\begin{align}
{ g_{N}(s)=\Oc \left(s^{\frac{1+\min(1,N)}{2}}\right)},\quad \textrm{as } s \to 0^+ .
\end{align} Here, the function $\bar{\G}$ is the fundamental solution of the
constant coefficients jump-diffusion operator
\begin{align}
  &\p_t u(t,x)+\frac{\bar{M}}{2}\p_{xx} +\bar{M}\int_{\R} \left( u(t,x+z)-u(t,x)\right) \mathscr{N}_{\bar{M},\bar{M}}(dz),
\end{align}
where $\bar{M}$ is a suitably large constant, and $\widetilde{\G}$ is defined as
\begin{align}
\widetilde{\G}(t,x;T,y)=\sum_{k=0}^{\infty}%\left(
\frac{\bar{M}^{k/2}(T-t)^{k/2}}{\sqrt{k!}}%\right)^{\frac{1}{2}}
\mathscr{C}^{k+1}\bar{\G}(t,x;T,y),
\end{align}
with $\mathscr{C}_{\bar{M}}=\mathscr{C}^x_{0,\bar{M}}$ being the convolution operator defined in
\eqref{eq:op_conv_norm}.
%\begin{align}
%\mathscr{C}f(x)=\int_{\R}f(x+z)\mathscr{N}_{\bar{M},\bar{M}}(dz).
%\end{align}
\end{theorem}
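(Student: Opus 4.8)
The plan is to run a parametrix iteration in which the explicit approximation $p^{(N)}$ plays the role usually reserved for the frozen-coefficient Gaussian parametrix; the genuinely new ingredient is a family of Gaussian-jump dominating kernels that is stable under the convolutions appearing in the iteration.

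\emph{Step 1 (defect of $p^{(N)}$ and parametrix representation).} First I would record the defect identity. Summing the nested Cauchy problems \eqref{eq:v.0.pide}--\eqref{eq:v.n.pide} with terminal datum $h=\delta_y$ yields $(\partial_t+\Ac_0)p^{(N)}=-\sum_{k=1}^N\Ac_k\,p^{(N-k)}$; writing $\Ac-\Ac_0=\sum_{k=1}^N\Ac_k+R_N$, where $R_N$ is the $N$-th order Taylor remainder operator (its coefficients are $O(|x-\xb|^{N+1})$ by Assumption \ref{assumption:parab}(iii), and its integral part carries a factor built from the $x$-derivatives of $\lambda,m,\delta$), one gets
\begin{align}
 (\partial_t+\Ac)p^{(N)}(t,x;T,y)&=\sum_{k=1}^N\Ac_k\big(p^{(N)}-p^{(N-k)}\big)+R_N\,p^{(N)}\\
 &=:\Phi(t,x;T,y),
\end{align}
a finite sum of terms $\Ac_k p_j$ with $k+j\geq N+1$ together with $R_N p^{(N)}$, while $p^{(N)}(T,\cdot\,;T,y)=\delta_y$. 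Writing $(f\otimes g)(t,x;T,y):=\int_t^T\!\!\int_\Rb f(t,x;s,\eta)\,g(s,\eta;T,y)\,d\eta\,ds$, I would then show, exactly as in the parametrix method but now for the integro-differential operator, that $\Psi:=\sum_{j\geq1}\Phi^{\otimes j}$ converges and that $p^{(N)}+p^{(N)}\otimes\Psi$ solves $(\partial_t+\Ac)u=0$ with $u(T)=\delta_y$; by uniqueness of the fundamental solution (\cite[Thm.\ II.3.1]{GarroniMenaldi}) this forces $p-p^{(N)}=p^{(N)}\otimes\Psi$, whose leading contribution is $p_0\otimes\Phi$. The same representation delivers the a priori Gaussian-jump bound $p\leq C\,\bar\G$ (the ``non-classical upper bound'' for the fundamental solution mentioned in the Introduction).

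\emph{Step 2 (Gaussian-jump bounds on the defect).} Using the representation of $p_0$ as a time-changed compound-Poisson density \eqref{eq:p_0_gaus}--\eqref{eq:p_0_gaus_1} and the closed-form action of the operators $\Gc^x_n$ from Proposition \ref{prop:G_n_gaus} and Remark \ref{remand2}, every $p_n(s,\eta;T,y)$ --- hence every summand of $\Phi$ --- is a finite combination of powers of $(T-s)$, polynomials in $(\eta-\xb)$, and convolution images $\mathscr{C}^{r}$ of Gaussian densities of variance $\gtrsim(T-s)$. Estimating each factor $(\eta-\xb)$ by $(T-s)^{1/2}$ against the Gaussian tail and each spatial second derivative by its natural singularity $(T-s)^{-1}$, and dominating the resulting constant-coefficient Gaussian-jump kernels by $\bar\G$ for $\bar{M}$ large, I would obtain
\begin{align}
 \big|\Phi(s,\eta;T,y)\big|&\leq C\,(T-s)^{\frac{N-1}{2}}\,\bar\G(s,\eta;T,y)\\
 &\quad+C\,\|\partial_x\nu\|_\infty\,(T-s)^{\frac{N}{2}}\sum_{r\geq1}\frac{\bar{M}^{r/2}(T-s)^{r/2}}{\sqrt{r!}}\,\mathscr{C}^{r}\bar\G(s,\eta;T,y),
\end{align}
where the second (integral) group is present only when $\lambda,m,\delta$ are non-constant --- this reproduces the dichotomy recorded in the footnote to the theorem, and holds because $\Ac_k$ ($k\geq1$) and $R_N$ have no integral part when the jump parameters are constant. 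Analogous estimates for $|\partial_\eta\Phi|$ and for the iterated kernels $\Phi^{\otimes j}$ (one extra half-power of $(T-s)$ gained per extra factor) follow in the same way.

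\emph{Step 3 (closing the estimate).} The final step exploits the reproducing structure of the dominating kernels: after enlarging $\bar{M}$ to absorb constants, $\int_\Rb\bar\G(t,x;s,\eta)\,\bar\G(s,\eta;T,y)\,d\eta=\bar\G(t,x;T,y)$, together with the analogous semigroup identities for the operators $\mathscr{C}$, so that each spatial integration in $p^{(N)}\otimes\Psi$ collapses a product of two kernels into one, while each time integration raises the power of $(T-t)$ via $\int_t^T(T-s)^{\beta}\,ds=(T-t)^{\beta+1}/(\beta+1)$. Carrying out the bookkeeping, the leading convolution $p_0\otimes\Phi$ produces $C(T-t)^{\frac{N+1}{2}}\bar\G(t,x;T,y)$ from the diffusive group and $C\|\partial_x\nu\|_\infty(T-t)^{\frac{N+2}{2}}\widetilde\G(t,x;T,y)$ from the integral group, the sum over $r$ reorganizing precisely into $\widetilde\G(t,x;T,y)=\sum_{k\geq0}\frac{\bar{M}^{k/2}(T-t)^{k/2}}{\sqrt{k!}}\mathscr{C}^{k+1}\bar\G(t,x;T,y)$; the higher iterates $p^{(N)}\otimes\Phi^{\otimes j}$ with $j\geq2$ contribute strictly higher powers of $(T-t)$ with Beta-type coefficients whose sum converges, and are absorbed into $g_N$. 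Since $\frac{N+1}{2}\geq1$ for $N\geq1$, whereas for $N=0$ the remainder $R_0$ vanishes only to first order and the integrable singularity $(T-s)^{-1/2}$ integrates to $(T-t)^{1/2}$, one concludes $g_N(s)=\Oc\!\left(s^{\frac{1+\min(1,N)}{2}}\right)$, which is \eqref{eqbound10}.

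\emph{Main obstacle.} The hard part --- and what makes this a genuine extension of the parametrix method rather than a routine application --- is that the non-local part of $\Ac$ breaks the Gaussian stability of the classical iteration: the convolution of a Gaussian kernel with the jump operator is no longer Gaussian but acquires the operators $\mathscr{C}^{r}$. One must therefore identify the correct pair of dominating kernels $\bar\G,\widetilde\G$ that is stable under both the diffusive parametrix convolution and the jump perturbation, prove the corresponding reproducing inequalities with constants that can be controlled uniformly by enlarging $\bar{M}$, and track the exact power of $(T-s)$ gained at each step so that the integral contribution collapses into exactly the $\widetilde\G$ appearing in the statement.
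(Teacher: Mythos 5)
Your overall architecture is the right one and is close to the paper's: take the truncated expansion as a parametrix, represent $p-p^{(N)}$ through a Volterra iteration, and close the estimate with Gaussian-jump dominating kernels and the semigroup identity \eqref{eqbound20} (the paper iterates with the kernel $Lp^{(0)}$ and outer factor $p^{(0)}$ rather than $Lp^{(N)}$ and $p^{(N)}$, but that is a cosmetic difference). The genuine problem is in your Step 2, i.e.\ exactly at the quantitative heart of the theorem. You bound the defect by ``estimating each factor $(\eta-\xb)$ by $(T-s)^{1/2}$ against the Gaussian tail''. That conversion is valid only for the purely Gaussian ($n=0$) component of the compound-Poisson kernel \eqref{eq:p_0_gaus}--\eqref{eq:p_0_gaus_1}; the components with $n\ge 1$ jumps have variance bounded below by $\delta_0^2>0$, so a factor $(\eta-\xb)^k$ can only be absorbed into an enlarged kernel, with \emph{no} gain of a power of $T-s$. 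This is precisely the content of the paper's Lemma \ref{lemestim2} and Lemma \ref{lemestim8} (compare the purely diffusive analogue discussed in Section \ref{sec:difference_diffusion}, where the extra $\sqrt{T-t}$ does appear). Concretely, already for $N=2$ with constant $\lambda,\delta$ and $m=0$, the term $(x-y)^3(\partial_{xx}-\partial_x)p_0$ appearing in your $\Phi$ is comparable to $\bar\G$ off the diagonal (both behave like $T-s$ at fixed $x\neq y$), so your claimed bound $|\Phi|\lesssim (T-s)^{(N-1)/2}\bar\G$ fails for $N\ge 2$; similarly the extra factor $(T-s)^{N/2}$ you attach to the jump group is false pointwise off the diagonal even for $N=1$. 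Your bookkeeping, taken at face value, would yield the rate $(T-t)^{\frac{N+1}{2}}$ for every $N$, which the paper explicitly shows is \emph{not} attainable with a local (Taylor) expansion of a non-local operator: the correct saturated exponent $\frac{1+\min(1,N)}{2}$ in \eqref{eqbound10} is forced by this very obstruction, which your argument does not confront.

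A second, related issue: in your scheme the whole convolution structure $\widetilde\G=\sum_k \frac{\bar M^{k/2}(T-t)^{k/2}}{\sqrt{k!}}\mathscr{C}^{k+1}\bar\G$ is claimed to come out of the defect $\Phi$ itself, with the $1/\sqrt{r!}$ weights asserted but never derived. In the paper this structure is generated by the \emph{iteration}: each iterate $Z^{(N)}_{n}$ picks up one more convolution power and a factor $(T-t)^{n/2}/\sqrt{n!}$ (Proposition \ref{properr3}, proved by induction using the Beta-integral estimate and Lemmas \ref{lemestim5}--\ref{lemestim9}), and the base case $Z^{(N)}_0=Lp^{(N)}$ requires the delicate pointwise estimates of Propositions \ref{properr2} and \ref{properr4}, in which the polynomial factors are used only to cancel the $(T-s)^{-1/2}$ singularities of spatial derivatives, never to produce positive powers of time on the jump components. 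To repair your proof you would need to replace the ``$(\eta-\xb)\mapsto(T-s)^{1/2}$'' step by estimates of this type, accept the resulting saturation of the time exponent at $\min(1,N)$, and obtain the factorial decay of the convolution weights from the iterated time integrals rather than from the defect.
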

\noindent
The proof the Theorem \ref{t1} is postponed to Section \ref{sec:proof}.
\begin{remark}\label{remt1}
As we shall see in the proof of Theorem \ref{t1}, the functions $\mathscr{C}^k \bar{\G}$ take the
following form
\begin{align}\label{andconv}
\mathscr{C}^{k}\bar{\G}(t,x;T,y)&=e^{-\bar{M}(T-t)}\sum_{n=0}^{\infty}\frac{\left(\bar{M}
(T-t)\right)^n}{n!\sqrt{2\pi \bar{M} (T-t+n+k)}}\, \exp\left(-\frac{\left(x-y
+\bar{M}(n+k)\right)^2}  {2\bar{M}(T-t+n+k)}\right),\qquad  k\ge 0,
\end{align}
and therefore $\widetilde{\G}$ can be explicitly written as
\begin{align}
\widetilde{\G}(t,x;T,y)=e^{-\bar{M}(T-t)}\sum_{n,k=0}^{\infty} \frac{\left(\bar{M}
(T-t)\right)^{n+\frac{k}{2}} }{n!\sqrt{k!}\sqrt{2\pi \bar{M} (T-t+n+k+1)}}  \,
\exp\left(-\frac{\left(x-y +\bar{M}(n+k+1)\right)^2}
  {2\bar{M}(T-t+n+k+1)}\right).
\end{align}
\end{remark}
By Remark \ref{remt1}, it follows that, when $k=0$ and $x\neq y$, the asymptotic behaviour as
$t\to T$ of the sum in \eqref{andconv} depends only on the $n=1$ term.  Consequently, we have
$\bar{\G}(t,x;T,y)=\Oc (T-t)$ as $(T-t)$ tends to $0$. On the other hand, for $k\ge 1$,
$\mathscr{C}^{k}\bar{\G}(t,x;T,y)$, and thus also $\widetilde{\G}(t,x;T,y)$, tends to a positive
constant as $(T-t)$ goes to $0$. It is then clear by \eqref{eqbound10} that, with $x\neq y$ fixed,
the asymptotic behavior of the error, when $t$ tends to $T$, changes from
{$(T-t)^{\frac{1+\min(1,N)}{2}}$ to $(T-t)^{\frac{1+\min(1,N)}{2}+1}$} depending on whether
the L\'evy measure is locally-dependent or not.
%\blu{(I do not
%understand the following sentence.  We should discuss this via skype at some
%point.)}\red{(Basically, fixing $x\neq y$, when $t$ is close to $T$ the error goes as $(T-t)$ in
%the general case, but we get more accuracy, i.e. the error goes as $(T-t)^2$, as soon as we
%consider a non-local L\'evy measure. If you still have any doubt we can discuss it via Skype.)}
\begin{remark}
The proof of Theorem \ref{t1} is also interesting for theoretical purposes. Indeed, it actually
represents a procedure to construct $p(t,x;T,y)$. Note that with $p^{(N)}(t,x;T,y)$ being known
explicitly, equation \eqref{eqbound10} provides pointwise {upper bounds} for the fundamental solution of the integro-differential operator with variable coefficients $(\partial_t +\Ac)$.
\end{remark}

Theorem \ref{t1} extends the previous results in \cite{RigaPagliaraniPascucci} where only the
purely diffusive case (i.e $\lambda \equiv 0$) is considered. In that case an estimate analogous to
\eqref{eqbound10} holds with
 $$g_{N}(s)=\Oc \left(s^{\frac{N+1}{2}}\right),\quad \textrm{as } s \to 0^+.$$
Theorem \ref{t1} shows that for jump processes, {one obtains an improvement on the asymptotic
convergence from $(T-t)^{\frac{1}{2}}$ to $(T-t)$} when passing from $N=0$ to $N=1$. On the other
hand, increasing the order of the expansion for $N$ greater than one, theoretically does not give
any gain in the rate of convergence of the {approximation} expansion as $t \to T^-$; this is
due to the fact that the expansion is based on a local (Taylor) approximation while the PIDE
contains a non-local part. {We refer to Section \ref{sec:difference_diffusion} for further
details about this aspect. As for the estimate \eqref{eqbound10}, this is} in accord with the
results in \cite{gobet-smart} where only the case of constant L\'evy measure is considered. Thus
Theorem \ref{t1} extends the latter results to state dependent Gaussian jumps using a completely
different technique. Extensive numerical tests showed that the first order approximation gives
very accurate results and the precision appears to be further improved by considering higher order
approximations.

A straightforward corollary of Theorem \ref{t1} is the following estimate of the error for the $N$-th order approximation of the price, defined as
\begin{align}
u^{(N)}(t,x)=\sum_{n=0}^N u_n (t,x), \label{eq:uN}
\end{align}
where the functions $u_n(\cdot,\cdot\,;T,y)$ solve \eqref{eq:v.0.pide}-\eqref{eq:v.n.pide}.
\begin{corollary}\label{cor2}
Let $\xb=y$ or $\xb=x$ in \eqref{eq:taylor_coef_gaus}-\eqref{eq:taylor_coef_lev}.
Then, for any $x,y \in \Rb$ and $t< T$ we have
\begin{align}
 \left |u(t,x) - u^{(N)}(t,x)\right|\leq   g_{N}(T-t)  \int_{\mathbb{R}} |h(y)| \left(\bar{\G}(t,x;T,y)+\left\|  \partial_x \nu  \right\|_{\infty} \widetilde{\G}(t,x;T,y) \right) dy.
\end{align}
\end{corollary}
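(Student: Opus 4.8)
The plan is to derive Corollary \ref{cor2} directly from Theorem \ref{t1} by integrating the pointwise density error bound against the payoff. Recall that, by the Feynman--Kac representation \eqref{eq:v.def1}, the exact price admits the integral representation $u(t,x) = \int_\Rb h(y) p(t,x;T,y)\, dy$. The key observation is that the approximate prices $u_n$ are generated by exactly the same nested Cauchy problems \eqref{eq:v.0.pide}--\eqref{eq:v.n.pide} as the approximate densities $p_n$, the only difference being the terminal datum: $h$ for the price and $\delta_y$ for the density. Since these Cauchy problems are linear and the solution operator is, at each order, an integral operator against $p_0$ composed with the fixed operators $\Lc_n^x(t,T)$ (Theorem \ref{th:un_general_repres}), superposition gives $u_n(t,x) = \int_\Rb h(y)\, p_n(t,x;T,y)\, dy$ for every $0 \le n \le N$. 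Summing over $n$ yields $u^{(N)}(t,x) = \int_\Rb h(y)\, p^{(N)}(t,x;T,y)\, dy$.

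First I would make the superposition step rigorous: one needs enough integrability/regularity so that the operators $\Lc_n^x(t,T)$ commute with the $y$-integration, i.e.\ so that applying $\Lc_n^x(t,T)$ to $u_0(t,x) = \int_\Rb h(y) p_0(t,x;T,y)\, dy$ is the same as $\int_\Rb h(y)\, \big(\Lc_n^x(t,T) p_0(t,x;T,\cdot)\big)(y)\, dy$. Under Assumption \ref{assumption:parab}, the Gaussian-type structure of $p_0$ in \eqref{eq:p_0_gaus}--\eqref{eq:p_0_gaus_1} together with the explicit action of the operators $\Gc_n^x$ described in Remark \ref{remand2} guarantees that $\Lc_n^x(t,T) p_0(t,x;T,y)$ is again a finite linear combination of Gaussian-type kernels multiplied by polynomials, so it has rapid decay in $y$ locally uniformly in $x$; hence Fubini applies whenever $h$ is, say, of at most exponential growth (in particular whenever the integral on the right-hand side of the corollary is finite). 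This justifies $u^{(N)}(t,x) = \int_\Rb h(y)\, p^{(N)}(t,x;T,y)\, dy$.

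Then the estimate follows immediately:
\begin{align}
\left|u(t,x) - u^{(N)}(t,x)\right|
  &= \left| \int_\Rb h(y)\left(p(t,x;T,y) - p^{(N)}(t,x;T,y)\right) dy \right| \\
  &\le \int_\Rb |h(y)| \left|p(t,x;T,y) - p^{(N)}(t,x;T,y)\right| dy \\
  &\le g_N(T-t) \int_\Rb |h(y)| \left(\bar{\G}(t,x;T,y) + \|\partial_x \nu\|_\infty\, \widetilde{\G}(t,x;T,y)\right) dy,
\end{align}
where the last inequality is precisely the pointwise bound \eqref{eqbound10} from Theorem \ref{t1}, applied inside the integral.

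I expect the main obstacle to be the interchange of $\Lc_n^x(t,T)$ with the integral against $h$, i.e.\ justifying $u_n(t,x) = \int_\Rb h(y) p_n(t,x;T,y)\, dy$ with full rigor rather than merely formally. The operators $\Lc_n^x$ involve $x$-derivatives and the nonlocal convolution operators $\mathscr{C}^x_{m_0,\delta_0^2}$ coming from $\Mc^x$, so one must control differentiation under the integral sign and the growth of the resulting kernels; this is routine given the explicit Gaussian structure recorded in \eqref{eq:p_0_gaus_1} and Remark \ref{remand2}, but it is the only point requiring care. Everything else is a one-line application of Theorem \ref{t1} and the triangle inequality.
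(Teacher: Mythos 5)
Your proof is correct and follows exactly the route the paper intends: the paper presents Corollary \ref{cor2} as a ``straightforward corollary'' of Theorem \ref{t1}, obtained by writing $u-u^{(N)}=\int_\Rb h(y)\bigl(p-p^{(N)}\bigr)(t,x;T,y)\,dy$ and applying the pointwise bound \eqref{eqbound10} under the integral, which is precisely your argument. Your extra care about interchanging $\Lc_n^x(t,T)$ with the $y$-integration is a reasonable elaboration of a step the paper leaves implicit, but it does not constitute a different approach.
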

Some possible extensions of these asymptotic error bounds to general L\'evy measures are possible,
though they are certainly not straightforward. Indeed, the proof of Theorem \ref{t1} is based on
some pointwise uniform estimates for the fundamental solution of the constant coefficient
operator, i.e., the transition density of a compound Poisson process with Gaussian jumps. When
considering other L\'evy measures these estimates would be difficult to carry out, especially in
the case of jumps with infinite activity, but they might be obtained in some suitable normed
functional space. This might lead to error bounds for short maturities, which are expressed in
terms of a suitable norm, as opposed to uniform pointwise bounds. {We aim to elaborate more on
this direction in our future research.}
%\begin{remark}
%\label{rmk:practice}
%Since, in general, it is hard to derive the truncation error bound, the reader may wonder how to determine the number of terms to  include in the asymptotic expansion.  Though we provide a general expression for the $n$-th term, realistically, only the fourth order term can be computed.  That said, as we shall see from the examples in Section \ref{sec:examples}, in practice, three terms provide an approximation which is accurate enough for most applications (i.e., the resulting approximation error is smaller than the bid-ask spread typically quoted on the market).  Since, $v^{(n)}$ only requires only a single Fourier integration, there is no numerical advantage for choosing smaller $n$.  As such, for financial applications we suggest using $n=3$ or $n=4$.
%\end{remark}

%%%%%%%%%%%%%%%%%%%%%%%%%%%%%%%%%%%%%%%%%%%%%%%%%%%%%%%%%
%
%                    Examples
%
%%%%%%%%%%%%%%%%%%%%%%%%%%%%%%%%%%%%%%%%%%%%%%%%%%%%%%%%%

\section{Examples}
\label{sec:examples}
In this section, in order to illustrate the versatility of our asymptotic
expansion, we apply our approximation technique to a variety of different L\'evy-type models. We
study not only option prices and transition densities, but also implied volatilities and credit
spreads.  In each setting, if the exact or approximate density/option price/credit spread has been
computed by a method other than our own, we compare this to the density/option price/credit spread
obtained by our approximation.  For cases where the exact or approximate density/option
price/credit spread is not analytically available, we use Monte Carlo methods to verify the
accuracy of our method.
\par
Note that, some of the examples considered below do not satisfy the conditions listed in Section
\ref{sec:model}.   In particular, we will consider coefficients $(a,\gam,\nu)$ that are not
bounded.  Nevertheless, the formal results of Section \ref{sec:formal} work well in the examples
considered.

%%%%%%%%%%%%%%%%%%%%%%%%%%%%%%%%%%%%%%%%%%%%%%%%%%%%%%%%%
%           Compairson CEV-like L\'evy-type processes
%%%%%%%%%%%%%%%%%%%%%%%%%%%%%%%%%%%%%%%%%%%%%%%%%%%%%%%%%

\subsection{CEV-like L\'evy-type processes}
\label{sec:CEV}
We consider a L\'evy-type process of the form \eqref{eq:dX} with CEV-like
volatility and jump-intensity.  Specifically, the $\log$-price dynamics are given by
\begin{align}
%\sig(t,x)
    %&= \sig_0 e^{(\beta-1)x}, &
a(x)
    &=  \frac{1}{2} \del^2 e^{2(\beta-1)x}, &
\nu(x,dz)
    &=  e^{2(\beta-1)x} \Nc(dz), &
\gam(x)
    &=  0, &
\del
    &\geq   0, &
\beta
    &\in [0,1], \label{eq:CEV.like}
\end{align}
where $\Nc(dx)$ is a L\'evy measure.  When $\Nc \equiv 0$, this model reduces to the CEV model of
\cite{CoxCEV}.  Note that, with $\beta \in [0,1)$, the volatility and jump-intensity increase as
$x \to -\infty$, which is consistent with the leverage effect (i.e., a decrease in the
value of the underlying is often accompanied by an increase in volatility/jump intensity).  This
characterization will yield a negative skew in the induced implied volatility surface.
%As the
%class of models described by \eqref{eq:CEV.like} is of the form \eqref{eq:proportional0} with
%$f(x)=e^{2(\beta-1)x}$, this class naturally lends itself to the two-point Taylor series
%approximation of Example \ref{ex:two-pt}.  Thus, for certain numerical examples in this Section,
%we use basis functions $B_n$ given by \eqref{eq:B.two-pt}.  In this case we choose expansion
%points $\xb_1$ and $\xb_2$ in a symmetric interval around $X_0$ and in \eqref{eq:M} we choose $M =
%f(X_0) =e^{2(\beta-1)X_0}$.
For the numerical examples for this model, we use the one-point Taylor
series expansion of $\Ac$ as in Example \ref{example:Taylor} with $\xb = X_t$.
\par
We will consider the case where the L\'evy measure $\Nc(dz)$ is Gaussian:
\begin{align}
%\text{Gaussian:}&&
\Nc(dz)
    &=  \lam \frac{1}{\sqrt{2\pi \eta^2}}\exp\( \frac{-(z-m)^2}{2 \eta^2} \) dz .
            \label{eq:eta.gaussian}
%\text{Variance-Gamma:}&&
%\Nc(dz)
    %&=  \( \frac{e^{-\lam_{-} |z|}}{\kappa|z|} \Ib_{\{z<0\}} + \frac{e^{-\lam_{+} z}}{\kappa z} \Ib_{\{z>0\}} \) dz,
            %\label{eq:eta.VG} \\
%&&
%\lam_{\pm}
    %&=  \( \sqrt{\frac{\theta^2 \kappa^2}{4} + \frac{\rho^2 \kappa}{2}} \pm \frac{\theta \kappa}{2}\)^{-1}
\end{align}
%\label{sec:Merton.CEV}
In our first numerical experiment, we consider the case of Gaussian jumps.
That is, $\Nc(dz)$ is given by \eqref{eq:eta.gaussian}.  We fix the following parameters
\begin{align}
\del
    &= 0.20, &
\beta
    &= 0.5, &
\lam
    &= 0.3, &
m
    &= -0.1, &
\eta
    &=  0.4, &
S_0 = e^x
    &=  1. \label{eq:parameters}
\end{align}
In order to examine the convergence of our density approximation, in Figure \ref{fig:density} we plot the
approximate transition density $p^{(n)}(t,x;T,y)$ for different values of $n$.  We note that, for $T-t\leq 5$, the transition
densities $p^{(4)}(t,x;T,y)$ and $p^{(3)}(t,x;T,y)$ are nearly identical. This is typical in our
numerical experiments.  Numerical results associated with Figure \ref{fig:density} are given in Table \ref{tab:density}.
\par
Computation times are also an important consideration.  From Theorem \ref{thm:fourier} and \eqref{eq:uN}, we observe that
\begin{align}
u^{(n)}(t,x)
    &=  \frac{1}{2 \pi }  \int_{\Rb}  \hat{p}_0(t,x,T,\xi)  \hat{h}(-\xi)
            \( 1 + \sum_{m=1}^n e^{-i x \xi} \hat{\Lc}^{\xi}_{m}(t,T)e^{i x \xi} \) \,d \xi , \label{eq:fourier.integral}
\end{align}
where, to obtain $p^{(n)}(t,x;T,y)$ from $u^{(n)}(t,x)$, one simply sets $h = \del_y$.  Thus, the $n$-th order approximation (either for an option price $u^{(n)}$ or the transition density $p^{(n)}$) as a single Fourier integral, which must be computed numerically.  The difference in computation times for a given order of approximation will depend \emph{only} on the factor in parenthesis, which is simply a polynomial in $\xi$ and can always be computed explicitly.  To gauge the numerical cost of computing the $n$th order approximation of the transition density, we measure the average time needed to compute $p^{(n)}(t,x;T,y)$ over a range of $y$-values.  We call the average time it takes to compute $p^{(n)}$ divided by the  average time it takes to compute $p^{(0)}$ the \emph{computation time of $p^{(n)}$ relative to $p^{(0)}$}.  Computation times relative to $p^{(0)}$ are given in Table \ref{tab:density}.

\subsection{Comparison with \cite{lorig-jacquier-1}}
\label{sec:lorig} In \citet{lorig-jacquier-1}, the author considers a class of time-homogeneous
L\'evy-type processes of the form:
\begin{align}
\left. \begin{aligned}
a(x)
    &=  \frac{1}{2} \left( b_0^2 + \eps b_1^2 \eta(x) \right), \\
\gam(x)
    &=  c_0 + \eps c_1 \eta(x), \\
\nu(x,dz)
    &=      \nu_0(dz) + \eps \eta(x) \nu_1(dz).
\end{aligned} \right\} \label{eq:lorig}
\end{align}
Here, $(b_0,b_1,c_0,c_1,\eps)$ are non-negative constants, the function $\eta \geq 0$ is smooth
and $\nu_0$ and $\nu_1$ are L\'evy measures.  When $\eta(x) = e_\beta(x) := e^{\beta x}$, the
authors obtain the following expression for European-style options written on $X$
\begin{align}
u(t,x)
    &=      \sum_{n=0}^\infty \eps^n w_n(T-t,x), \label{eq:w} \\
w_n(t,x)
    &=      e_{n\beta}(x) \int_\Rb d\xi \( \sum_{k=0}^n \frac{e^{t \pi_{\xi-ik\beta}}}
                {\prod_{j\neq k}^n (\pi_{\xi-ik\beta}-\pi_{\xi-ij\beta})}\)
                \( \prod_{k=0}^{n-1} \chi_{\xi-ik\beta}\) \hat{h}(\x) e^{i xi x}.
\end{align}
where $x=X_t$
%is the time to maturity, $x$ is the present value of $X$
and
\begin{align}
\pi_\xi
    &=      \frac{1}{2} b_0^2 \( -\xi^2 - i \xi \)
                + c_0 (i\xi - 1)
                - \int_\Rb \nu_0(dz) \left( e^{z} - 1 - z \right) i \xi
                + \int_\Rb \nu_0(dz) \left( e^{ i \xi z} - 1 - i \xi z \right), \\
\chi_\xi
    &=      \frac{1}{2} b_1^2 \( -\xi^2 - i \xi \)
                + c_1 (i\xi - 1)
                - \int_\Rb \nu_1(dz) \left( e^{z} - 1 - z \right) i \xi
                + \int_\Rb \nu_1(dz) \left( e^{ i \xi z} - 1 - i \xi z \right).
\end{align}
As in \eqref{eq:u0.fourier}, $\hat{h}(\x)$ is the (possibly generalized) inverse Fourier transform of the
option payoff $h(x)$.
%\par
%Now consider the following model:
%\begin{align}
%a(x)
    %&=  A f(x), &
%\gam(x)
    %&=  \Gamma f(x) , &
%\nu(x,dz)
    %&=  f(x) \Nc(dz) &
%f(x)
        %&=  a_0 + \eps a_1 \eta(x) , \label{eq:proportional}
%\end{align}
%The models described by \eqref{eq:lorig} and \eqref{eq:proportional} coincide if we choose
%\begin{align}
%\frac{1}{2}b_i^2
    %&=  a_i A , &
%c_i
    %&=  a_i  \Gamma , &
%\nu_i(dz)
    %&=  a_i \Nc(dz) , &
%i
    %&=  \{0,1\} .
%\end{align}
%Furthermore, comparing equations \eqref{eq:proportional0} with \eqref{eq:proportional}, we see that \eqref{eq:proportional} is precisely the form considered in Example \ref{ex:two-pt}.  Thus, in this Section we use the two-point Taylor series approximation of Example \ref{ex:two-pt} with basis functions $B_n$ given by \eqref{eq:B.two-pt}.  We choose expansion points $\xb_1$ and $\xb_2$ in a symmetric interval around $X_0$ and in \eqref{eq:M} we choose $M = f(X_0) =e^{\beta X_0}$.
\par
In our numerical experiment, we use the  Taylor
series expansion of $\Ac$ as in Example \ref{example:Taylor} with $\xb = X_t$.
We consider Gaussian jumps (i.e., $\Nc$ given by \eqref{eq:eta.gaussian}) and we fix the following parameters:
\begin{align}
\left.
\begin{aligned}
\beta
    &=  -2.0 , &
b_i
    &=  0.15 , &
c_i
    &=  0.0 , &
\nu_i
        &=  \Nc , &
i
        &=  \{0,1\} , \\
\eps
    &=  1.0 , &
\lam
    &=  s = 0.2 , &
m
    &=  -0.2 , &
T-t
    &=  0.5 , &
X_t
    &=  0.0 ,
\end{aligned} \right\} \label{eq:params}
\end{align}
where the L\'evy measure $\Nc$ is given by \eqref{eq:eta.gaussian}.
Using Theorem \ref{th:un_general_repres}, we compute the approximate prices $u^{(0)}(t,x;K)$ and
$u^{(2)}(t,x;K)$ of a series of European puts with strike prices $K \in [0.5, 1.5]$ (we add the
parameter $K$ to the arguments of $u^{(n)}$ to emphasize the dependence of $u^{(n)}$ on the strike
price $K$).  We also compute the price $u(t,x;K)$ using \eqref{eq:w}.  In \eqref{eq:w}, we
truncate the infinite sum at $n=8$.
\par
As prices are often quoted in implied volatilities, we convert prices to implied volatilities by
inverting the Black-Scholes formula numerically.  That is, for a given put price $u(t,x;K)$, we
find $\sig(t,K)$ such that
\begin{align}
u(t,x;K)
    &=  u^{\text{\rm BS}}(t,x;K,\sig(t,K)), \label{eq:impvol.solve}
\end{align}
where $u^{\text{\rm BS}}(t,x;K,\sig)$ is the Black-Scholes price of the put as computed assuming a
Black-Scholes volatility of $\sig$.  For convenience, we introduce the notation
\begin{align}
\text{IV}[u(t,x;K)]:= \sig(t,K)
\end{align}
to indicate the implied volatility induced by option price $u(t,x;K)$.
\par
The results of our numerical experiment are plotted in Figure \ref{fig:compare.lorig}.  We observe
a nearly exact match between the induced implied volatilities $\text{IV}[u^{(2)}(t,x;K)]$ and
$\text{IV}[u(t,x;K)]$, where $u(t,x;K)$ (with no superscript) is computed by truncating
\eqref{eq:w} at $n=8$.

%%%%%%%%%%%%%%%%%%%%%%%%%%%%%%%%%%%%%%%%%%%%%%%%%%%%%%%%%
%           Compairson with Normal Inverse Gaussian
%%%%%%%%%%%%%%%%%%%%%%%%%%%%%%%%%%%%%%%%%%%%%%%%%%%%%%%%%

\subsection{Comparison to NIG-type processes}
\label{sec:nig}
There is a one-to-one correspondence between the generator $\Ac$ of a L\'evy-type process and its \emph{symbol} $\phi$, the correspondence being given by
\begin{align}
\Ac(t,x) e^{i \xi x}
    &=  \phi(t,x,\xi) e^{i \xi x} .
\end{align}
Thus, L\'evy-type processes can be uniquely characterized either through their generator $\Ac$ or their symbol $\phi$.
If $X^0$ is an additive or L\'evy process with symbol $\phi$, we have the following expression for $\hat{p}_0(t,x;T,\xi)$
\begin{align}
 \hat{p}_0(t,x;T,\xi)
    &:= \Eb[ e^{i \xi X_T^0} | X_t^0 = x ]
   =\exp\(i \xi x + \int_t^T \phi(s,x,\xi) ds\) .
\end{align}
\par
A \emph{Normal Inverse Gaussian} (NIG) (see \citet*{barndorff}) is a L\'evy
process $X^0$ with symbol
\begin{align}
\phi(\x)
    &=  i \mu \xi - \del \[ \sqrt{ \alpha^2 - (\beta + i\xi)^2 } - \sqrt{\alpha^2-\beta^2} \].
\end{align}
In Chapter 14, equation (14.1) of \citet{levendorskii}, that authors consider NIG-like Feller processes with symbol
\begin{align}
\phi(x,\x)
    &=  i \mu(x) \xi - \del(x) \[ \sqrt{ \alpha^2(x) - (\beta(x) + i\xi)^2 } - \sqrt{\alpha^2(x)-\beta^2(x)} \], \label{eq:phi.NIG}
\end{align}
where $\mu,\del,\alpha,\beta \in C_b^\infty(\Rb)$, $\del,\alpha>0$, $\mu, \beta \in \Rb$, and
where there exist constants $c$ and $C$ such that $\del(x)>c$, $\alpha(x)-|\beta(x)|>c$ and
$|\mu(x)| \leq C$.  Note that if $X$ is a NIG-type process with symbol $\phi(x,\x)$, then $S=e^X$
is a martingale if and only if $\phi(x,-i)=0$.  Thus, the triple $(\alpha,\beta,\del)$ fixes
$\mu$.
\par
\cite{levendorskii} deduce the following asymptotic expansion for $u(t,x)$ (see the equations following (14.27) and equation (16.40)).
\begin{align}
u(t,x)
    &:= \Eb\[ h(X_T) | X_t = x \] \\
    &=   \int_\Rb d\xi \frac{1}{\sqrt{2\pi}} e^{i \x x} e^{(T-t )\phi(x,\x)}
            \( 1 + \frac{1}{2} (T-t)^2 [ i \d_x \phi(x,\x) ] [ \d_\xi \phi(x,\x) ] + \cdots \)\hh(\x), \label{eq:NIG.expand}
\end{align}
We note that, if one uses the Taylor series expansion of $\Ac$ as in Example \ref{example:Taylor} with $\xb
= x$, then
%$\phi(x,\x)=\phi_0(\x)$ and $\d_x \phi(x,\x) \d_\xi \phi(x,\x) =
%\phi_1(\x)\phi_0'(\x)$.  Thus, from Corollary \ref{thm:u} and equations \eqref{eq:u0.hat} and
%\eqref{eq:u1.hat}, it is easy to see that
expansion \eqref{eq:NIG.expand} is contained within $u_0+u_1$, the first order price approximation obtained in Theorem \ref{th:un_general_repres}.
\par
In our numerical experiment, we use the Taylor
series expansion from Example \ref{example:Taylor} with $\xb = X_t$.
We fix the following parameters
\begin{align}
\del(x)
    &=  \del_0 e^{2(\gam-1)x} , &
\gam
    &=  0.5 , &
\del_0
    &=  2.0 , &
\alpha
    &=  40 , &
\beta
    &=  -10 , &
X_t
    &=  0.0 , &
T-t
    &=  0.25 , \label{eq:nig.params}
\end{align}
and, using Theorem \ref{th:un_general_repres}, we compute the approximate prices $u^{(0)}(t,x;k)$ and
$u^{(3)}(t,x;k)$ of a series of European puts with strike prices $k = \log K \in [-0.3, 0.3]$ (we once again add the
parameter $k$ to the arguments of $u^{(n)}$ to emphasize the dependence of $u^{(n)}$ on the $\log$ strike
price $k$).  We also compute the exact price $u$ using Monte Carlo simulation.  After converting prices to implied volatilities we plot the results in Figure \ref{fig:iv-nig}.  We observe
a nearly exact match between the induced implied volatilities $\text{IV}[u^{(3)}(t,x;k)]$ and
$\text{IV}[u(t,x;k)]$.

%%%%%%%%%%%%%%%%%%%%%%%%%%%%%%%%%%%%%%%%%%%%%%%%%%%%%%%%%
%           Bonds Credit Spreads and Yields
%%%%%%%%%%%%%%%%%%%%%%%%%%%%%%%%%%%%%%%%%%%%%%%%%%%%%%%%%

\subsection{Yields and credit spreads in the JDCEV setting}
\label{sec:JDCEV} Consider a defaultable bond, written on $S$, that pays one dollar at time $T>t$
if no default occurs prior to maturity (i.e., $S_T>0$, $\zeta>T$) and pays zero dollars otherwise.
Then the time $t$ value of the bond is given by
\begin{align}
V_t
    &=  \Eb [ \Ib_{\{\zeta>T\}} | X_t ]
    =       \Ib_{\{\zeta>t\}} u(t,X_t;T), &
u(t,X_t;T)
    &=  \Eb [ e^{-\int_t^T \gam(s,X_s) ds}| X_t ].
\end{align}
We add the parameter $T$ to the arguments of $u$ to indicate dependence of $u$ on the maturity
date $T$.  Note that $u(t,x;T)$ is both the price of a bond and the \emph{conditional survival
probability}: $\mathbb{Q}( \zeta > T | X_t = x, \zeta > t )$.  The \emph{yield} $Y(t,x;T)$ of such
a bond, on the set $\{\zeta>t\}$, is defined as
\begin{align}
Y(t,x;T)
    &:= \frac{- \log u(t,x;T)}{T-t}. \label{eq:Y}
\end{align}
The \emph{credit spread} is defined as the yield minus the risk-free rate of interest. Obviously,
in the case of zero interest rates, we have: yield $=$ credit spread.
\par
In \citet*{JDCEV}, the authors introduce a class of unified credit-equity models known as
\emph{Jump to Default Constant Elasticity of Variance} or JDCEV.  Specifically, in the
time-homogeneous case, the underlying $S$ is described by \eqref{eq:dX} with
\begin{align}
a(x)
    &=  \frac{1}{2} \del^2 e^{2 \beta x}, &
\gam(x)
    &=  b + c \, \del^2 e^{2 \beta x}, &
\nu(x,dz)
    &=  0,
%\sig(x)
    %&= a e^{\beta x}, &
%\gam(x)
    %&= b + c \,a^2 e^{2 \beta x}, &
%\nu(x,dz)
    %&= 0,
\end{align}
where $\del>0$, $b \geq 0$, $c \geq 0 $.  We will restrict our attention to cases in which $\beta
< 0$.  From a financial perspective, this restriction makes sense, as it results in volatility and
default intensity \emph{increasing} as $S \to 0^+$, which is consistent with the leverage effect.
Note that when $c>0$, the asset $S$ may only go to zero via a jump from a strictly positive value.
That is, according to the Feller boundary classification for one-dimensional diffusions (see
\citet*{borodin}, p.14), the endpoint $-\infty$ is a \emph{natural boundary} for the killed
diffusion $X$ (i.e., the probability that $X$ reaches $-\infty$ in finite time is zero).  The
survival probability $u(t,x;T)$ in this setting is computed in \citet{carr}, equation (8.13).  We
have
\begin{align}
u(t,x;T)
%&= u(T-t,x) \\
    &=  \sum_{n=0}^\infty \bigg( e^{-(b+\om n)(T-t)}\frac{\Gam(1+c/|\beta|)\Gam(n+1/(2|\beta|))}{\Gam(\nu+1)\Gam(1/(2|\beta|))n!} \\ & \qquad
            \times A^{1/(2|\beta|)}e^x \exp \(- A e^{-2 \beta x} \) {}_1 F_1(1-n+c/|\beta|;\nu+1;Ae^{-2\beta x}) \bigg) \label{eq:u.exact}
\end{align}
where ${}_1 F_1$ is the Kummer confluent hypergeometric function, $\Gam(x)$ is a Gamma function and
\begin{align}
\nu
    &=  \frac{1 + 2 c}{2|\beta|}, &
A
    &=  \frac{b}{\del^2|\beta|}, &
\om
    &=  2 |\beta| b.
\end{align}
We compute $u(t,x;T)$ using both equation \eqref{eq:u.exact} (truncating the infinite series at
$n=70$) as well as using Theorem \ref{th:un_general_repres}.  We use the Taylor series expansion of $\Ac$
expansion of Example \ref{example:Taylor} with $\xb = X_t$.  After computing bond prices, we then
calculate the corresponding credit spreads using \eqref{eq:Y}.  Approximate spreads are denoted
\begin{align}
Y^{(n)}(t,x;T)
    &:= \frac{- \log u^{(n)}(t,x;T)}{T-t}.
\end{align}
The survival probabilities are and the corresponding yields are plotted in Figure
\ref{fig:survival}.  Values for the yields from Figure \ref{fig:survival} can also be found in
Table \ref{tab:creditspread}.
\begin{remark}
\label{rmk:jdcev}
To compute survival probabilities $u(t,x;T)$, one assumes a payoff function $h(x) = 1$ and obtains  %Note that
%the Fourier transform of a constant is simply a Dirac delta function:
%$\hh(\x)=\delta(\x)$.
$$
u(t,x;T)=\int_{\mathbb{R}}p(t,x;T,y) dy=\hat{p}(t,x;T,0).
$$
Thus, when computing survival probabilities and/or credit
spreads, no numerical integration is required.  Rather, one uses \eqref{eq:un_tilde}
%the following identity
%\begin{align}
%\int_\Rb \uh(\x) \d_\xi^n \del(\x) d\xi
%    &=  (-1)^n \d_\xi^n \uh(\xi)|_{\xi=0}.
%\end{align}
%to compute inverse Fourier transforms.  From the above identity and equations \eqref{eq:u1.hat} - \eqref{eq:u2.hat} one
and easily obtains
\begin{align}
u_0(t,x;T)
    &=  e^{-\(b+\del^2 c e^{2 x \beta }\) \tau}, \\
u_1(t,x;T)
    &=  e^{-\(b+\del^2 c e^{2 x \beta }\) \tau}
            \left( -\del^2 b c e^{2 x \beta } \tau^2 \beta
            +\frac{1}{2} \del^4 c e^{4 x \beta } \tau^2 \beta -\del^4 c^2 e^{4 x \beta } \tau^2 \beta \right), \\
u_2(t,x;T)
    &=  e^{-\(b+\del^2 c e^{2 x \beta }\) \tau}
            \Big(
            -\del^4 c e^{4 x \beta } \tau^2 \beta^2
            -\frac{2}{3} \del^2 b^2 c e^{2 x \beta } \tau^3 \beta^2
            +\del^4 b c e^{4 x \beta } \tau^3 \beta ^2
                        \\ &\qquad
            -2 \del^4 b c^2 e^{4 x \beta } \tau^3 \beta ^2
            -\frac{1}{3} \del^6 c e^{6 x \beta } \tau^3 \beta ^2
            +2 \del^6 c^2 e^{6 x \beta } \tau^3 \beta ^2
                        \\ &\qquad
            -\frac{4}{3} \del^6 c^3 e^{6 x \beta } \tau^3 \beta ^2
            +\frac{1}{2} \del^4 b^2 c^2 e^{4 x \beta } \tau^4 \beta^2
            -\frac{1}{2} \del^6 b c^2 e^{6 x \beta } \tau^4 \beta^2
            +\del^6 b c^3 e^{6 x \beta } t^4 \beta^2
                        \\ &\qquad
            +\frac{1}{8} \del^8 c^2 e^{8 x \beta } \tau^4 \beta^2
            -\frac{1}{2} \del^8 c^3 e^{8 x \beta } \tau^4 \beta^2
            +\frac{1}{2} \del^8 c^4 e^{8 x \beta } \tau^4 \beta^2
            \Big). \label{eq:survival}
\end{align}
where $\tau:=T-t$.  It is interesting to note that
\begin{align}
u^{(n)}(t,x;T)
    &=  \sum_{k=0}^n u_k(t,x;T)
    =       e^{-\(b+\del^2 c e^{2 x \beta }\) \tau} \( 1 + \Oc(\tau^2) \) ,
\end{align}
which guarantees that the $\d_\tau u^{(n)} |_{\tau=0} < 0$ (i.e., as $\tau$ increases from zero, the approximate survival probability decreases, as expected).
\end{remark}

%%%%%%%%%%%%%%%%%%%%%%%%%%%%%%%%%%%%%%%%%%%%%%%%%%%%%%%%%%%%%%%%%
%
%               Hertmite vs Taylor
%
%%%%%%%%%%%%%%%%%%%%%%%%%%%%%%%%%%%%%%%%%%%%%%%%%%%%%%%%%%%%%%%%%

\subsection{Hermite vs Taylor approximations}
\label{sec:hermite}
We are interested in comparing the relative accuracy of the Taylor series and Hermite polynomial approximations (examples \ref{example:Taylor} and \ref{example:Hilbert}).  To this end, we consider the Constant Elasticity of Variance (CEV) model of \cite{CoxCEV}.  The $\log$ dynamics are given by
\begin{align}
a(x)
    &=  \frac{1}{2} \del^2 e^{2(\beta-1)x}, &
\nu(x,dz)
    &=  0 , &
\gam(x)
    &=  0 , &
\beta
    &\in [0,1], \label{eq:CEV}
\end{align}
We consider two approximations for the variance function $a$ -- Taylor and Hermite.  We have
\begin{align}
\text{Taylor}:&&
a_\text{T}^{(n)}(x)
    &:=     \sum_{k=0}^n \frac{\d_x^k a(\xb)}{k!} (x - \xb)^k , \\
\text{Hermite}:&&
a_\text{H}^{(n)}(x)
    &:=     \sum_{k=0}^n \< a ,\Hv_k( \cdot - \xb ) \>_\Gamma \Hv_k(x - \xb) ,
\end{align}
Fix a maturity date $T$ and let $t<T$.  Denote by $u(t,x;K)$ the price at time $t<T$ of a call option with strike price $K$.  The exact call option price is given in \cite{CoxCEV}.  Denote by $u_\text{T}^{(n)}(t,x;K)$ the $n$th order approximation of a call price, as obtained using the Taylor series approximation of $a$.  Likewise, denote by $u_\text{H}^{(n)}(t,x;K)$ the $n$th order approximation of a call price, as obtained using the Hermite polynomial approximation of $a$.  In figure \ref{fig:hermite} we plot as a function of $\log$ moneyness $k := (\log K - x)$ the exact implied volatility
$\text{IV}[u(t,x;K)]$
as well as the Taylor and Hermite approximations of implied volatility
$\text{IV}[u_\text{T}^{(n)}(t,x;K)]$ and $\text{IV}[u_\text{H}^{(n)}(t,x;K)]$
for $n=\{0,1,2,3,4\}$.  We also plot, as a function of $x$ the exact diffusion coefficient
$a(x)$
as well as the Taylor and Hermite approximations of the diffusion coefficient
$a_\text{T}^{(n)}(x)$ and $a_\text{H}^{(n)}(x)$
for $n=\{0,1,2,3,4\}$.
It is clear from Figure \ref{fig:hermite} that the Taylor expansion $a_\text{T}^{(n)}(x)$ provides a more accurate approximation of $a(x)$ than the Hermite expansion $a_\text{H}^{(n)}(x)$ for every $n \leq 4$.  Not surprisingly, Figure \ref{fig:hermite} also shows that implied volatility induced by the Taylor expansion $\text{IV}[u_\text{T}^{(n)}(t,x;K)]$ provides a more accurate approximation of the exact implied volatility $\text{IV}[u(t,x;K)]$ than does  the Hermite approximation $\text{IV}[u_\text{H}^{(n)}(t,x;K)]$.  Though, for $n=4$, both approximations are remarkably accurate for $\log$ moneyness $k \in (-0.4,0.4)$.

%%%%%%%%%%%%%%%%%%%%%%%%%%%%%%%%%%%%%%%%%%%%%%%%%%%%%%%%%%%%%%%%%
%
%               Hertmite vs Taylor
%
%%%%%%%%%%%%%%%%%%%%%%%%%%%%%%%%%%%%%%%%%%%%%%%%%%%%%%%%%%%%%%%%%

\subsection{Accuracy: jumps vs no jumps}
\label{sec:jump-vs-nojump}
In this example, we examine (numerically) whether or not the addition of jumps affects the accuracy of our asymptotic approximation for Call prices.  To this end, we consider the CEV-like L\'evy-type process with Gaussian jumps, introduced in Section \ref{sec:CEV}.  We fix the following parameters:
\begin{align}
\del
    &= 0.20, &
\beta
    &= 0.5, &
m
    &= -0.1, &
\eta
    &=  0.2, &
S_0 = e^x
    &=  1 , &
T-t
        &=  0.5 .
\end{align}
We consider two scenarios: $\lam = 0$ (no jumps) and $\lam = 0.2$ (with jumps).  In each scenario we compute our third order approximation for Call prices $u^{(3)}(t,x;K)$ using the Taylor series approximation (Example \ref{example:Taylor}).  We also compute, in the case of no jumps, the exact call price using the formulas given in \cite{CoxCEV}.  In the case where the jump intensity $\lam$ is non-zero, we compute a 95\% confidence interval for call prices via Monte Carlo simulation.  Finally, call prices are converted to implied volatilities: $\text{IV}[u(t,x;K)]$.  The results are plotted in Figure \ref{fig:jump-vs-nojump}.

%%%%%%%%%%%%%%%%%%%%%%%%%%%%%%%%%%%%%%%%%%%%%%%%%%%
%
%                   Begin Proof
%
%%%%%%%%%%%%%%%%%%%%%%%%%%%%%%%%%%%%%%%%%%%%%%%%%%%

\section{Proof of Theorem \ref{t1}}
\label{sec:proof} %\proof
For sake of simplicity we only prove the assertion when {the default
intensity and mean jump size are zero $\g=m=0$, when the jump intensity and diffusion component
are time-independent $a(t,x)\equiv a(x)$, $\lam(t,x)\equiv \lam(x)$ and when the standard
deviation of the jumps is constant $\delta(x)\equiv\delta$.} Thus we consider the
integro-differential operator
\begin{align*}
L u(t,x)
    &=\p_t u(t,x)+\frac{a(x)}{2}(\p_{xx}-\p_{x})u(t,x) -\lam(x) \left(  e^{\frac{\delta^2}{2}}-1  \right)
  \p_x u(t,x) \\ &\qquad +\lambda(x)\int_{\R}\left( u(t,x+z)-u(t,x) \right)\nu_{\delta^2}(dz) ,
\end{align*}
with
\begin{align}
\nu_{\delta^{2}}(dz)
    &=\frac{1}{\sqrt{2\pi }\delta }e^{-\frac{z^2} {2 \delta^{2}}}dz.
\end{align}
We will give some details on how to extend the proof to the general case at the end of the section.
% i.e. default-free case and jumps with null expectation. For the general case the proof is totally analogous.
Our idea is to use our expansion as a \emph{parametrix}.  That is, our expansion will serve as the
starting point of the classical iterative method
introduced by \cite{Levi} to construct the fundamental solution $p(t,x;T,y)$ of $L$.
% modify and
%adapt the standard characterization of the fundamental solution given by the parametrix method
%originally . In particular,  for
%
%will be constructed starting from a {\it parametrix} function by means of an iterative argument
%and by suitably controlling the errors of the approximation.
Specifically, as in \cite{RigaPagliaraniPascucci}, we take as a parametrix our $N$-th order
approximation $p^{(N)}(t,x;T,y)$ with $\bar{x}=y$ in
\eqref{eq:taylor_coef_gaus}-\eqref{eq:taylor_coef_lev}. The case $\xb=x$ can be analogously proved
by using the backward parametrix approach (see \cite{pascucci-parametrix}). For sake of brevity we
skip the details for the latter case.

%$$\overline{x}=y+\left(  \frac{a(y)}{2} -\l(y)  e^{\frac{\delta^2}{2}}+\l(y)  \right)  (T-t).$$
%\begin{remark}\label{remerr1}
%Note that, at $0$-order we have $$ p_{y}^0(t,x;T,y)=\G^{a\left(y\right),\delta^2,\l}(t,x;T,y) $$
%%  $$\G^{a\left(y+\left(  \frac{a(y)}{2} -\l(y)  e^{\frac{\delta^2}{2}}+\l(y)  \right) (T-t)\right),\d^2,\l\left(y+\left(  \frac{a(y)}{2} -\l(y)  e^{\frac{\delta^2}{2}}+\l(y)  \right) (T-t)\right)}(t,x;T,y)$$
%according to the notation introduced in Section \ref{sec:estimates}. Thus, Lemmas \ref{lemestim1},
%\ref{lemestim2}, \ref{lemestim3}, \ref{lemestim5} and \ref{lemestim6} can be applied to
%$p_{y}^0(t,x;T,y)$.
%\end{remark}
%We first prove the case $N=1$.
By analogy with the classical approach (see, for instance,
\cite{Friedman} and \cite{DiFrancescoPascucci2}, \cite{Pascucci2011} for the pure diffusive case,
or \cite{GarroniMenaldi} for the
integro-differential case), we have %that $p$ takes the form
\begin{align}\label{eqbound2}
 p(t,x;T,y)=p^{(N)}(t,x;T,y)+\int_{t}^{T}\int_{\R}p^{(0)}(t,x;s,\x)\Phi(s,\x;T,y)d\x ds ,
\end{align}
where $\Phi$ is %the function in \eqref{eqbound1} below, which is
determined by imposing the condition %$L p\left(,\cdot,\cdot;T,y\right)=0$:
  $$0=L p(t,x;T,y)=L p^{(N)}(t,x;T,y)+\int_{t}^{T}\int_{\R}L p^{(0)}(t,x;s,\x)\Phi(s,\x;T,y)d\x ds
  -\Phi(t,x;T,y).$$
Equivalently, we have
  $$\Phi(t,x;T,y)=L p^{(N)}(t,x;T,y)+\int_{t}^{T}\int_{\R}L p^{(0)}(t,x;s,\x)\Phi(s,\x;T,y)d\x ds , $$
and therefore by iteration
  \begin{align}\label{eqbound1}
  \Phi(t,x;T,y)=\sum_{n=0}^{\infty}Z^{(N)}_{n}(t,x;T,y),
  \end{align}
where
  \begin{align}\label{eqbound6}
  Z^{(N)}_{0}(t,x;T,y) & :=L p^{(N)}(t,x;T,y), \\ \label{eqbound9}
  Z^{(N)}_{n+1}(t,x;T,y)& :=\int_{t}^{T}\int_{\R}L p^{(0)}(t,x;s,\x)Z^{(N)}_{n}(s,\x;T,y)d\x ds.
  \end{align}
The proof of Theorem \ref{t1} is based on several technical lemmas which we relegate to
Section \ref{sec:pointwise_estimates}. In particular, we will use such preliminary estimates to
provide pointwise bounds
for each of the terms $Z^{(N)}_{n}$ in \eqref{eqbound1}. %\blu{(I changed the previous equation reference.  Please double check that I have changed it correctly)}.
Finally, these bounds combined with formula \eqref{eqbound2} give the  estimate of $\left|p(t,x;T,y)-
p^{(N)}(t,x;T,y)\right|$.
\par
For any $\a,\theta>0$ and $\l\geq 0$, consider the integro-differential operators
  \begin{align}
  L^{\a,\theta,\l}u(t,x)&=\p_t u(t,x)+\frac{\a}{2}(\p_{xx}-\p_{x})u(t,x)-\l \left(  e^{\frac{\theta}{2}}-
  1  \right)\p_x u(t,x) +\l\int_{\R}\left( u(t,x+z)-u(t,x) \right) \nu_{\theta}(dz),\\
  \bar{L}^{\a,\theta,\l}u(t,x)&=\p_t u(t,x)+\frac{\a}{2}\p_{xx}u(t,x) +\l\int_{\R}\left( u(t,x+z)-u(t,x) \right) \nu_{\theta}(dz).
  \end{align}
%Here, $\nu_{\delta^{2}}$ is the Lévy measure of a compound Poisson process
%  $$
%  X_t=\sum_{n=0}^{N}Z_n,
%  $$
%where $N\sim \mathrm{Po}(\l t)$ and $Z_n\sim N(0,\delta)$, $n=1,2,\cdots$.\\
The function $\G^{\a,\theta,\l}(t,x;T,y):=\G^{\a,\theta,\l}(T-t,x-y)$ where
  \begin{align}
  \G^{\a,\theta,\l}(t,x)&:=e^{-\l t}\sum_{n=0}^{\infty}\frac{(\l t)^n}{n!} \G_n^{\a,\theta,\l}(t,x),\label{eq:proof_Gamma}\\
  \G_n^{\a,\theta,\l}(t,x)&:=\frac{1}{\sqrt{2\pi (\a t+n\theta)}}\exp\left(-\frac{\left(x-
  \left(\frac{\a}{2}+\l  e^{\frac{\theta}{2}}-\l  \right)t\right)^2}
  {2(\a t+n\theta)}\right) \label{eq:proof_Gamma_n},
  \end{align}
is the fundamental solution of $L^{\a,\theta,\l}$. Analogously, the function
$\bar{\G}^{\a,\theta,\l}(t,x;T,y):=\bar{\G}^{\a,\theta,\l}(T-t,x-y)$ where
  \begin{align}
  \bar{\G}^{\a,\theta,\l}(t,x)&:=e^{-\l t}\sum_{n=0}^{\infty}\frac{(\l t)^n}{n!} \bar{\G}_n^{\a,\theta}(t,x),\\
  \bar{\G}_n^{\a,\theta}(t,x)&:=\frac{1}{\sqrt{2\pi (\a t+n\theta)}}\exp\left(-\frac{x^2}
  {2(\a t+n\theta)}\right) \label{eq:proof_Gamma_n_bar},
  \end{align}
is the fundamental solution of $\bar{L}^{\a,\theta,\l}$. Note that under our assumptions, at order
zero, by \eqref{eq:p_0_gaus}-\eqref{eq:p_0_gaus_1} we have
\begin{align}\label{e31}
 p^{(0)}(t,x;T,y)=\G^{a\left(y\right),\delta^2,\lam(y)}(t,x;T,y).
\end{align}
We also recall the definition of convolution operator $\mathscr{C}_{\theta}$:
\begin{align}\label{eqestim1}
\mathscr{C}_{\theta}f(x)=\mathscr{C}^x_{0,\theta}f(x):=\int_{\R}f(x+z)\frac{1}{\sqrt{2\pi\theta}}e^{-\frac{z^2}{2\theta}}
dz.
\end{align}
Note that, for any $\theta>0$, we have
\begin{align}
 \mathscr{C}_{\theta} \G^{\a,\theta,\l}(t,\cdot)(x)&=e^{-\l t}\sum_{n=0}^{\infty}\frac{(\l
 t)^n}{n!}\G_{n+1}^{\a,\theta,\l}(t,x),\\ %\quad\forall\a,\t>0,\ \l\geq 0,\ z\in\R.
\mathscr{C}_{\theta} \bar{\G}^{\a,\theta,\l}(t,\cdot)(x)&=e^{-\l t}\sum_{n=0}^{\infty}\frac{(\l
 t)^n}{n!}\bar{\G}_{n+1}^{\a,\theta}(t,x),
\end{align}
with $\bar{\G}_{n}^{\a,\theta}$ and $\G_{n}^{\a,\theta,\l}$ as in \eqref{eq:proof_Gamma_n} and \eqref{eq:proof_Gamma_n_bar} respectively.
%  $$\G^{a\left(y+\left(  \frac{a(y)}{2} -\l(y)  e^{\frac{\delta^2}{2}}+\l(y)  \right) (T-t)\right),\d^2,\l\left(y+\left(  \frac{a(y)}{2} -\l(y)  e^{\frac{\delta^2}{2}}+\l(y)  \right) (T-t)\right)}(t,x;T,y)$$
%The rest of this section is devoted to prove some pointwise estimates of $\G^{\a,\delta,\l}$,
%$\mathscr{C}_{\delta} \G^{\a,\delta,\l}$ and their spatial derivatives, uniformly with respect
%to $(\a,\delta,\l)\in [m_{\a},M_{\a}]\times [m_{\delta},M_{\delta}]\times [0,M_{\l}]$, for
%some constants $M_{\a}\geq m_{\a}>0$, $M_{\delta}\geq m_{\delta}>0$ and $M_{\l}>0$. In order to
%shorten notation, we will consider in what follows $m_{\a}=m_{\delta}$ and
%$M_{\a}=M_{\delta}=M_{\l}$. Nevertheless, any following statement is still true in the general
%case.

%for some positive constants $m, M$.

\begin{proposition}\label{properr3}
For any $c>1$ and $\t >0$, there exists a positive constant $C$, only dependent on
$c,\t,M,N$, and $(\|a_{i}\|_{\infty},\|\lambda_{i}\|_{\infty} )_{i=1,\cdots,N+1}$, such that
\begin{align}\label{eqbound7}
\big|Z^{(N)}_n(t,x;T,y) \big| \leq \frac{C^{n+1}
{(T-t)^{\frac{\min{(1,N)}+n-1}{2}}}}{\sqrt{n!}}\left(1+\|\lambda_{1}\|_{\infty}\mathscr{C}_{cM}^{n+1}\right)\,
\bar{\G}^{cM,cM,cM}(t,x;T,y),
\end{align}
for any $n\in\mathbb{N}_0$, $x,y\in\R$ and $t,T\in\R$ with $0<T-t\le \t$.
%, and where
%  $$\kappa_{n}=\frac{C^n}{\G_{E}\left(\frac{n+2}{2}\right)}$$
%and
%$\G_{E}$ denotes the Euler Gamma function.
\end{proposition}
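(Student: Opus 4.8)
The plan is to prove Proposition~\ref{properr3} by induction on $n$ (with $N$ fixed), combining the algebraic structure of $Z^{(N)}_0=Lp^{(N)}$ with the pointwise estimates for the Gaussian and compound-Poisson type kernels collected in Section~\ref{sec:pointwise_estimates}. We work with the forward choice $\bar x=y$ in \eqref{eq:taylor_coef_gaus}--\eqref{eq:taylor_coef_lev}, so that $p^{(0)}(t,x;T,y)=\G^{a(y),\delta^2,\lam(y)}(t,x;T,y)$ by \eqref{e31}; the case $\bar x=x$ is handled identically after exchanging the roles of $x$ and $y$ (backward parametrix). Throughout we use that, since $c>1$, the majorant $\bar{\G}^{cM,cM,cM}$ dominates -- up to multiplicative constants and a mild enlargement of its parameters -- each of the kernels $\G^{a(y),\delta^2,\lam(y)}$, $\bar{\G}^{a(y),\delta^2}$ and each of their $\mathscr{C}_{cM}$-convolutions appearing below.

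\textbf{Base case} ($n=0$). Writing $\Ac^{[N]}=\sum_{k=0}^N\Ac_k$ for the truncated generator, the identities $(\partial_t+\Ac_0)p_0=0$ and $(\partial_t+\Ac_0)p_k=-\sum_{j=1}^k\Ac_jp_{k-j}$ for $1\le k\le N$ from \eqref{eq:v.0.pide}--\eqref{eq:v.n.pide}, summed over $k$ and re-indexed, give $(\partial_t+\Ac_0)p^{(N)}=-\sum_{k=1}^N\Ac_k\,p^{(N-k)}$, hence
\begin{align*}
Z^{(N)}_0=Lp^{(N)}=(\partial_t+\Ac)p^{(N)}=\big(\Ac-\Ac^{[N]}\big)p^{(N)}+\sum_{k=1}^N\Ac_k\big(p^{(N)}-p^{(N-k)}\big).
\end{align*}
In the first term the coefficients of $\Ac-\Ac^{[N]}$ are the $N$-th order Taylor remainders around $\bar x=y$ of $a$ and $\lam$, hence $O(|x-y|^{N+1})$ by Assumption~\ref{assumption:parab}(iii); in the second, $p^{(N)}-p^{(N-k)}=\sum_{j=N-k+1}^Np_j$ with each $p_j$ ($j\ge1$) of parabolic order $j$, a consequence of Theorem~\ref{th:un_general_repres}, Proposition~\ref{prop:G_n_gaus} and the kernel estimates of Section~\ref{sec:pointwise_estimates}. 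Inserting these and splitting every contribution into a \emph{local} part -- where the derivatives $\partial_x,\partial_{xx}$ cost $(T-t)^{-1/2},(T-t)^{-1}$ but the factors $|x-y|^{N+1}$, resp.\ $|x-y|^{j}$, over-compensate, producing the ``$1$'' in $1+\|\lam_1\|_\infty\mathscr{C}_{cM}$ and the power $\tfrac{\min(1,N)-1}{2}$ (the $\tfrac{-1}{2}$ occurring only when $N=0$, since there the diffusion remainder is merely $O(|x-y|)$) -- and a \emph{non-local} part -- where the jump operator provides no parabolic gain, so that this contribution carries the factor $\|\lam_1\|_\infty\mathscr{C}_{cM}$ and, once the convolution-enhanced bounds of Section~\ref{sec:pointwise_estimates} for $|x-y|^{j}$ against the compound-Poisson majorant are used, is capped at the exponent $0$ for every $N\ge1$ -- one obtains exactly \eqref{eqbound7} with $n=0$.

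\textbf{Inductive step}. Assume \eqref{eqbound7} for $Z^{(N)}_n$. By \eqref{eqbound9},
\begin{align*}
Z^{(N)}_{n+1}(t,x;T,y)=\int_t^T\!\!\int_\R Lp^{(0)}(t,x;s,\x)\,Z^{(N)}_n(s,\x;T,y)\,d\x\,ds .
\end{align*}
For $Lp^{(0)}$ we invoke the base case with $N=0$, i.e.\ $|Lp^{(0)}(t,x;s,\x)|\le C(s-t)^{-1/2}(1+\|\lam_1\|_\infty\mathscr{C}_{cM})\bar{\G}^{cM,cM,cM}(t,x;s,\x)$, and for $Z^{(N)}_n$ the inductive hypothesis; the estimate then reduces to bounding a potential of the form $\int_t^T(s-t)^{-1/2}(T-s)^{\frac{\min(1,N)+n-1}{2}}\big(\int_\R\mathscr{C}_{cM}^{a}\bar{\G}^{cM,cM,cM}(t,x;s,\x)\,\mathscr{C}_{cM}^{b}\bar{\G}^{cM,cM,cM}(s,\x;T,y)\,d\x\big)\,ds$ with $0\le a,b\le n+1$. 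The inner integral is controlled by a Chapman--Kolmogorov / reproduction inequality for the family $\bar{\G}^{cM,cM,cM}$ and its convolution powers (Section~\ref{sec:pointwise_estimates}): it is dominated by $C\,\mathscr{C}_{cM}^{a+b}\bar{\G}^{c'M,c'M,c'M}(t,x;T,y)$, which we re-absorb into $\mathscr{C}_{cM}^{n+2}\bar{\G}^{cM,cM,cM}(t,x;T,y)$ at the cost of enlarging $C$. The surviving time integral equals $(T-t)^{\frac{\min(1,N)+n}{2}}B\!\big(\tfrac12,\tfrac{\min(1,N)+n+1}{2}\big)$, and the standard bound $B(\tfrac12,\tfrac{k+1}{2})\le C/\sqrt{k+1}$ turns $1/\sqrt{n!}$ into $1/\sqrt{(n+1)!}$, while the powers of $C$ collect into $C^{n+2}$. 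This is precisely \eqref{eqbound7} at level $n+1$, closing the induction; substituting the resulting bound into \eqref{eqbound2}--\eqref{eqbound1} and summing the series then yields Theorem~\ref{t1}.

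\textbf{Main obstacle}. Unlike the purely diffusive case treated in \cite{RigaPagliaraniPascucci}, where all kernels are Gaussian and the reproduction property of the parametrix is classical, here the leading kernel $p^{(0)}$ is a compound-Poisson transition density (a Poisson mixture of Gaussians with shifted means and inflated variances) and $L$ carries a genuinely non-local term. The crux of the proof is thus the collection of pointwise lemmas in Section~\ref{sec:pointwise_estimates}: (i) estimates for $|x-y|^{j}\mathscr{C}_{cM}^{k}\bar{\G}^{cM,cM,cM}$ which, because the compound-Poisson majorant has heavier-than-Gaussian tails, cannot be reduced to a single power of $T-t$ but only to a finite combination of time factors against raised convolution powers -- this is the mechanism that prevents the error from improving beyond order $T-t$ once $N\ge1$ and that forces the separate factor $\|\lam_1\|_\infty$ (namely $\|\partial_x\nu\|_\infty$ in Theorem~\ref{t1}); (ii) a proof that the convolution powers remain uniformly dominated by $\bar{\G}^{cM,cM,cM}$ for $c>1$, for which the explicit identities for $\mathscr{C}_\theta\bar{\G}^{\a,\theta,\l}$ recorded before Proposition~\ref{properr3} -- showing that a convolution merely shifts the Poisson index -- are essential; and (iii) a Chapman--Kolmogorov reproduction inequality robust to the presence of these convolutions and to the frozen-coefficient mismatch $a(\x),\lam(\x)$ versus the majorant parameters. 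Once these lemmas are in place, the induction above is essentially bookkeeping.
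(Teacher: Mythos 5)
Your overall strategy is the paper's: the same parametrix iteration, an induction on $n$ whose base case rests on rewriting $Lp^{(N)}$ through Taylor remainders (your identity $Lp^{(N)}=(\Ac-\Ac^{[N]})p^{(N)}+\sum_{k=1}^{N}\Ac_k\big(p^{(N)}-p^{(N-k)}\big)$ is an algebraically equivalent regrouping of the paper's identity \eqref{eq:identity_Lp}, $Lp^{(N)}=\sum_{n=0}^{N}(L-L_n)p_{N-n}$), the same bound \eqref{eqbound8} for $Lp^{(0)}$, the semigroup property \eqref{eqbound20}, and a Beta-function time integral to produce the factorial gain. The base case for general $N$ is left at the same level of sketchiness as in the paper itself (both defer to extensions of Propositions \ref{properr2} and \ref{properr4} to higher order), so that is not a differential weakness.

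There is, however, one concrete flaw in your inductive step. You lump both parts of the bound for $Lp^{(0)}$ into a single $(s-t)^{-1/2}$-singular potential and then assert that the resulting convolution powers $\mathscr{C}_{cM}^{a+b}$ can be ``re-absorbed into $\mathscr{C}_{cM}^{n+2}$ at the cost of enlarging $C$''. The only tool for raising a convolution power is Lemma \ref{lemestim9}, whose cost is $\sqrt{k+1}$ with $k$ the gap; for the cross term $\|\lam_{1}\|_{\infty}\mathscr{C}_{cM}\bar{\G}^{cM,cM,cM}$ (jump part of $Lp^{(0)}$ against the ``$1$'' of the inductive hypothesis) the gap is $n+1$, so the cost is $\sqrt{n+2}$, not an $n$-independent constant. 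Since the Beta integral supplies exactly the $1/\sqrt{n+1}$ that turns $1/\sqrt{n!}$ into $1/\sqrt{(n+1)!}$, an extra $\sqrt{n+2}$ per step cancels the factorial decay: the prefactor degenerates to an essentially geometric $C^{n}$, and the series \eqref{eqbound1} would then converge only on a small time horizon, not for all $0<T-t\le \t$ as the proposition requires. The paper avoids this precisely by \emph{not} lumping: the jump part of \eqref{eqbound8} has no $(s-t)^{-1/2}$ singularity, its time integral yields $(T-t)^{\frac{n+2}{2}}/(n+2)$ (see \eqref{eqbound19}), and this extra half power of $T-t$ together with the $1/(n+2)$ pays for the $\sqrt{n+2}$ from Lemma \ref{lemestim9}. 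Relatedly, your ``main obstacle'' item (ii) is off: the convolution powers are \emph{not} uniformly dominated by $\bar{\G}^{cM,cM,cM}$ (this is exactly why $\widetilde{\G}$ appears in Theorem \ref{t1}); they must be carried along and raised carefully, which is where the bookkeeping above is essential.
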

The proof of Proposition \ref{properr3} is postponed to Section \ref{sec:proof_estimates_Z_N}.
We are now in position to prove Theorem \ref{t1}. %for $N=1$.
Indeed, by equations \eqref{eqbound2},
\eqref{eqbound1} and Proposition \ref{properr3} we have
\begin{align*}
 &\big|p(t,x;T,y)-p^{(N)}(t,x;T,y)\big|&\\
 &\leq \sum_{n=0}^{\infty}
 \frac{C^{n+1}}{\sqrt{n!}} \int_t^T {(T-s)^{\frac{\min{(1,N)}+n-1}{2}}} \int_{\R}
 p^{(0)}(t,x;s,\x)\  \left(1+\|\lam_{1}\|_{\infty} \mathscr{C}_{cM}^{n+1}\right)  \bar{\G}^{cM,cM,cM}(s,\x;T,y)  d\x ds&\\
%
%&+  \|\lam_{1}\|_{\infty}  \sum_{n=0}^{\infty}
% \frac{C^{n+1}}{\sqrt{n!}} \int_t^T (T-s)^{\frac{n}{2}} \int_{\R}
% p^{(0)}(t,x;s,\x)\ \mathscr{C}_{cM}^{n+2} \bar{\G}^{cM,cM,cM}(s,\x;T,y)  d\x ds
\intertext{(and by Lemma \ref{lemestim1} with $\eta=0$% and $N=1$ respectively
)} &\leq \sum_{n=0}^{\infty}
 \frac{C^{n+1}}{\sqrt{n!}} \int_t^T {(T-s)^{\frac{\min{(1,N)}+n-1}{2}}} \int_{\R}
 \bar{\G}^{cM,cM,cM}(t,x;s,\x)\  \left(1+\|\lam_{1}\|_{\infty} \mathscr{C}_{cM}^{n+1}\right)  \bar{\G}^{cM,cM,cM}(s,\x;T,y)  d\x ds&.%\\
%&\hspace{-24pt}+  \|\lam_{1}\|_{\infty}  \sum_{n=0}^{\infty}
% \frac{C^{n+1}}{\sqrt{n!}} \int_t^T (T-s)^{\frac{n}{2}} \int_{\R}
% \bar{\G}^{cM,cM,cM}(t,x;s,\x)\ \mathscr{C}_{cM}^{n+2} \bar{\G}^{cM,cM,cM}(s,\x;T,y)  d\x ds
\end{align*}
Now, by the semigroup property
\begin{align}\label{eqbound20}
 \int_{\R} \mathscr{C}_{\theta}^{k}\bar{\G}^{\alpha,\theta,\l}(t,x;s,\x)
 \mathscr{C}_{\theta}^{N}\bar{\G}^{\alpha,\theta,\l}(s,\x;T,y)\, d\x
 =\mathscr{C}_{\theta}^{k+N}\bar{\G}^{\alpha,\theta,\l}(t,x;T,y),\qquad  k,N\in \mathbb{N}_0,
\end{align}
we get
\begin{align*}
 \big|p(t,x;T,y)-p^{(N)}(t,x;T,y)\big| & \leq 2\,{(T-s)^{\frac{\min{(1,N)}+1}{2}}}
 \left( \sum_{n=0}^{\infty} \frac{C^{n+1}
 (T-t)^{\frac{n}{2}}}{\sqrt{n!}} \left(1+\|\lam_{1}\|_{\infty} \mathscr{C}_{cM}^{n+1}\right)\bar{\G}^{cM,cM,cM}(t,x;T,y) \right),
\end{align*}
for any $x,y\in\R$ and $t,T\in\R$ with $0<T-t\le \t$, and since $$ \sum_{n=0}^{\infty} \frac{C^{n+1}
 (T-t)^{\frac{n}{2}}}{\sqrt{n!}}\,  \mathscr{C}_{cM}^{n+1}\bar{\G}^{cM,cM,cM}(t,x;T,y)
$$ can be easily checked to be convergent, this concludes the proof of Theorem \ref{t1}. %for $N=1$.

We conclude this section with a brief discussion on how to drop the additional hypothesis on the coefficients introduced at the beginning of the section. In order to include state-dependency in the standard deviation of the jumps, i.e. $\delta=\delta(x)$, no modification is required in the first part of the proof since all the preliminary lemmas in Section \ref{sec:pointwise_estimates} naturally apply to the general case. On the other hand, the proof of Proposition \ref{properr3} requires some simple modifications to account for the additional terms in the expansion introduced by the state dependency of the convolution operator (see Proposition \ref{prop:G_n_gaus}).
To extend the proof to non-null mean of the jumps, i.e. $m=m(x)\neq 0$, it is enough to extend Lemmas \ref{lemestim1}-\ref{lemestim9} to the more general functions such as
  \begin{align}
  \G^{\a,m,\theta,\l}(t,x)&:=e^{-\l t}\sum_{n=0}^{\infty}\frac{(\l t)^n}{n!} \G_n^{\a,m,\theta,\l}(t,x),\label{eq:proof_Gamma_bis}\\
  \G_n^{\a,m,\theta,\l}(t,x)&:=\frac{1}{\sqrt{2\pi (\a t+n\theta)}}\exp\left(-\frac{\left(x+nm-
  \left(\frac{\a}{2}+\l  e^{\frac{\theta}{2}}-\l  \right)t\right)^2}
  {2(\a t+n\theta)}\right) \label{eq:proof_Gamma_n_bis}.
  \end{align}
As for the time-dependency of the coefficients $a(t,x)$ and $\gamma(t,x)$, the proof easily follows by the regularity hypothesis iii) in Assumption \ref{assumption:parab}.

\subsection{Proof of Proposition \ref{properr3}}\label{sec:proof_estimates_Z_N}

The proof of Proposition \ref{properr3} is based on the two following propositions.
\begin{proposition}\label{properr2}
For any $c>1$ and $\t >0$, there exists a positive constant $C$, only dependent on
$c,\t,M,\|\lambda_{1}\|_{\infty}$ and $\|a_{1}\|_{\infty}$, such that
\begin{align}\label{eqbound5}
\left|(x-y)^{2-n}(\partial_{xx}-\partial_{x})p_{n}(t,x;T,y)\right| \leq
C(1+\|\lambda_{1}\|_{\infty}\mathscr{C}_{cM})\ \bar{\G}^{cM,cM,cM}(t,x;T,y),
\end{align}
for any {$n \in \{0,1\}$}, $x,y\in\R$ and $t,T\in\R$ with $0<T-t\le \t$.
\end{proposition}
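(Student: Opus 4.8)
The plan is to reduce \eqref{eqbound5} to an explicit computation on the compound--Poisson--Gaussian kernel $p_0$ (and on $p_1$, expressed through $p_0$), and then to absorb the outcome into $\bar{\G}^{cM,cM,cM}$ and $\mathscr{C}_{cM}\bar{\G}^{cM,cM,cM}$ by means of the pointwise comparison estimates of Section~\ref{sec:pointwise_estimates}. Recall first from \eqref{e31} that, in the forward parametrix ($\xb=y$), $p_0(t,x;T,y)=\G^{a(y),\delta^2,\lam(y)}(t,x;T,y)$ is the explicit mixture \eqref{eq:p_0_gaus}--\eqref{eq:p_0_gaus_1}, whose $k$-th summand $p_{0,k}(t,x;T,y)$ is a Gaussian density in the centred variable $w:=x-y-\mu(y)(T-t)$ with variance $V_k:=a(y)(T-t)+k\delta^2$; here $|\mu(y)|$ is bounded by a constant depending only on $M$ and, by Assumption~\ref{assumption:parab}, $M^{-1}\big((T-t)+k\big)\le V_k\le M\big((T-t)+k\big)$. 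For $p_1$ I would use the representation $p_1(t,x;T,y)=\int_t^T\Gc^x_1(t,s)\,p_0(t,x;T,y)\,ds$ from Theorem~\ref{th:un_general_repres} together with the factorization $\Gc^x_1(t,s)=\big(\Mc^x(t,s)-y\big)\Ac_1(s)$ of Proposition~\ref{prop:G_n_gaus}; in the reduced setting of this section $\Ac_1(s)=a_1(\partial_{xx}-\partial_x)+\lam_1\big[(\mathscr{C}^x_{0,\delta^2}-1)-(e^{\delta^2/2}-1)\partial_x\big]$ with $a_1=\partial_x a(y)$ and $\lam_1=\partial_x\lam(y)$, while $\Mc^x(t,s)-y$ is the first--order operator from \eqref{eq:M_gaus} (with $m_0=0$), every coefficient of which carries a factor $(s-t)$ except the leading multiplication by $(x-y)$. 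Using the elementary identities of Remark~\ref{remand2} --- $\partial_x$ acting on a summand produces the linear factor $-w/V_k$, and $\mathscr{C}^x_{0,\delta^2}$ shifts the Poisson index $k\mapsto k+1$ --- one checks that $(\partial_{xx}-\partial_x)p_0$ and $(\partial_{xx}-\partial_x)p_1$ are \emph{finite} sums of terms, each equal to a Poisson weight times $(T-t)^{j}(x-y)^{i}V_k^{-r}\,p_{0,k}$ (or the same with $p_{0,k+1}$), whose scalar coefficient contains at most one power of $\|a_1\|_\infty$ or $\|\lam_1\|_\infty$; the index--shifted terms are precisely the ones carrying $\|\lam_1\|_\infty$, plus the single shift produced by the convolution part of $\Mc^x$.

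The heart of the argument is to verify that, after multiplication by the prefactor $(x-y)^{2-n}$, each such term is jointly non-negatively homogeneous under the parabolic scaling $(x-y)\sim(T-t)^{1/2}$; this is exactly where the \emph{local} accuracy of the Taylor parametrix is used. Two mechanisms carry this out. First, excess powers of $(x-y)$ are absorbed through $|x-y|^{q}p_{0,k}\le C\,V_k^{q/2}\,\widetilde{p}_{0,k}$, where $\widetilde{p}_{0,k}$ denotes the Gaussian of variance $\rho V_k$ for a fixed $\rho\in(1,c)$; this uses only $a(y),\delta^2\le M$. Second, every spurious negative power $V_k^{-1}$ created by a $\partial_x$ inside $\Ac_1$ or $\Mc^x$ is multiplied either by a $(x-y)^{2}$ coming from the prefactor (absorbed as above) or by a factor $(T-t)$ coming from the $\int_t^T ds$ and from the $O(s-t)$ coefficients of $\Mc^x$, and is then killed via $T-t\le M\,V_k$. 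Since $n\le1$ this is a short, explicit list of monomials: for instance, for $n=0$ the worst term $(x-y)^2\,w^2V_k^{-2}p_{0,k}$ is exactly balanced into $C\,\widetilde{p}_{0,k}$, while $(x-y)^2\,w\,V_k^{-1}p_{0,k}$ leaves only the harmless factor $V_k^{1/2}\le\sqrt{M}\,(\sqrt{T-t}+\sqrt{k})$. After this step the right--hand side has become a Poisson--weighted sum of terms of the form (polynomial in $k$)$\cdot\widetilde{p}_{0,k}$ or (polynomial in $k$)$\cdot\widetilde{p}_{0,k+1}$, all with parameters $\le cM$.

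It then remains to dominate these mixtures by $\bar{\G}^{cM,cM,cM}$ and $\mathscr{C}_{cM}\bar{\G}^{cM,cM,cM}$, which is precisely what the preliminary lemmas of Section~\ref{sec:pointwise_estimates} (such as Lemma~\ref{lemestim1}) supply. Indeed, a Gaussian of variance $\rho V_k$ is bounded pointwise by $\sqrt{cM^2/\rho}$ times the Gaussian of variance $cM\big((T-t)+k\big)$ --- a Gaussian of smaller variance being always dominated by a constant times one of larger variance, with the ratio of variances bounded by $cM^2/\rho$ uniformly in $k$ --- while the Poisson weight of intensity $\lam(y)\le M<cM$ is dominated term by term by that of intensity $cM$ up to a factor $(\lam(y)/cM)^{k}\le c^{-k}$, which, being exponentially small in $k$, absorbs the polynomial factors in $k$ left over from the previous step; an index shift $k\mapsto k+1$ turns $\bar{\G}^{cM,cM,cM}$ into $\mathscr{C}_{cM}\bar{\G}^{cM,cM,cM}$ by the semigroup identity \eqref{eqbound20}. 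Gathering the $\mathscr{C}$--free contributions (coefficient $\le C\|a_1\|_\infty$, absorbed into $C$) and the index--shifted ones (coefficient $\le C\|\lam_1\|_\infty$) gives \eqref{eqbound5}, with $C$ depending only on $c,\t,M,\|a_1\|_\infty,\|\lam_1\|_\infty$. The case $\xb=x$ is obtained verbatim after exchanging the roles of $x$ and $y$ and using the backward analogue of \eqref{e31}.

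I expect the homogeneity bookkeeping of the second paragraph to be the main obstacle: one has to trace the origin of every factor $V_k^{-1}$ and pair it with a compensating $(x-y)^2$ or $(T-t)$, and although each individual monomial is elementary, organizing this accounting so that it remains transparent --- and, ultimately, generalizes to arbitrary $n$ as needed in Proposition~\ref{properr3} --- is the delicate part of the proof.
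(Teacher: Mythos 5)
Your route is in substance the paper's own: you use the representation of $p_1$ coming from Theorem \ref{th:un_general_repres} and Proposition \ref{prop:G_n_gaus} (in the reduced setting, $p_1=\big((T-t)(x-y)+\tfrac{(T-t)^2}{2}J\big)\Ac_1p^{(0)}$ with $J=a_0(2\partial_x-1)-\lam_0\big(e^{\delta^2/2}-1+\delta^2\partial_x\mathscr{C}_{\delta^2}\big)$), and your summand-level Gaussian manipulations are hands-on re-derivations of the pointwise estimates of Section \ref{sec:pointwise_estimates} (Lemmas \ref{lemestim1}, \ref{lemestim2}, \ref{lemestim3}, \ref{lemestim8}). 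The problem is in the bookkeeping of the convolved (index-shifted) contributions, and it hits exactly the point that makes \eqref{eqbound5} sharp. The convolution part of $\Mc^x(t,s)$ carries the coefficient $\lam_0\,(s-t)$, so after the time integration it enters through $\tfrac{(T-t)^2}{2}\,\lam_0\,\delta^2\,\partial_x\mathscr{C}_{\delta^2}$ inside $J$; composed with the $a_1(\partial_{xx}-\partial_x)$ part of $\Ac_1$, this produces index-shifted terms whose scalar coefficient is of size $\lam_0\|a_1\|_{\infty}$, \emph{not} $\|\lam_1\|_{\infty}$. Your final gathering assigns every index-shifted term a coefficient $\le C\|\lam_1\|_{\infty}$, which fails when $\lam_1\equiv 0$ but $\lam_0>0$ and $a_1\neq 0$; and you cannot repair this by dominating $\mathscr{C}_{cM}\bar{\G}^{cM,cM,cM}$ with $C\,\bar{\G}^{cM,cM,cM}$, because off the diagonal the convolved kernel tends to a positive constant as $T-t\to 0$ while $\bar{\G}$ vanishes. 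As written, your argument therefore only yields the weaker bound $C(1+\mathscr{C}_{cM})\bar{\G}^{cM,cM,cM}$, which would destroy the $\|\partial_x\nu\|_{\infty}\widetilde{\G}$ structure of Theorem \ref{t1}.

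The missing ingredient is the reabsorption of convolutions that come with a prefactor $\lam_0(T-t)$: this is precisely the paper's Lemma \ref{lemestim5}, $\l\, t\,\mathscr{C}_{\theta}^{\eta}\G^{\a,\theta,\l}(t,x)\le C\,\G^{\a,2(\eta+1)\theta,\l}(t,x)$, which is why in the paper the term $\tfrac{(T-t)^2}{2}\,(x-y)\,J(\partial_{xx}-\partial_x)a_1(\partial_{xx}-\partial_x)p^{(0)}$ is bounded by $\G$'s with enlarged jump variance and no external convolution. Within your own scheme the same effect is obtained from the reweighting $\lam_0 t\,\tfrac{(\lam_0 t)^k}{k!}=(k+1)\tfrac{(\lam_0 t)^{k+1}}{(k+1)!}$, which turns the shifted, $\lam_0 t$-weighted contribution back into an unshifted Poisson mixture with a polynomial factor in $k$ that your final $c^{-k}$ domination already absorbs; but you must actually perform this step rather than classify these terms as carrying $\|\lam_1\|_{\infty}$. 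With that correction the only surviving convolution terms are those generated by the $\lam_1(\mathscr{C}_{\delta^2}-1)$ part of $\Ac_1$, and \eqref{eqbound5} follows. Two minor points: the identity converting an index shift into the operator $\mathscr{C}_{cM}$ is the convolution identity stated after \eqref{eqestim1}, not the semigroup property \eqref{eqbound20}; and the case $\xb=x$ is not a verbatim exchange of $x$ and $y$ --- the paper treats it via the backward parametrix and omits the details.
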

\begin{proof}
Recalling the expression of $p_{0}(t,x;T,y)\equiv p^{(0)}(t,x;T,y)$ in \eqref{e31}, the case
$n=0$ directly follows from Lemmas \ref{lemestim3}, \ref{lemestim2} %, \ref{lemestim6}
and \ref{lemestim1} with $\eta=0$.
%by the property
%\begin{align}\label{eqbound4}
%\frac{1}{(\a(T-t)+n\delta^2)^{k/2}}\leq\frac{1}{(m(T-t))^{k/2}}\quad \forall k\geq 1,
%\end{align}
%and eventually by Lemmas \ref{lemestim5} and \ref{lemestim1}.

For the case $n=1$ we first observe that, by Theorem \ref{th:un_general_repres} along with
Proposition \ref{prop:G_n_gaus}, the function
$p_1(t,x;T,y)$ %in \eqref{eq:sol_fond.expand}
takes the form
%solution of \eqref{eq:v.n.pide} with $n=1$, $\Ac_{1}$ as in \eqref{e41}-\eqref{e8} and $h=\delta_{y}$,
\begin{align}
 p_{1}(t,x;T,y)=\left( (T-t)(x-y)+\frac{(T-t)^2}{2}J \right)\Ac_1 p^{(0)}(t,x;T,y),
\end{align}
where $J$ is the operator
\begin{align}\label{eqbound14}
J=a_0 (2\partial_x -1)  - \lambda_0 \left( e^{\frac{\delta^2}{2}}-1+\delta^2\partial_x
\mathscr{C}_{\delta^2} \right) ,
\end{align}
whereas $\Ac_1$ acts as
\begin{align}
 \Ac_1 %&=a_1(\partial_{xx}-\partial_{x})u(x)-\lambda_1\left(\partial_x u(x)
% \int_{\R}(e^z-1)\nu_{\delta^2}(dz)- \int_{\R}\left(u(x+z)-u(x)\right)\nu_{\delta^2}(dz)  \right)\\
 &=
 a_1(\partial_{xx}-\partial_{x})-\lambda_1\left(\left( e^{\frac{\delta^2}{2}}-1
 \right)\partial_x - \mathscr{C}_{\delta^2}+1  \right),
\end{align}
and $\mathscr{C}_{\delta^2}$ is the convolution operator defined in \eqref{eqestim1}.
Therefore, we have %\blu{((I think it is confusing to have
%$v^{(i)}$ and $p^{(i)}$ in the same equation.  I would recommend using only $p^{(1)}$.  If you
%refer to equations in the main text with $v^{(1)}$, you can just add ``with $v^{(1)}\to
%p^{(1)}$''.))} \red{No, $v^{(n)}$ was wrong. What acually mean $v_{n}$ that is different from
%$v^{(n)}=p^{(n)}$. Now it should be ok..}
\begin{align*}
(x-y)(\partial_{xx}-\partial_{x})v_{1}(t,x;T,y)&=(T-t)(x-y)\left((x-y)(\partial_{xx}-\partial_{x})+2\partial_x
-1   \right)\Ac_1 p^{(0)}(t,x;T,y)\\ &\quad + \frac{(T-t)^2}{2}(x-y)J
(\partial_{xx}-\partial_{x})\Ac_1 p^{(0)}(t,x;T,y),
\end{align*}
%with $J$ as in \eqref{eqbound14} and
In the computations that follow below, in order to shorten notation, we omit the dependence of
$t,x,T,y$ in any function. By the commutative property of the operators $\partial_x$ and
$\mathscr{C}$, and by applying Lemmas \ref{lemestim3}, \ref{lemestim5} and \ref{lemestim2} with
$\eta=1$, there exists a positive constant $C_1$ only dependent on $c,\t,M,\|\lambda_{1}\|_{\infty}$
and $\|a_{1}\|_{\infty}$ such that
\begin{align}
|(T-t)(x-y)\left((x-y)(\partial_{xx}-\partial_{x})+2\partial_x -1
\right)a_1(\partial_{xx}-\partial_{x}) p^{(0)}|&\leq C_1 \G^{c a(y), c\delta^2,\lambda(y)},\\
\left|(T-t)(x-y)\left((x-y)(\partial_{xx}-\partial_{x})+2\partial_x -1   \right)
\lambda_1\left(\left( e^{\frac{\delta^2}{2}}-1 \right)\partial_x +1  \right) p^{(0)}\right|&\leq
C_1 \left(\G^{c a(y), c\delta^2,\lambda(y)}+\G^{c a(y),4 c\delta^2,\lambda(y)}\right),\\
\frac{(T-t)^2}{2}|(x-y)J (\partial_{xx}-\partial_{x})a_1(\partial_{xx}-\partial_{x}) p^{(0)}|&\leq
C_1 \left(\G^{c a(y), c\delta^2,\lambda(y)}+ \G^{c a(y),4 c\delta^2,\lambda(y)}\right),\\
\label{eqbound16} \frac{(T-t)^2}{2}\left|(x-y)J (\partial_{xx}-\partial_{x})\lambda_1\left(\left(
e^{\frac{\delta^2}{2}}-1 \right)\partial_x +1  \right) p^{(0)}\right|&\leq C_1 \left(\G^{c a(y),
c\delta^2,\lambda(y)}+\G^{c a(y),4 c\delta^2,\lambda(y)}\right),
\end{align}
for any $x,y\in\R$ and $t,T\in\R$ with $0<T-t\le \t$. Analogously, by the commutative property of
$\partial_x$ and $\mathscr{C}$, and by applying Lemmas \ref{lemestim2}, \ref{lemestim3},
\ref{lemestim8} and \ref{lemestim7} with $\eta=2$, there exists a positive constant $C_2$ only
dependent on $c,\t,M,\|\lambda_{1}\|_{\infty}$ and $\|a_{1}\|_{\infty}$ such that
\begin{align}
|(T-t)(x-y)\left((x-y)(\partial_{xx}-\partial_{x})+2\partial_x -1   \right)\lambda_1
\mathscr{C}_{\delta^2} p^{(0)}|&\|\lambda_{1}\|_{\infty}\leq C_2 \,
\left(\mathscr{C}_{c\delta^2}\G^{c a(y), c\delta^2,\lambda(y)}+\mathscr{C}_{4c\delta^2}\G^{c a(y),
4c\delta^2,\lambda(y)}\right),\\ \label{eqbound17} \frac{(T-t)^2}{2}|(x-y)J
(\partial_{xx}-\partial_{x})\lambda_1 \mathscr{C}_{\delta^2} p^{(0)}|&\leq
\|\lambda_{1}\|_{\infty}\, C_2 \left(\mathscr{C}_{c\delta^2}\G^{c a(y),
c\delta^2,\lambda(y)}+\mathscr{C}_{4c\delta^2}\G^{c a(y), 4c\delta^2,\lambda(y)}\right),
\end{align}
for any $x,y\in\R$ and $t,T\in\R$ with $0<T-t\le \t$. Then, \eqref{eqbound5} follows from
\eqref{eqbound16} and \eqref{eqbound17} by applying Lemma \ref{lemestim1} with $\eta=0$ and $\eta=1$
respectively. %, as $m\leq a(y),16 \delta^2\leq M$ and $0\leq \lambda(y)\leq M$ for any $y\in\R$.
\end{proof}

\begin{proposition}\label{properr4}
For any $c>1$ and $\t >0$, there exists a positive constant $C$, only dependent on
$c,\t,M,\|\lambda_{1}\|_{\infty}$ and $\|a_{1}\|_{\infty}$, such that
\begin{align}\label{eqbound18}
\left|(x-y)^{2-n}\left(\left( e^{\frac{\delta^2}{2}}-1
\right)\partial_x+\mathscr{C}_{\delta^2}-1\right)p_{n}(t,x;T,y)\right| \leq
C(1+\mathscr{C}_{cM})\, \bar{\G}^{cM,cM,cM}(t,x;T,y),
\end{align}
for any $n \in \{0,1\}$, $x,y\in\R$ and $t,T\in\R$ with $0<T-t\le \t$.
\end{proposition}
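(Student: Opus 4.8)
The plan is to mimic, almost verbatim, the proof of Proposition \ref{properr2}, with the second order differential operator $\partial_{xx}-\partial_{x}$ replaced throughout by the integro-differential operator $\left(e^{\delta^{2}/2}-1\right)\partial_{x}+\mathscr{C}_{\delta^{2}}-1$ appearing in \eqref{eqbound18}. The only genuinely new ingredient is the convolution operator $\mathscr{C}_{\delta^{2}}$: by Remark \ref{remand2} it commutes with $\partial_{x}$ and it shifts the index of the compound-Poisson series defining $p^{(0)}$, sending $p_{0,k}$ to $p_{0,k+1}$. As in Proposition \ref{properr2}, I would treat the cases $n=0$ and $n=1$ separately.

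For $n=0$ I would start from $p_{0}(t,x;T,y)=p^{(0)}(t,x;T,y)=\G^{a(y),\delta^{2},\lam(y)}(t,x;T,y)$, cf.\ \eqref{e31}, and split the operator in \eqref{eqbound18} into its three summands. The term $\left(e^{\delta^{2}/2}-1\right)(x-y)^{2}\partial_{x}p_{0}$ is bounded exactly like the corresponding diffusion term in the case $n=0$ of Proposition \ref{properr2} (the scalar $e^{\delta^{2}/2}-1$ being uniformly bounded since $\delta^{2}\le M$), which via Lemma \ref{lemestim1} gives a bound by $C\,\bar{\G}^{cM,cM,cM}$. For the term $(x-y)^{2}\left(\mathscr{C}_{\delta^{2}}-1\right)p_{0}$ I would use Remark \ref{remand2} to write $\mathscr{C}_{\delta^{2}}p_{0}$ as an index-shifted Gaussian series, so that $(x-y)^{2}\mathscr{C}_{\delta^{2}}p_{0}$ is, term by term, of the same type as $(x-y)^{2}p_{0}$; Lemmas \ref{lemestim7}--\ref{lemestim8} then dominate it by a constant multiple of $\mathscr{C}_{c\delta^{2}}\G^{ca(y),c\delta^{2},\lam(y)}+\mathscr{C}_{4c\delta^{2}}\G^{ca(y),4c\delta^{2},\lam(y)}$, which Lemma \ref{lemestim1} absorbs into $C\,\mathscr{C}_{cM}\bar{\G}^{cM,cM,cM}$. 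Adding the contributions gives \eqref{eqbound18} for $n=0$.

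For $n=1$ I would start from the explicit representation
\begin{align}
p_{1}(t,x;T,y)=\left((T-t)(x-y)+\frac{(T-t)^{2}}{2}J\right)\Ac_{1}\,p^{(0)}(t,x;T,y),
\end{align}
which follows from Theorem \ref{th:un_general_repres} together with Proposition \ref{prop:G_n_gaus}; here $J$ is the operator \eqref{eqbound14} and $\Ac_{1}$ the first order Taylor operator, and both are finite combinations of $1,\partial_{x},\partial_{xx},\mathscr{C}_{\delta^{2}}$ with coefficients controlled by $M$, $\|a_{1}\|_{\infty}$ and $\|\lam_{1}\|_{\infty}$. Multiplying by $(x-y)$, applying the operator in \eqref{eqbound18}, and pushing the factor $(x-y)$ inward — using $[\partial_{x},\mathscr{C}_{\delta^{2}}]=0$ and the commutation of $(x-y)$ with $\partial_{x}$, exactly as in the case $n=1$ of Proposition \ref{properr2} — one expands the left side of \eqref{eqbound18} into a finite sum of elementary terms of the form $c\,(T-t)^{j}(x-y)^{\eta}\partial_{x}^{i}\mathscr{C}_{\delta^{2}}^{\ell}\,p^{(0)}(t,x;T,y)$, with the exponents matched so that the appropriate one among Lemmas \ref{lemestim2}, \ref{lemestim3}, \ref{lemestim5}, \ref{lemestim7}, \ref{lemestim8} and \ref{lemestim9} (used with the corresponding value of $\eta$) bounds each term, for $0<T-t\le\t$, by a constant times one of $\G^{ca(y),c\delta^{2},\lam(y)}$, $\G^{ca(y),4c\delta^{2},\lam(y)}$ or their convolutions against $\mathscr{C}_{c\delta^{2}}$ and $\mathscr{C}_{4c\delta^{2}}$. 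A final use of Lemma \ref{lemestim1} with $\eta=0$ and $\eta=1$ then collects all terms into $C(1+\mathscr{C}_{cM})\bar{\G}^{cM,cM,cM}$, proving \eqref{eqbound18} for $n=1$.

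I expect the main obstacle to be purely organizational in the $n=1$ case: keeping track, for each of the many terms produced by the expansion, of the correct power of $(x-y)$ and of the number of $\partial_{x}$'s and of $\mathscr{C}_{\delta^{2}}$'s, so as to invoke the matching pointwise lemma. The one substantive point is to check that every occurrence of $\mathscr{C}_{\delta^{2}}$ — which inflates the effective Gaussian variance and shifts the compound-Poisson index — still yields a density dominated by the single master kernel $\bar{\G}^{cM,cM,cM}$ with at most one outer convolution $\mathscr{C}_{cM}$; this is exactly what Lemmas \ref{lemestim7}--\ref{lemestim9} and the absorption Lemma \ref{lemestim1} are designed to supply, so no new estimate should be required. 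Indeed, much of this work is already contained in the proof of Proposition \ref{properr2}, whose $n=1$ case already involves the operator $\mathscr{C}_{\delta^{2}}$.
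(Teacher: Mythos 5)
Your proposal follows essentially the same route as the paper: for $n=0$ the paper likewise bounds $(x-y)^2$ times the jump operator applied to $p_0=\G^{a(y),\delta^2,\lam(y)}$ using the pointwise Lemmas \ref{lemestim3}, \ref{lemestim2} and \ref{lemestim8} and then absorbs everything into $C(1+\mathscr{C}_{cM})\bar{\G}^{cM,cM,cM}$ via Lemma \ref{lemestim1} with $\eta=0,1$, and for $n=1$ it simply declares the argument ``entirely analogous'' to Proposition \ref{properr2}, i.e.\ exactly the term-by-term expansion of $p_1=\bigl((T-t)(x-y)+\tfrac{(T-t)^2}{2}J\bigr)\Ac_1 p^{(0)}$ that you describe. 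Your write-up is a correct, slightly more detailed rendering of the paper's own (very terse) proof.
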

\begin{proof}
For simplicity we only prove the thesis for $n=0$. The proof for $n=1$ is entirely analogous to
that of Proposition \ref{properr2}. Once again, hereafter we omit the dependence of $t,x,T,y$ in
any function we consider. Recalling the expression of $p_{0}(t,x;T,y)\equiv p^{(0)}(t,x;T,y)$ in
\eqref{e31}, by Lemmas \ref{lemestim3}, \ref{lemestim2} and \ref{lemestim8}, there exists a
positive constant $C_1$ only dependent on $c,\t,M$ such that $$ \left|(x-y)^{2}\left(\left(
e^{\frac{\delta^2}{2}}-1 \right)\partial_x+\mathscr{C}_{\delta^2}-1\right)v_{0}\right|\leq
C_1\left(  \G^{c a (y),4c \delta^2, \lambda(y)}+(1+\mathscr{C}_{16 c \delta^2}) \G^{c a (y),16c
\delta^2, \lambda(y)} \right), $$ for any $x,y\in\R$ and $t,T\in\R$ with $0<T-t\le \t$. Then,
\eqref{eqbound18} follows from Lemma \ref{lemestim1} with $\eta=0$ and with $\eta=1$. %, as $m\leq a(y),16
%\delta^2\leq M$ and $0\leq \lambda(y)\leq M$ for any $y\in\R$.
\end{proof}

We are now in position to prove Proposition \ref{properr3}.
\begin{proof}[Proof of Proposition \ref{properr3}]
We first prove the case $N=1$. Let us define the operators
\begin{align*}
 L_0=\partial_t + \Ac_{0},\qquad L_1=\partial_t + \Ac_{0} + (x-y)\Ac_{1}.
\end{align*}
%with $\Ac_{0},\Ac_{1}$ as in \eqref{no number yet}.
Let us recall that, by \eqref{eq:v.0.pide} and \eqref{eq:v.n.pide} with $n=1$, we have
  $$L_{0}p_{0}=0,\qquad L_{0}p_{1}=-(L_{1}-L_{0})p_{0}.$$
Thus, by \eqref{eqbound6} we have
%\begin{align*}
%Z^{(1)}_{0}(t,x;T,y) &= (a(y)-a_0(y))(\partial_{xx}-\partial_{x}) v^1_y(t,x;t,x)\\
%               &\quad + \left( a(y)-a_0(y)-a_1(y)(x-y)\right) v^0_y(t,x;t,x).
%\end{align*}
%In fact, by \eqref{eqbound6} we have
\begin{align*}
 Z^{(1)}_{0}(t,x;T,y) =L p^{(1)}(t,x;T,y)&=Lp_{0}(t,x;T,y)+Lp_{1}(t,x;T,y)\\
%  &= (L-L_1)v^{(0)}(t,x;T,y)+L_0 v^{(0)}(t,x;T,y)\\ &\quad +(L-L_0)v^{(1)}(t,x;T,y)+L_0
% v^{(1)}(t,x;T,y)+(L_1-L_0)v^{(0)}(t,x;T,y)
%\intertext{()}
 &= (L-L_1)p_{0}(t,x;T,y)+(L-L_0)p_{1}(t,x;T,y),
\end{align*}
where $(L-L_0)$ and $(L-L_1)$ are explicitly given by
\begin{align}\label{eqbound21}
(L-L_0)&=(a(x)-a(y))(\partial_{xx}-\partial_{x})+ (\lam(x)-\lam(y)) \left(\left(
e^{\frac{\delta^2}{2}}-1 \right)\partial_x+\mathscr{C}_{\delta^2}-1\right),\\ (L-L_1)&=(a(x)-a(y)-
a'(y)(x-y))(\partial_{xx}-\partial_{x})\\ &\quad + (\lam(x)-\lam(y)-\lam'(y)(x-y)) \left(\left(
e^{\frac{\delta^2}{2}}-1 \right)\partial_x+\mathscr{C}_{\delta^2}-1\right).
\end{align}
Thus, by the Lipschitz assumptions on $a$, $\lam$ and their first order derivatives, we obtain
\begin{align}
|Z^{(1)}_{0}(t,x;T,y)|&\leq \|a_{2}\|_{\infty}|x-y|^{2}|(\partial_{xx}-\partial_{x})
p_{0}(t,x;T,y)|+\|a_{1}\|_{\infty}|x-y||(\partial_{xx}-\partial_{x}) p_{1}(t,x;T,y)|\\ &\qquad\ +
\|\lam_{2}\|_{\infty}|x-y|^{2}  \left| \left(\left( e^{\frac{\delta^2}{2}}-1
\right)\partial_x+\mathscr{C}_{\delta^2}-1\right) p_{0}(t,x;T,y)\right|\\ &\qquad\ +
\|\lam_{1}\|_{\infty}|x-y|  \left| \left(\left( e^{\frac{\delta^2}{2}}-1
\right)\partial_x+\mathscr{C}_{\delta^2}-1\right) p_{1}(t,x;T,y)\right|.
\end{align}
and, as $\|\lam_{1}\|_{\infty}=0$ implies $\|\lam_{2}\|_{\infty}=0$, by Propositions
\ref{properr2} and \ref{properr4} there exists a positive constant $C$, only dependent on
$c,\t,M$,$\|\lambda_{1}\|_{\infty}$,$\|\lambda_{2}\|_{\infty}$, $\|a_{1}\|_{\infty}$ and
$\|a_{2}\|_{\infty}$, such that \eqref{eqbound7} holds for $N=1$ and $n=0$.
%that with Proposition \ref{properr2} gives \eqref{eqbound7} for $n=0$.
To prove the general case, we proceed by induction on $n$. First note that, by \eqref{eq:v.0.pide}
we have
\begin{align}
 |L p^{(0)}(t,x;T,y)|& %\leq |(L-L_0)p^{(0)}(t,x;T,y)|   + |L_0 p^{(0)}(t,x;T,y)|
 =|(L-L_0)p^{(0)}(t,x;T,y)|\\
\intertext{(and by \eqref{eqbound21} and the Lipschitz property of $\a,\lam$)}
 &\leq
 \|a_{1}\|_{\infty}|x-y||(\partial_{xx}-\partial_{x}) p^{(0)}(t,x;T,y)|\\
 &\quad\ + \|\lam_{1}\|_{\infty}|x-y|  \left| \left(\left( e^{\frac{\delta^2}{2}}-1 \right)\partial_x+\mathscr{C}_{\delta^2}-1\right)
p^{(0)}(t,x;T,y)\right| \intertext{(and by applying Lemmas \ref{lemestim1}, \ref{lemestim3},
\ref{lemestim2} and \ref{lemestim8} with $\eta=0,1$)}\label{eqbound8}
 &\leq C_{0}\left(\frac{1}{\sqrt{T-t}}+ \|\lam_{1}\|_{\infty}\mathscr{C}_{cM} \right) \bar{\G}^{cM,cM,cM}(t,x;T,y),
\end{align}
for any $x,y\in\R$ and $t,T\in\R$ with $0<T-t\le \t$, and where $C_{0}$ is a positive constant
only dependent on $c,\t,M,\|\lam_{1}\|_{\infty}$ and $\|a_{1}\|_{\infty}$. Assume now
\eqref{eqbound7} holds for $n\geq 0$.  Then by \eqref{eqbound9} we obtain
\begin{align}
 |Z^{(1)}_{n+1}(t,x;T,y)|&\leq \int_{t}^T \int_{\R} |Lp^{(0)}(t,x;s,\x)||Z^{(1)}_{n}(s,\x;T,y)|d\x ds
\intertext{(and by inductive hypothesis and by \eqref{eqbound8})} &\leq
 \frac{C^{n+1}C_{0}}{\sqrt{n!}}\int_{t}^T (T-s)^{\frac{n}{2}}\int_{\R} \left(\frac{1}{\sqrt{s-t}}+
 \|\lam_{1}\|_{\infty}\mathscr{C}_{cM} \right)\bar{\G}^{cM,cM,cM}(t,x;s,\x)\\
 &\quad\cdot\left(1+\|\lambda_{1}\|_{\infty}\mathscr{C}_{cM}^{n+1}\right)
 \bar{\G}^{cM,cM,cM}(s,\x;T,y)d\x ds.
\end{align}
Now, by the semigroup property \eqref{eqbound20}, and by the fact that\footnote{Here $\G_E$ represents the Euler Gamma function.}
\begin{align}
\int_{t}^T \frac{(T-s)^{\frac{n}{2}}}{\sqrt{s-t}}ds=\frac{\sqrt{\pi } (T-t)^{\frac{n+1}{2}}
\G_E\left(\frac{2+n}{2}\right)}{\G_E\left(\frac{3+n}{2}\right)}\leq \frac{\k
(T-t)^{\frac{n+1}{2}}}{\sqrt{n+1}} ,
\end{align}
with $\k=\sqrt{2}\pi$, we obtain
\begin{align}
 |Z^{(1)}_{n+1}(t,x;T,y)|&\leq \frac{C^{n+1}C_{0}}{\sqrt{n!}} \left(  \frac{\k
 (T-t)^{\frac{n+1}{2}}}{\sqrt{n+1}}\left(1+ \|\lam_{1}\|_{\infty}\mathscr{C}_{cM}^{n+1}
 \right)\right) \bar{\G}^{cM,cM,cM}(t,x;T,y)\\ \label{eqbound19} &\quad +\frac{C^{n+1}C_{0}}{\sqrt{n!}}
 \left(\frac{2 (T-t)^{\frac{n+2}{2}}}{n+2}   \|\lam_{1}\|_{\infty}\left(  \mathscr{C}_{cM}+
 \|\lam_{1}\|_{\infty}\mathscr{C}_{cM}^{n+2} \right)\right) \bar{\G}^{cM,cM,cM}(t,x;T,y).
\end{align}
Now, by Lemma \ref{lemestim9} we have
\begin{align}
\mathscr{C}_{cM}^{n+1}\bar{\G}^{cM,cM,cM}(t,x;T,y)&\leq 2\,
\mathscr{C}_{cM}^{n+2}\bar{\G}^{cM,cM,cM}(t,x;T,y),\\
\mathscr{C}_{cM}\bar{\G}^{cM,cM,cM}(t,x;T,y)&\leq \sqrt{n+2}\,
\mathscr{C}_{cM}^{n+2}\bar{\G}^{cM,cM,cM}(t,x;T,y) .
\end{align}
{Inserting the above results} into \eqref{eqbound19} we obtain
\begin{align}
 |Z^{(1)}_{n+1}(t,x;T,y)|&\leq \frac{C^{n+1}C_{0}}{\sqrt{n!}} \frac{(T-t)^{\frac{n+1}{2}}}{\sqrt{n+1}}\left(
 \k +2\|\lam_{1}\|_{\infty}\left(\k+\sqrt{\t}(1+\|\lam_{1}\|_{\infty}) \right)
 \mathscr{C}_{cM}^{n+2}  \right) \bar{\G}^{cM,cM,cM}(t,x;T,y)\\ &\leq
 \frac{C^{n+1}C_{1}(T-t)^{\frac{n+1}{2}}}{\sqrt{(n+1)!}}\left(  1+\|\lam_{1}\|_{\infty}
 \mathscr{C}_{cM}^{n+2} \right)\bar{\G}^{cM,cM,cM}(t,x;T,y),
\end{align}
where
\begin{align}
 C_{1}=2C_{0}\left(  \k+\sqrt{\t}(1+\|\lam_{1}\|_{\infty})\right).
\end{align}
Now, without loss of generality  we can assume $C_{1}\leq C$, and thus we obtain \eqref{eqbound7} for %$N=1$ and
$n+1$.
%\blu{I do not understand this:}\red{unless to increase $C$} \blu{Does this mean: ``we can find $C$ large enough such that $C_1\leq C$} \red{Yes, that is basically the meaning, but $C$ is already fixed so we can't ``find $C$ large enough such that $C_1\leq C$''. The correct way would be repeating the induction procedure replacing $C$ with $\max(C,C_1)$. For brevity I'd just say:}

The proof for $N>1$ is based on the same arguments.  However, in the general case the
technical details become significantly more complicated. In practice, proceeding by induction, one can extend Propositions \ref{properr2} and Proposition
\ref{properr4} to a general $n\in\N$. Eventually, after proving the identity
\begin{align}\label{eq:identity_Lp}
L p^{(N)}(t,x;T,y)=\sum_{n=0}^N (L-L_n)p_{N-n}(t,x;T,y)%+(L-L_0)v^{(1)}(t,x;T,y)
,
\end{align}
one will be able to prove the estimate \eqref{eqbound7} on $\big|Z^{(N)}_n(t,x;T,y) \big|$ for a generic $N\geq 1$. Finally, the case $N=0$ is simpler because the identity \eqref{eq:identity_Lp} simply reduces to
\begin{align}\label{eq:identity_Lp_bis}
L p^{(0)}(t,x;T,y)=(L-L_0)p_{0}(t,x;T,y)%+(L-L_0)v^{(1)}(t,x;T,y)
,
\end{align}
and the proof becomes a simple application of Lemmas \ref{lemestim1}-\ref{lemestim9}.
\end{proof}

\subsection{Discussion on the difference with respect to the diffusion case}\label{sec:difference_diffusion}
It has been proved by \cite{RigaPagliaraniPascucci} that, in the purely diffusive case (i.e $\lambda \equiv 0$), error bounds analogous to  \eqref{eqbound10} hold with
 $$g_{N}(s)=\Oc \left(s^{\frac{N+1}{2}}\right),\quad \textrm{as } s \to 0^+.$$
In other words, the rate of convergence of the $N$-th order approximation as $t \to T^-$ is proportional to $(T-t)^{\frac{N+1}{2}}$. The expansion $\sum_n p_n(t,x;T,y)$ is thus \emph{asymptotically convergent in $T-t$}. On the other hand, Theorem \ref{t1} shows that, when considering non-null L\'evy measures, the rate of convergence do not improve for $N$ greater than $1$.

The reasons for this discrepancy can be found in the different asymptotic behaviors of the leading term $p_0(t,x;T,y)=\G^{a\left(y\right),\delta^2,\lam(y)}(t,x;T,y)
$ in the fundamental solution expansion of $L$, with and without jumps.
%Indeed, when  the operator is strictly differential, i.e. $\lambda\equiv 0$, the leading term reduces to
%$$
%p_0(t,x;T,y)=\G^{a\left(y\right),0,0}(t,x;T,y)=\G_0^{a\left(y\right),0,0}(t,x;T,y),
%$$
%which is the Gaussian fundamental solution of a \emph{heat-type} operator.
Indeed, while the short-time asymptotic behavior at the pole $x=y$ does not change whether $\lambda\equiv 0$ or not, namely $p_0(t,x;T,x)\sim\frac{1}{\sqrt{T-t}}$ as $T-t$ goes to $0$, the asymptotic behavior away from the pole radically changes when passing from the purely diffusion to the jump-diffusion case. In particular, by \eqref{eq:proof_Gamma}-\eqref{eq:proof_Gamma_n}-\eqref{e31} it is clear that in general $p_0(t,x;T,y)\sim T-t$ as $T-t$ goes to $0$. On the other hand, in the particular case
 of $L$ being strictly differential, i.e. $\lambda\equiv 0$, the leading term reduces to
$$
p_0(t,x;T,y)=\G^{a\left(y\right),0,0}(t,x;T,y)=\G_0^{a\left(y\right),0,0}(t,x;T,y),
$$
which is the Gaussian fundamental solution of a \emph{heat-type} operator, and thus tends to $0$ exponentially as $T-t$ goes to $0$. For this reason, the differential version of Lemma \ref{lemestim8} becomes
\begin{equation}
|x-y| \G^{a\left(y\right),0,0}(t,x;T,y) \leq C \sqrt{T-t}\,\G^{c a\left(y\right),0,0}(t,x;T,y),
\end{equation}
as it is also clear by Lemma \ref{lemestim2} with $n=0$. Due to this fact, in the purely diffusive case, higher order polynomials of the kind $(x-y)^{N+1}$ arising from the $N$-th order Taylor expansion of the operator $L$, allow to gain an accuracy factor equal to $(T-t)^{\frac{N+1}{2}}$. On the contrary, in the jump-diffusion case, such polynomials can be only used to cancel out the negative powers of the time introduced by the space derivatives, by combining Lemma \ref{lemestim3} and Lemma \ref{lemestim2}.

\subsection{Pointwise estimates}\label{sec:estimates}\label{sec:pointwise_estimates}

In the rest of the section, we will always assume that
\begin{align}\label{e30}
  M^{-1}\leq \a,\theta \leq M,\qquad 0\le\l\leq M.
\end{align}
Even if not explicitly stated, all the constants appearing in the estimates \eqref{e20},
\eqref{e23}, \eqref{e24}, \eqref{e40}, \eqref{e22a} and \eqref{eqbound15} of the following lemmas
will depend also on $M$.
\begin{lemma}\label{lemestim1}
For any $T>0$ and $c>1$ there exists a positive constant $C$ such that\footnote{Here
$\mathscr{C}_{\theta}^0$ denotes the identity operator.}
  \begin{align}\label{e20}
  \mathscr{C}_{\theta}^\eta\G^{\a,\theta,\l}(t,x)\leq C\, \mathscr{C}_{c M}^\eta \bar{\G}^{cM,cM,cM}(t,x),
  \end{align}
for any $t\in (0,T]$, $x\in\R$ and $\eta\in\mathbb{N}_0$.
\end{lemma}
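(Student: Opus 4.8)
The plan is to reduce \eqref{e20} to a termwise comparison of the two series defining $\G^{\a,\theta,\l}$ and $\bar{\G}^{cM,cM,cM}$, using that $\mathscr{C}_{\theta}$ acts on such a series simply by shifting the summation index. First I would iterate $\eta$ times the one-step identity recorded just before Proposition~\ref{properr3} — convolving a Gaussian of variance $\a t+m\theta$ against $\mathscr{C}_{\theta}$ produces a Gaussian of variance $\a t+(m+1)\theta$ with unchanged mean — to get
\begin{align*}
 \mathscr{C}_{\theta}^{\eta}\G^{\a,\theta,\l}(t,\cdot)(x) &= e^{-\l t}\sum_{n=0}^{\infty}\frac{(\l t)^{n}}{n!}\,\G_{n+\eta}^{\a,\theta,\l}(t,x), &
 \mathscr{C}_{cM}^{\eta}\bar{\G}^{cM,cM,cM}(t,\cdot)(x) &= e^{-cMt}\sum_{n=0}^{\infty}\frac{(cMt)^{n}}{n!}\,\bar{\G}_{n+\eta}^{cM,cM}(t,x).
\end{align*}
It then suffices to dominate the $n$-th summand on the left by a fixed constant times the $n$-th summand on the right, uniformly in $n\in\mathbb{N}_0$.

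I would obtain this from two uniform estimates. \emph{Gaussian comparison.} Put $v:=\a t+(n+\eta)\theta$, $v':=cM(t+n+\eta)$ and $\mu_0:=\bigl(\tfrac{\a}{2}+\l e^{\theta/2}-\l\bigr)t$; by \eqref{e30} one has $cv\le v'\le cM^{2}v$ and $0\le\mu_0\le\kappa_1 t$ with $\kappa_1:=\tfrac{M}{2}+M(e^{M/2}-1)$ depending only on $M$. Combining the elementary bound $(x-\mu_0)^2\ge\tfrac1c x^2-\tfrac1{c-1}\mu_0^2$ (a perfect square, hence valid precisely because $c>1$) with $v\ge M^{-1}t$, so that $\tfrac{\mu_0^2}{2(c-1)v}\le\tfrac{M\kappa_1^2T}{2(c-1)}=:\kappa_2$, yields
\begin{align*}
 \G_{n+\eta}^{\a,\theta,\l}(t,x)=\frac{1}{\sqrt{2\pi v}}\,e^{-\frac{(x-\mu_0)^2}{2v}}\ \le\ M\sqrt{c}\;e^{\kappa_2}\,\frac{1}{\sqrt{2\pi v'}}\,e^{-\frac{x^2}{2v'}}=M\sqrt{c}\;e^{\kappa_2}\,\bar{\G}_{n+\eta}^{cM,cM}(t,x),
\end{align*}
uniformly in $n$. \emph{Poisson-weight comparison.} Since $0\le\l\le M<cM$ and $0<t\le T$, for every $n$,
\begin{align*}
 e^{-\l t}\frac{(\l t)^{n}}{n!}=e^{-cMt}\Bigl(e^{(cM-\l)t}\bigl(\tfrac{\l}{cM}\bigr)^{n}\Bigr)\frac{(cMt)^{n}}{n!}\ \le\ e^{cMT}\,e^{-cMt}\,\frac{(cMt)^{n}}{n!}.
\end{align*}
Multiplying the two bounds and summing over $n\ge0$ gives \eqref{e20} with $C=M\sqrt{c}\,e^{\kappa_2+cMT}$, which depends only on $M$, $T$ and $c$, as required.

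I do not expect a genuine obstacle here — the argument is essentially bookkeeping — but two points deserve care. The hypothesis $c>1$ is used in an essential way: the coefficient $\tfrac1{c-1}$ in the quadratic split diverges as $c\to1^{+}$. And the mean-correction factor $e^{\mu_0^2/(2(c-1)v)}$ is controlled only because the diffusive part $\a t$ of the variance $v$ is comparable to $t$ while $\mu_0$ is also of order $t$, so that $\mu_0^2/v$ is bounded by a multiple of $T$; the jump part $(n+\eta)\theta$ of $v$ only makes this smaller. The degenerate case $\l=0$, where the left-hand series collapses to its $n=0$ term, is covered a fortiori, since the right-hand series dominates its own $n=0$ term.
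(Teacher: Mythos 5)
Your proof is correct and follows essentially the same route as the paper: a termwise comparison of the Poisson--Gaussian series, bounding the ratio $\G_{n+\eta}^{\a,\theta,\l}/\bar{\G}_{n+\eta}^{cM,cM}$ uniformly in $n$ and in the admissible $\a,\theta,\l$ (the paper bounds this ratio by computing its maximum in $x$ explicitly, whereas you use the quadratic split $(x-\mu_0)^2\ge \tfrac1c x^2-\tfrac1{c-1}\mu_0^2$, and you additionally spell out the index shift and the Poisson-weight comparison that the paper leaves as ``a straightforward consequence'').
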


\begin{proof}
For any $n\geq 0$ we have
  $$
  \G_n^{\a,\theta,\l}(t,x)
  %\leq \frac{1}{\sqrt{2\pi m(t+n)}}
  %e^{-\frac{\left(x-M\left(\frac{1}{2}- e^{\frac{cM}{2}}+1\right)t\right)^2} {2cM(t+n)}} %\max_{s\leq \tau,\ z\in\R}
  \leq\sqrt{c} M q_n(t,x)\bar{\G}_n^{c M,c M}(t,x),
  $$
where
  $$
  q_n(t,x)=\exp\left( -\frac{\left(x-\left(\frac{\a}{2}+\l  e^{\frac{\theta}{2}}-\l  \right)t\right)^2}
  {2(\a t+n\theta)}  +\frac{x^2} {2cM(t+n )}\right).
  $$
A direct computation shows that
\begin{align*}%\label{}
  \max_{x\in \R} q_n(t,x)&= \exp\left(\frac{s^2 \left(\alpha +2 \left(e^{\frac{\theta}{2}}-1\right) \l \right)^2}{8 \left(c M (n+s)-s \alpha -n \delta ^2\right)}\right) \leq \exp\left(\frac{T \left(\alpha +2 \left(e^{\frac{\th}{2}}-1\right) \l \right)^2}{8 (cM-\a)}\right) ,
\end{align*}
for any $t\in(0,T]$, $n\ge 0$ and $\a,\theta,\l$ in \eqref{e30}. Then the thesis is a
straightforward consequence of the fact that $q_n(t,x)$ is bounded on $(0,T]\times \R$, uniformly
with respect to $n\ge 0$ and $\a,\theta,\l$ in \eqref{e30}. %Eventually it is enough to observe that
%  $$
%  \frac{\sqrt{ M t+n M}}{\sqrt{m t+n m}}\leq \left(\frac{M}{m}\right)^{\frac{1}{2}}\qquad \forall n\ge 0,\ t>0.
%  $$
\end{proof}
\begin{lemma}\label{lemestim3}
For any $T>0$, $k\in\N$ and $c>1$, there exists a positive constant $C$ such that
\begin{align}\label{e23}
  \left|\partial_{x}^k\G_n^{\a,\theta,\l}(t,x)\right|\leq \frac{C}{(\a t+n\theta)^{k/2}}
  \G_n^{c \a,c \theta,\l}(t,x) ,
\end{align}
for any $x\in\R$, $t\in]0,T]$ and $n\in\N_0$.
\end{lemma}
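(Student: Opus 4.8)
The plan is to reduce \eqref{e23} to the classical fact that spatial derivatives of a Gaussian kernel are Hermite-polynomial multiples of that kernel, and then to absorb the polynomial factor by slightly inflating the variance. For fixed $t$ and $(\a,\theta,\l)$, observe that $\G_n^{\a,\theta,\l}(t,\cdot)$ is the Gaussian density with variance $V_n:=\a t+n\theta$ and mean $m_n:=\left(\frac{\a}{2}+\l e^{\theta/2}-\l\right)t$. Hence there is a polynomial $H_k$ of degree $k$, independent of $(\a,\theta,\l,n,t)$ (up to sign, the $k$-th Hermite polynomial), with
\begin{align*}
\partial_x^k\G_n^{\a,\theta,\l}(t,x)=\frac{1}{V_n^{k/2}}\,H_k\!\left(\tfrac{x-m_n}{\sqrt{V_n}}\right)\G_n^{\a,\theta,\l}(t,x).
\end{align*}
Writing $w:=(x-m_n)/\sqrt{V_n}$, the claim then reduces to the uniform pointwise bound
\begin{align*}
\left|H_k(w)\right|\,\frac{\G_n^{\a,\theta,\l}(t,x)}{\G_n^{c\a,c\theta,\l}(t,x)}\ \le\ C\qquad\text{for all }x\in\R,\ t\in(0,T],\ n\in\N_0,
\end{align*}
uniformly in $\a,\theta,\l$ as in \eqref{e30}, since multiplying by $V_n^{-k/2}\G_n^{c\a,c\theta,\l}$ reproduces exactly \eqref{e23}.

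Next I would compute the ratio. Since $\G_n^{c\a,c\theta,\l}(t,\cdot)$ is Gaussian with variance $cV_n$ and mean $m_n^{c}:=\left(\frac{c\a}{2}+\l e^{c\theta/2}-\l\right)t$, setting $\Delta_n:=m_n-m_n^{c}$ and writing $x-m_n^{c}=w\sqrt{V_n}+\Delta_n$ gives, after expanding the squares,
\begin{align*}
\frac{\G_n^{\a,\theta,\l}(t,x)}{\G_n^{c\a,c\theta,\l}(t,x)}=\sqrt{c}\,\exp\!\left(-\frac{c-1}{2c}\,w^2+\frac{\Delta_n}{c\sqrt{V_n}}\,w+\frac{\Delta_n^2}{2cV_n}\right).
\end{align*}
The point is that $\Delta_n$ is \emph{linear in $t$}, $|\Delta_n|=t\,\bigl|\tfrac{\a(1-c)}{2}+\l(e^{\theta/2}-e^{c\theta/2})\bigr|$, while $V_n\ge\a t\ge M^{-1}t$; so under \eqref{e30} both $|\Delta_n|/\sqrt{V_n}$ and $\Delta_n^2/V_n$ are bounded by constants $B,B'$ depending only on $c,M,T$ — uniformly in $n$ and, crucially, in $t\in(0,T]$, the potential degeneracy at $n=0$, $t\to0^+$, being exactly cancelled by the factor $t$ in $\Delta_n$.

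Combining the two displays, $\left|\partial_x^k\G_n^{\a,\theta,\l}(t,x)\right|\le C'V_n^{-k/2}\,|H_k(w)|\,\exp\!\bigl(-\tfrac{c-1}{2c}w^2+B|w|\bigr)\,\G_n^{c\a,c\theta,\l}(t,x)$ with $C'$ depending only on $c,M,T$; and since $c>1$, $\sup_{w\in\R}|H_k(w)|\exp\bigl(-\tfrac{c-1}{2c}w^2+B|w|\bigr)<\infty$ (a degree-$k$ polynomial against super-Gaussian decay), which finishes the proof with $V_n^{k/2}=(\a t+n\theta)^{k/2}$. The argument is essentially routine Gaussian calculus; the only thing requiring attention — and hence the part I single out — is the uniform-in-$n$ and uniform-as-$t\to0^+$ control of the mean-shift terms $\Delta_n/\sqrt{V_n}$ and $\Delta_n^2/V_n$. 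The same computation covers $k=0$ (take $H_0\equiv1$) and, with $m_n$ replaced by $m_n-nm$, the variants with nonzero mean parameter such as $\G_n^{\a,m,\theta,\l}$ in \eqref{eq:proof_Gamma_n_bis}.
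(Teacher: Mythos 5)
Your proposal is correct and follows essentially the same route as the paper: write $\partial_x^k\G_n^{\a,\theta,\l}$ as $(\a t+n\theta)^{-k/2}$ times a degree-$k$ polynomial in the standardized variable multiplying $\G_n^{\a,\theta,\l}$, then show that this polynomial factor times the ratio $\G_n^{\a,\theta,\l}/\G_n^{c\a,c\theta,\l}$ is bounded uniformly in $n$, $t\in(0,T]$ and the parameters in \eqref{e30}. Your explicit control of the mean-shift terms $\Delta_n/\sqrt{V_n}$ and $\Delta_n^2/V_n$ is just a slightly more detailed verification of the uniform boundedness of the function the paper calls $q_{n,j}$, so there is no substantive difference.
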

\begin{proof}
For any $k\ge 1$ we have
  $$
  \partial_{x}^k\G_n^{\a,\theta,\l}(t,x)=
  \frac{1}{(\a t+n\theta)^{k/2}}\G_n^{\a,\theta,\l}(t,x)p_k\left( \frac{x-\left(  \frac{\a}{2} +\l  e^{\frac{\theta}{2}}-\l  \right) t}{\sqrt{\a t+n\theta}}
  \right),
  $$
where $p_k$ is a polynomial of degree $k$. To prove the Lemma we will show that there exists a
positive constant $C$, which depends only on $m,M,T,c$ and $k$, such that
%\begin{enumerate}
%\item[(a)]
%  \begin{align}
\begin{align}\label{e21bb}
 \left(\frac{\left|x-\left(  \frac{\a}{2} +\l  e^{\frac{\theta}{2}}-\l  \right)
 t\right|}{\sqrt{\a t+n\theta}}\right)^{j}\G_n^{\a,\theta,\l}(t,x)&\leq C\, \G_n^{c \a,c
 \theta,\l}(t,x),\qquad j\le k.
\end{align}
Proceeding as above, we set
  $$
  \left(\frac{\left|x-\left(  \frac{\a}{2} +\l  e^{\frac{\theta}{2}}-\l  \right)
 t\right|}{\sqrt{\a t+n\theta}}\right)^j
 \G_n^{\a,\theta,\l}(t,x)=\G_n^{c\a,c\theta,\l}(t,x)q_{n,j}(t,x),
  $$
where
  $$
  q_{n,j}(t,x)=\left(\frac{\left|x-\left(  \frac{\a}{2} +\l  e^{\frac{\theta}{2}}-\l  \right)
 t\right|}{\sqrt{\a t+n\theta}}\right)^j
   \exp\(-\frac{\left(x-\left(\frac{\a}{2}  +\l  e^{\frac{\theta}{2}}-\l  \right)t\right)^2}
   {2(\a t+n \theta)}+\frac{\left(x-\left(\frac{c\a}{2} +\l  e^{\frac{c\theta}{2}}-\l \right)t\right)^2}
    {2(c\a t+n c\theta)}\).
  $$
Then the thesis follows from the boundedness of $q_{n,j}$ on $(0,T]\times\R$, uniformly with
respect to $n\ge 0$ and $\a,\theta,\l$ in \eqref{e30}. Indeed the maximum of $q_{n,j}$ can be
computed explicitly and we have
  $$\lim_{n\to\infty}\left(\max_{x\in\R, \ t\in]0,T]}q_{n,j}(t,x)\right)=\left(\frac{c j }{(c-1)e}
  \right)^{\frac{j}{2}}.$$
\end{proof}

\begin{lemma}\label{lemestim5}
For any $T>0$ and $\eta\in\N$, there exists a positive constant $C$ such that
  \begin{align}\label{e24}
 % a)\hspace{20pt}
 \l t\, \mathscr{C}_{\theta}^\eta \G^{\a,\theta,\l}(t,x)\leq C\, \G^{\a,2(\eta+1) \theta,\l}(t,x)\\
 %\label{e25}
%  b)\hspace{31pt}\mathscr{C}_{\theta}^\eta \G^{\a,\theta,\l}(t,x)&\leq C\, \mathscr{C}_{\theta}\G^{\a,2(\eta+1) \theta,\l}(t,x)
  \end{align}
for any $t\in(0,T]$ and $x\in\R$.
%for any $m\leq \a,\theta \leq M$, $0\leq\l\leq M$, $z\in\R$ and $0<s\leq\t$.
\end{lemma}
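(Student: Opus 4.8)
The plan is to reduce the inequality to an elementary, uniform comparison between Gaussian densities, performed term by term on the Poisson expansions of both sides. Iterating the convolution identity $\mathscr{C}_{\theta}\G^{\a,\theta,\l}(t,\cdot)(x)=e^{-\l t}\sum_{n\ge0}\frac{(\l t)^{n}}{n!}\G_{n+1}^{\a,\theta,\l}(t,x)$ recalled above, one obtains $\l t\,\mathscr{C}_{\theta}^{\eta}\G^{\a,\theta,\l}(t,x)=e^{-\l t}\sum_{n\ge0}\frac{(\l t)^{n+1}}{n!}\G_{n+\eta}^{\a,\theta,\l}(t,x)$, while by definition $\G^{\a,2(\eta+1)\theta,\l}(t,x)=e^{-\l t}\sum_{n\ge0}\frac{(\l t)^{n}}{n!}\G_{n}^{\a,2(\eta+1)\theta,\l}(t,x)$. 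Hence it suffices to dominate the first series by a constant multiple of the second. I would match the $n$-th summand of the first series with the $n$-th summand of the second when $n\ge1$, and the $0$-th summand $\l t\,\G_{\eta}^{\a,\theta,\l}$ with the \emph{first} summand $\l t\,\G_{1}^{\a,2(\eta+1)\theta,\l}$ of the second. This index shift at $n=0$ is forced: the $\eta$ extra convolutions widen the variance to $\a t+\eta\theta$, which exceeds the variance $\a t$ of $\G_{0}^{\a,2(\eta+1)\theta,\l}$, so the naive pairing $\G_{n+\eta}^{\a,\theta,\l}\le C\,\G_{n}^{\a,2(\eta+1)\theta,\l}$ would break down precisely there.

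The only analytic ingredient is the elementary bound: for $0<v_{1}<v_{2}$ and arbitrary centers $\mu_{1},\mu_{2}$, maximizing in $x$ the ratio of the two normalized Gaussians (the maximum of $-\frac{(x-\mu_1)^2}{2v_1}+\frac{(x-\mu_2)^2}{2v_2}$ being $\frac{(\mu_1-\mu_2)^2}{2(v_2-v_1)}$) shows that $\frac{1}{\sqrt{2\pi v_{1}}}e^{-(x-\mu_{1})^{2}/(2v_{1})}\le\sqrt{v_{2}/v_{1}}\;e^{(\mu_{1}-\mu_{2})^{2}/(2(v_{2}-v_{1}))}\,\frac{1}{\sqrt{2\pi v_{2}}}e^{-(x-\mu_{2})^{2}/(2v_{2})}$ for all $x$. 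I apply this with $(v_{1},v_{2})=(\a t+(n+\eta)\theta,\ \a t+2n(\eta+1)\theta)$ for $n\ge1$ and with $(v_{1},v_{2})=(\a t+\eta\theta,\ \a t+2(\eta+1)\theta)$ for $n=0$; note that $\G_{k}^{\a,\theta,\l}(t,\cdot)$ is centered at $\mu_{1}=(\a/2+\l e^{\theta/2}-\l)t$ and $\G_{k}^{\a,2(\eta+1)\theta,\l}(t,\cdot)$ at $\mu_{2}=(\a/2+\l e^{(\eta+1)\theta}-\l)t$, so that $|\mu_{1}-\mu_{2}|=\l t\,|e^{(\eta+1)\theta}-e^{\theta/2}|\le MT e^{(\eta+1)M}$. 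The work is to check, uniformly in $n$, in $t\in(0,T]$ and in $\a,\theta,\l$ constrained by \eqref{e30}, that $v_{2}-v_{1}$ is bounded below by a positive constant ($(2n\eta+n-\eta)\theta\ge(\eta+1)M^{-1}$ for $n\ge1$, and $(\eta+2)\theta\ge(\eta+2)M^{-1}$ for $n=0$) and that $v_{2}/v_{1}$ is bounded above (by $2(\eta+1)$ for $n\ge1$, since $2(\eta+1)v_{1}-v_{2}=(2\eta+1)\a t+2\eta(\eta+1)\theta\ge0$, and by $2(\eta+1)/\eta\le4$ for $n=0$). This yields constants $C_{1},C_{2}$, depending only on $M$, $T$, $\eta$, with $\G_{n+\eta}^{\a,\theta,\l}(t,x)\le C_{2}\,\G_{n}^{\a,2(\eta+1)\theta,\l}(t,x)$ for $n\ge1$ and $\G_{\eta}^{\a,\theta,\l}(t,x)\le C_{1}\,\G_{1}^{\a,2(\eta+1)\theta,\l}(t,x)$.

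To finish, I use $\l t\le MT$: for $n\ge1$ the $n$-th summand of the first series equals $\l t\cdot\frac{(\l t)^{n}}{n!}\G_{n+\eta}^{\a,\theta,\l}(t,x)\le MT\,C_{2}\,\frac{(\l t)^{n}}{n!}\G_{n}^{\a,2(\eta+1)\theta,\l}(t,x)$, while the $0$-th summand is $\le C_{1}\,\frac{(\l t)^{1}}{1!}\G_{1}^{\a,2(\eta+1)\theta,\l}(t,x)$; summing over $n\ge0$ and using that every summand of the target series is non-negative gives $\l t\,\mathscr{C}_{\theta}^{\eta}\G^{\a,\theta,\l}(t,x)\le(C_{1}+MT\,C_{2})\,\G^{\a,2(\eta+1)\theta,\l}(t,x)$, which is the assertion with $C=C_{1}+MT\,C_{2}$. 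I do not expect a genuine obstacle here — the scheme is the one already used for Lemma \ref{lemestim1}, the new mechanism being that the weight $\l t$ is absorbed by the one-unit shift of the Poisson index; the only care needed is the bookkeeping of that shift at $n=0$ and the (routine) verification that $C_{1},C_{2}$ are uniform, which follows at once from \eqref{e30} and $t\le T$.
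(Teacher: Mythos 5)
Your proof is correct and follows essentially the same route as the paper's: a term-by-term Gaussian comparison $\G_{n+\eta}^{\a,\theta,\l}\le C\,\G_{n}^{\a,2(\eta+1)\theta,\l}$ for $n\ge1$ together with the shifted pairing $\G_{\eta}^{\a,\theta,\l}\le C\,\G_{1}^{\a,2(\eta+1)\theta,\l}$ at $n=0$, obtained by maximizing the ratio of the two Gaussians, with the factor $\l t$ absorbed into the Poisson weight of the first target summand (for $n=0$) and bounded by $MT$ otherwise. This is exactly the paper's argument (its estimates \eqref{eqestim2}--\eqref{eqestim3} and the final resummation), only with the uniform constants spelled out more explicitly.
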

\begin{proof}
We first prove there exists a constant $C_{0}$, which depends only on $m,M,T$ and $\eta$, such that
\begin{align}\label{eqestim2}
\G_{n+\eta}^{\a,\theta,\l}(t,x)& \leq C_{0}\, \G_{n}^{\a,2(\eta+1)\theta,\l}(t,x),\\ \label{eqestim3}
\G_{\eta}^{\a,\theta,\l}(t,x)& \leq C_{0}\, \G_{1}^{\a,2(\eta+1)\theta,\l}(t,x),
\end{align}
for any $t\in]0,T]$, $x\in\R$, $n\ge1$ and $\a,\theta,\l$ in \eqref{e30}. To prove \eqref{eqestim2} we observe that %, for any $n\geq 1$,we have
 $$
  \G_{n+\eta}^{\a,\theta,\l}(t,x)\leq \frac{1}{\sqrt{2\pi (\a t+(n+\eta)\theta)}} \exp\left(-\frac{\left(x-\left(\frac{\a}{2}
  +\l  e^{(\eta+1)\theta}-\l \right)t\right)^2} {2(\a t+2 n(\eta+1)\theta)}\right) %\max_{s\leq \tau,\ z\in\R}
  q_n(t,x),
 $$
where
 $$q_n(t,x)=\exp\left(-\frac{\left(x-\left(\frac{\a}{2}  +\l  e^{\frac{\theta}{2}}-\l
 \right)t\right)^2} {2(\a t+(n+\eta)\theta)}+\frac{\left(x-\left(\frac{\a}{2} +\l
 e^{(\eta+1)\theta}-\l \right)t\right)^2} {2(\a t+2 n(\eta+1)\theta)}\right). $$
Now it is easy to check that
 $$\max_{x\in\R}q_n(t,x)=\exp\left(\frac{\left(e^{(1+\eta) \theta}-e^{\frac{\theta}{2}}\right)^2 t^2
 \l ^2}{2 (n-\eta+2 n \eta) \theta} \right)\leq \exp\left(\frac{\left( e^{(1+\eta) \theta}-e^{\frac{\theta}{2}}\right)^2 t^2 \l ^2}{2 \eta \theta} \right). $$
for any $t\geq0$. Thus $q_n$ is bounded on $(0,T]\times \R$, uniformly with respect to $n\in\N$
and $\a,\theta,\l$ in \eqref{e30}. {To see the above bound, simply} observe that
  $$
  \frac{\sqrt{\alpha  t +2n(\eta+1)\theta}}{\sqrt{\alpha  t +(\eta+n)\theta}}\leq \sqrt{2(\eta+1)}.
  $$
The proof of \eqref{eqestim3} is completely analogous. Finally, by
\eqref{eqestim2}-\eqref{eqestim3} we have
\begin{align*}
 \l t\,
 \mathscr{C}_{\theta}^\eta \G^{\a,\theta,\l}(t,x)& = e^{-\l t}\l t\,
 \G_\eta^{\a,\theta,\l}(t,x)+\l t\,
 e^{-\l t} \sum_{n=1}^{\infty}\frac{(\l t)^n}{n!}
 \G_{n+\eta}^{\a,\theta,\l}(t,x)\\
 & \leq C_{0}\left( e^{-\l t} \l t\,
 \G_1^{\a,2(\eta+1)\theta,\l}(t,x)
 +\l t\,
 e^{-\l t} \sum_{n=1}^{\infty}\frac{(\l t)^n}{n!}
 \G_{n}^{\a,2(\eta+1)\theta,\l}(t,x) \right)
 \\ &
 \leq C_{0} (1+M T)\
 \G^{\a,2(\eta+1)\theta,\l}(t,x).
\end{align*}
\end{proof}

\begin{lemma}\label{lemestim7}
For any $T>0$ and $\eta\in\N$ with $\eta\geq 2$, there exists a positive constant $C$ such that
  \begin{align}\label{e40}
 \mathscr{C}_{\theta}^\eta \G^{\a,\theta,\l}(t,x)\leq C\, \mathscr{C}_{2\eta\theta}\G^{\a,2\eta \theta,\l}(t,x)\\
  \end{align}
for any $t\in(0,T]$ and $x\in\R$.
%for any $m\leq \a,\theta \leq M$, $0\leq\l\leq M$, $z\in\R$ and $0<s\leq\t$.
\end{lemma}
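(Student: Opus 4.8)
The plan is to follow the same pattern used for Lemma~\ref{lemestim5}, reducing the inequality to a uniform pointwise comparison between individual Gaussian terms. Recall that $\mathscr{C}_{\theta}^{\eta}\G^{\a,\theta,\l}(t,x)=e^{-\l t}\sum_{n\ge0}\frac{(\l t)^{n}}{n!}\G_{n+\eta}^{\a,\theta,\l}(t,x)$ (obtained by iterating $\mathscr{C}_{\theta}\G_{m}^{\a,\theta,\l}=\G_{m+1}^{\a,\theta,\l}$), while $\mathscr{C}_{2\eta\theta}\G^{\a,2\eta\theta,\l}(t,x)=e^{-\l t}\sum_{n\ge0}\frac{(\l t)^{n}}{n!}\G_{n+1}^{\a,2\eta\theta,\l}(t,x)$ by the identity stated just before Proposition~\ref{properr3}. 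Since the two series carry the same weights $e^{-\l t}(\l t)^{n}/n!$, it suffices to prove the termwise estimate $\G_{n+\eta}^{\a,\theta,\l}(t,x)\le C_{0}\,\G_{n+1}^{\a,2\eta\theta,\l}(t,x)$ for every $n\in\N_{0}$, $t\in(0,T]$, $x\in\R$ and $\a,\theta,\l$ as in \eqref{e30}, with $C_{0}$ depending only on $M,T,\eta$; summing then gives \eqref{e40} with $C=C_{0}$.

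To get the termwise bound I would write, for fixed $n$, the ratio $\G_{n+\eta}^{\a,\theta,\l}(t,x)/\G_{n+1}^{\a,2\eta\theta,\l}(t,x)=\sqrt{V_{R}/V_{L}}\;\exp\!\big(E_{n}(t,x)\big)$, where $V_{L}=\a t+(n+\eta)\theta$ and $V_{R}=\a t+(n+1)2\eta\theta$ are the two variances, $\mu_{L}=(\tfrac{\a}{2}+\l e^{\theta/2}-\l)t$ and $\mu_{R}=(\tfrac{\a}{2}+\l e^{\eta\theta}-\l)t$ the two ($n$-independent) means, and $E_{n}(t,x)=-\tfrac{(x-\mu_{L})^{2}}{2V_{L}}+\tfrac{(x-\mu_{R})^{2}}{2V_{R}}$. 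Three elementary observations do the job: (i) $V_{R}-V_{L}=\eta\theta+n\theta(2\eta-1)\ge\eta\theta\ge\eta M^{-1}>0$ for all $n\ge0$, so $V_{R}>V_{L}$ and $E_{n}$ is a concave parabola in $x$; (ii) after cancelling the common term $2n\eta\theta$ one has $V_{R}\le 2\eta\,V_{L}$ (equivalent to $(2\eta-1)\a t+2\eta\theta(\eta-1)\ge0$, true for $\eta\ge1$), so $\sqrt{V_{R}/V_{L}}\le\sqrt{2\eta}$; (iii) $|\mu_{L}-\mu_{R}|=\l(e^{\eta\theta}-e^{\theta/2})t\le MT(e^{\eta M}-1)$. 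Completing the square in $x$ yields $\sup_{x\in\R}E_{n}(t,x)=\tfrac{(\mu_{L}-\mu_{R})^{2}}{2(V_{R}-V_{L})}\le\tfrac{M^{3}T^{2}(e^{\eta M}-1)^{2}}{2\eta}$, uniformly in $n$ and in $\a,\theta,\l$ as in \eqref{e30}, so the termwise bound holds with $C_{0}=\sqrt{2\eta}\,\exp\!\big(\tfrac{M^{3}T^{2}(e^{\eta M}-1)^{2}}{2\eta}\big)$.

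The one point requiring care is the uniformity in $n$ of $\sup_{x}E_{n}$: it rests on the lower bound $V_{R}-V_{L}\ge\eta\theta$ being independent of $n$, which is precisely where having $2\eta\theta$ (rather than a smaller multiple of $\theta$) on the right-hand side, together with $\eta\ge 2$, is used. Everything else is routine Gaussian algebra, entirely parallel to Lemmas~\ref{lemestim3} and~\ref{lemestim5}; in fact one may instead imitate the proof of Lemma~\ref{lemestim5} verbatim, bounding $\G_{n+\eta}^{\a,\theta,\l}$ by a Gaussian with variance $V_{R}$ and mean $\mu_{R}$ times an explicit correction factor $q_{n}(t,x)$ whose maximum over $(0,T]\times\R$ is computed in closed form and shown to be bounded uniformly in $n$ and the parameters.
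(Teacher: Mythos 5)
Your proof is correct and follows essentially the same route as the paper: the paper also reduces \eqref{e40} to the termwise bound $\G_{n+\eta}^{\a,\theta,\l}(t,x)\le C\,\G_{n+1}^{\a,2\eta\theta,\l}(t,x)$ and sums against the common Poisson weights, the only difference being that it obtains this bound by quoting \eqref{eqestim2} (established in the proof of Lemma \ref{lemestim5}, applied with $\eta-1$ in place of $\eta$) whereas you re-derive it by a direct Gaussian computation. Incidentally, your direct argument does not actually use $\eta\ge 2$ (it works for $\eta\ge 1$); that restriction is only needed when invoking \eqref{eqestim2} with $\eta-1\ge 1$, as the paper does.
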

\begin{proof}
By \eqref{eqestim2}
\begin{align}
 \mathscr{C}_{\theta}^\eta \G^{\a,\theta,\l}(t,x)=e^{-\l t}\sum_{n=0}^{\infty}\frac{(\l t)^n}{n!}\G^{\a,\theta,\l}_{n+1+(\eta-1)}(t,x)\leq C\, e^{-\l t}\sum_{n=0}^{\infty}\frac{(\l t)^n}{n!}\G^{\a,2\eta\theta,\l}_{n+1}(t,x)= C\, \mathscr{C}_{2\eta\theta}\G^{\a,2\eta \theta,\l}(t,x).
\end{align}
\end{proof}

\begin{lemma}\label{lemestim2}
For any $T>0$, $\eta\in \N$ and $c>1$, there exists a positive constant $C$ such that
%\begin{enumerate}
%\item[(a)]
%  \begin{align}
\begin{align} \label{e22a}
 \left(\frac{\left|x\right|}{\sqrt{\a
 t+n\theta}}\right)^\eta\G_n^{\a,\theta,\l}(t,x)&\leq C \G_n^{c \a,c \theta,\l}(t,x),
\end{align}
%  \end{align}
%\end{enumerate}
for any $x\in\R$, $t\in(0,T]$ and $n\in\N_0$.
\end{lemma}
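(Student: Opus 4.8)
The plan is to mimic the argument already used for Lemma \ref{lemestim3}, or — more economically — to reduce the estimate directly to the bound \eqref{e21bb} established in the course of that proof. Set $\mu=\mu(\a,\theta,\l):=\frac{\a}{2}+\l e^{\frac{\theta}{2}}-\l$, so that $\G_n^{\a,\theta,\l}(t,\cdot)$ is a Gaussian density centered at $\mu t$ with variance $\a t+n\theta$. First I would use the triangle inequality $|x|\le |x-\mu t|+|\mu|\,t$ together with $\a t+n\theta\ge \a t\ge M^{-1}t$ to obtain
\begin{align}
\frac{|x|}{\sqrt{\a t+n\theta}}\le \frac{|x-\mu t|}{\sqrt{\a t+n\theta}}+|\mu|\sqrt{t/\a}\le \frac{|x-\mu t|}{\sqrt{\a t+n\theta}}+K,
\end{align}
where $K$ is a constant depending only on $M$ and $T$ and is uniform in $n\in\N_0$ and in $\a,\theta,\l$ satisfying \eqref{e30} (recall that $|\mu|$ is bounded on that range).

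Next I would raise the previous inequality to the power $\eta$ and expand by the binomial theorem, getting
\begin{align}
\left(\frac{|x|}{\sqrt{\a t+n\theta}}\right)^{\eta}\le \sum_{j=0}^{\eta}\binom{\eta}{j}K^{\eta-j}\left(\frac{|x-\mu t|}{\sqrt{\a t+n\theta}}\right)^{j}.
\end{align}
Multiplying through by $\G_n^{\a,\theta,\l}(t,x)$ and applying \eqref{e21bb} with $k=\eta$ to each of the $\eta+1$ terms — the $j=0$ term being exactly the statement $\G_n^{\a,\theta,\l}\le C\,\G_n^{c\a,c\theta,\l}$ — yields
\begin{align}
\left(\frac{|x|}{\sqrt{\a t+n\theta}}\right)^{\eta}\G_n^{\a,\theta,\l}(t,x)\le C\left(\sum_{j=0}^{\eta}\binom{\eta}{j}K^{\eta-j}\right)\G_n^{c\a,c\theta,\l}(t,x)=C(1+K)^{\eta}\,\G_n^{c\a,c\theta,\l}(t,x),
\end{align}
which is \eqref{e22a} after renaming the constant.

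Alternatively, one can argue directly as in the proof of Lemma \ref{lemestim3}: write $\bigl(\frac{|x|}{\sqrt{\a t+n\theta}}\bigr)^{\eta}\G_n^{\a,\theta,\l}(t,x)=\G_n^{c\a,c\theta,\l}(t,x)\,q_{n,\eta}(t,x)$ with
\begin{align}
q_{n,\eta}(t,x)=\left(\frac{|x|}{\sqrt{\a t+n\theta}}\right)^{\eta}\exp\!\left(-\frac{(x-\mu t)^2}{2(\a t+n\theta)}+\frac{(x-\mu_c t)^2}{2c(\a t+n\theta)}\right),\qquad \mu_c:=\frac{c\a}{2}+\l e^{\frac{c\theta}{2}}-\l ,
\end{align}
and observe that the exponent equals $-\frac{c-1}{2c(\a t+n\theta)}\,x^2$ plus lower order terms in $x$ whose coefficients remain bounded for $t\in(0,T]$ and $\a,\theta,\l$ in \eqref{e30}; hence $q_{n,\eta}$ is a polynomial in $|x|/\sqrt{\a t+n\theta}$ times a Gaussian in that same variable with strictly negative leading coefficient $-\frac{c-1}{2c}$, so it attains over $x\in\R$ a maximum that can be computed (or estimated) explicitly and is uniformly bounded in $n\in\N_0$ and in the parameters, exactly as in Lemma \ref{lemestim3}. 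Since everything reduces to elementary Gaussian bookkeeping, there is no genuine obstacle here; the only point that needs a little care — as throughout Section \ref{sec:pointwise_estimates} — is that the constant $C$ must be independent of $n$, which is precisely what the explicit maximization (or the reduction to \eqref{e21bb}) guarantees.
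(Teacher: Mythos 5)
Your argument is correct, and your primary route is genuinely different from the paper's. The paper proves \eqref{e22a} from scratch by a three-step Gaussian comparison: it first discards the drift, passing from the shifted exponent with variance parameter $\a t+n\theta$ to a centered one with parameter enlarged by $c^{1/3}$ (inequality \eqref{eqestim4}), then absorbs the factor $\bigl(|x|/\sqrt{\a t+n\theta}\bigr)^{\eta}$ at the cost of enlarging $c^{1/3}$ to $c^{2/3}$ (inequality \eqref{eqestim5}), and finally reinstates the drift of the $c$-scaled density (inequality \eqref{eqestim6}), each constant being uniform in $n$ and in $\a,\theta,\l$ satisfying \eqref{e30}. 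You instead recycle the bound \eqref{e21bb} already obtained inside the proof of Lemma \ref{lemestim3}, which controls powers of the \emph{centered} variable $|x-\mu t|/\sqrt{\a t+n\theta}$, and you pass from $|x-\mu t|$ to $|x|$ by the triangle inequality plus the binomial theorem, using that $|\mu|\sqrt{t/\a}$ is bounded under \eqref{e30} for $t\le T$; this is shorter and makes the logical dependence on Lemma \ref{lemestim3} explicit, at the price of a harmless constant of order $(1+K)^{\eta}$ (irrelevant since $\eta$ is fixed), and it legitimately includes the $j=0$ term since \eqref{e21bb} is stated for all $j\le k$. Your alternative argument --- writing the ratio $q_{n,\eta}$ and maximizing it in $x$ --- is essentially the paper's method compressed into a single step, and it works for the stated reason: in the variable $u=x/\sqrt{\a t+n\theta}$ the exponent is a quadratic with leading coefficient $-\frac{c-1}{2c}$ and lower-order coefficients bounded uniformly in $n$, $t\in(0,T]$ and the parameters in \eqref{e30}. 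The only cosmetic slip (shared with the paper's own proof of Lemma \ref{lemestim3}) is that the identity defining $q_{n,\eta}$ omits the bounded normalization factor $\sqrt{c}$ coming from the ratio of the prefactors $\sqrt{2\pi(\a t+n\theta)}$ and $\sqrt{2\pi c(\a t+n\theta)}$; this only changes the constant.
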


%We first treat the case $n\geq 1$. A direct computation shows that
%  $$\lim_{n\to\infty}\left(\max_{z\in\R}q_n(t,x)\right)=\left(\frac{c N }{(c-1)e}\right)^{N/2},$$
%and thus $q_n(t,x)$ is bounded as a function of $(n,z)$ for any $(s,\a,\theta,\l)\in
%(0,\t]\times [m,M]^2\times [0,M]$. Therefore, uniform continuity of $q_n$ with respect to
%$(n,s,z,\a,\theta,\l)$ implies boundedness on $\N^+\times[0,\t]\times\R\times [m,M]^2\times
%[0,M]$.\\ In the case $n=0$ we can extend $g_0(\cdot,z)$ with continuity on $[0,\t]$ setting
%$g_0(0,z)\equiv 0$. Note that $g_0(s,\cdot)$ has a maximum for any $(s,\a,\theta,\l)\in
%[0,\t]\times [m,M]^2\times [0,M]$, and thus, as $g_0$ is uniformly continuous with respect to
%$(s,z,\a,\theta,\l)$, it is also bounded on $[0,\t]\times\R\times [m,M]^2\times [0,M]$.
%\\
%\end{proof}
%
%\begin{lemma}\label{lemestim2bis}
%For any $0<m\leq M$, $\t>0$, $N\geq 1$ and $c>1$, there exists a positive constant $C_{m,M,\t,N,c}$ such that
%  \begin{align}
%  \left(\frac{\left|z\right|}{\sqrt{\a s+n\theta}}\right)^N\G_n^{\a,\theta,\l}(t,x)\leq C_{m,M,\t,N,c} \G_n^{c \a,c \theta,\l}(t,x)
%  \end{align}
%for any $m\leq \a,\theta \leq M$, $0\leq\l\leq M$, $z\in\R$, $0<s\leq\t$ and $n\in\N$.
%\end{lemma}
%\begin{proof}
\begin{proof} We first show that there exist three constants $C_1=C_{1}(M,T,\eta,c)$, $C_{2}=C_{2}(\eta,c)$ and
$C_{3}=C_{3}(M,T,\eta,c)$ such that
\begin{align}\label{eqestim4}
  e^{-\frac{\left(x-\left(\frac{\a}{2}+\l  e^{\frac{\theta}{2}}-\l  \right)t\right)^2} {2(\a
  t+n\theta)}}&\leq C_{1} e^{-\frac{x^2} {2c^{1/3}(\a T+n\theta)}},\\
 \label{eqestim5}
  \left(\frac{\left|x\right|}{\sqrt{\a t+n\theta}}\right)^\eta  e^{-\frac{x^2}
  {2c^{1/3}(\a T+n\theta)}}&\leq  C_{2} e^{-\frac{x^2} {2c^{2/3}(\a T+n\theta)}},\\
 \label{eqestim6}
  e^{-\frac{x^2} {2c^{2/3}(\a T+n\theta)}}&\leq C_{3}
  e^{-\frac{\left(x-\left(\frac{c\a}{2}+\l  e^{\frac{c\theta}{2}}-\l  \right)t\right)^2} {2c(\a
  t+n\theta)}} ,
\end{align}
for any $x\in\R$, $t\in(0,T]$ and $n\ge 0$. In order to prove \eqref{eqestim4} we consider
%$$
%\G_n^{\a,\theta,\l}(t,x)=\frac{1}{\sqrt{2\pi (\a s+n\theta)}}e^{-\frac{z^2} {2c^{1/3}(\a s+n\theta)}} q_n(t,x),
%$$
%where
 $$ q_n(t,x)=\exp\left(-\frac{\left(x-\left(\frac{\a}{2}+\l  e^{\frac{\theta}{2}}-\l
 \right)t\right)^2} {2(\a t+n\theta)}+\frac{x^2} {2c^{1/3}(\a t+n\theta)}\right) , $$
and show that
 $$ \max_{x\in\R}q_n(t,x)=\exp\left(\frac{\left(  \frac{\a}{2}+\l
 e^{\frac{\theta}{2}}-\l \right)^2 t^2}{2 (c^{1/3}-1) \left(t \alpha +n \theta\right)}\right)\leq
 \exp\left(\frac{\left( \frac{\a}{2}+\l e^{\frac{\theta}{2}}-\l \right)^2 T}{2 (c^{1/3}-1)}\right) , $$
for any $t\in(0,T]$. Thus $q_n$ is bounded on $(0,T]\times \R$, uniformly in $n\ge0$ and
$\a,\theta,\l$ in \eqref{e30}. The proof of \eqref{eqestim6} is completely analogous.  Equation
\eqref{eqestim5} comes directly by setting
 $$C_{2}=\max_{a\in\R^+} \left(a^\eta e^{-\frac{a^2} {2c^{1/3}}+\frac{a^2} {2c^{2/3}}}\right)=e^{-\frac{\eta}{2}}
\left(\frac{c^{1/3} \sqrt{\eta}}{\sqrt{c^{1/3}-1}}\right)^{\eta}. $$ Now, by \eqref{eqestim4} we have
\begin{align*}
 \left(\frac{\left|x\right|}{\sqrt{\a t+n\theta}}\right)^\eta\G_n^{\a,\theta,\l}(t,x)&\leq C_{1}
 \left(\frac{\left|x\right|}{\sqrt{\a t+n\theta}}\right)^\eta \frac{e^{-\frac{x^2}
 {2c^{1/3}(\a T+n\theta)}}}{\sqrt{2\pi (\a T+n\theta)}}
\intertext{(by \eqref{eqestim5})}
 &\leq C_{1}C_{2}\frac{e^{-\frac{x^2} {2c^{2/3}(\a T+n\theta)}}}{\sqrt{2\pi (\a T+n\theta)}}
\intertext{(by \eqref{eqestim6})}
 &\leq C_{1}C_{2}C_{3}  \sqrt{c}\ \G_n^{c \a,c \theta,\l}(t,x).
\end{align*}
\end{proof}

%
%\begin{lemma}\label{lemestim6}
%For any $T>0$ and $c>1$ there exists a positive constant $C$ such that
%\begin{align}\label{eqbound12}
% |x|\G^{\a,\theta,\l}(t,x)\leq C \left(\G^{c \a, c \theta,\l}(t,x)+\G^{c \a, 4c \theta,\l}(t,x)\right),
%\end{align}
%for any $t\in]0,T]$ and $x\in\R$.
%\end{lemma}
%\begin{proof}
%By Lemma \ref{lemestim2} there is a constant $C_{0}$, only dependent on $m,M,T$ and $c$, such that
%\begin{align*}
% |x|\G^{\a,\theta,\l}(t,x)&\leq C_{0} e^{-\l t}\sum_{n=0}^{\infty}\frac{(\l t)^n}{n!}\sqrt{\a t+n
% \theta}\ \G_n^{c\a,c\theta,\l}(t,x)\\ &\leq C_{0} \sqrt{M T}
% \G^{c\a,c\theta,\l}(t,x)+C_0\sqrt{M}e^{-\l t}\sum_{n=0}^{\infty}\frac{(\l t)^n}{n!}n
% \G_n^{c\a,c\theta,\l}(t,x)\\ &= C_{0} \sqrt{M T} \G^{c\a,c\theta,\l}(t,x)+C_0\sqrt{M}\l t\,
% \mathscr{C}_{c\theta}\G^{c\a,c\theta,\l}(t,x),
%\end{align*}
%for any $t\in]0,T]$ and $x\in\R$ and $\a,\theta,\l$ in \eqref{e30}. Therefore, by Lemma
%\ref{lemestim5} with $N=1$ we get \eqref{eqbound12}.
%\end{proof}

\begin{lemma}\label{lemestim8}
For any $T>0$, $c>1$ and $j\in\N\cup\{0\}$ there exists a positive constant $C$ such that
{\begin{align}\label{eqbound15}
 |x|\mathscr{C}^{j}_{\theta}\G^{\a,\theta,\l}(t,x)\leq C\, \left(\mathscr{C}^{j}_{c\theta}
 \G^{c \a,  c \theta,\l}(t,x)+ \mathscr{C}^{j}_{4c\theta} \G^{c \a, 4 c \theta,\l}(t,x)\right),
\end{align}}
for any $t\in(0,T]$ and $x\in\R$.
\end{lemma}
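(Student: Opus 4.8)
The plan is to work at the level of the Poisson series $\mathscr{C}^{j}_{\theta}\G^{\a,\theta,\l}(t,x)=e^{-\l t}\sum_{n\ge0}\frac{(\l t)^{n}}{n!}\G^{\a,\theta,\l}_{n+j}(t,x)$ and to bound $|x|$ by $|x-\bar b\,t|+|\bar b\,t|$, where $\bar b:=\frac{\a}{2}+\l e^{\theta/2}-\l$ is chosen so that every $\G^{\a,\theta,\l}_{m}(t,\cdot)$ is centred at $\bar b\,t$. Since $|\bar b\,t|\le C_{0}T$ by \eqref{e30}, we get $|\bar b\,t|\,\mathscr{C}^{j}_{\theta}\G^{\a,\theta,\l}(t,x)\le C_{0}T\,\mathscr{C}^{j}_{\theta}\G^{\a,\theta,\l}(t,x)$, and applying the $\eta=0$ case of Lemma~\ref{lemestim2} term by term in $n$ this is $\le C\,\mathscr{C}^{j}_{c\theta}\G^{c\a,c\theta,\l}(t,x)$, already absorbed by the first term on the right of \eqref{eqbound15}. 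It therefore remains to estimate $|x-\bar b\,t|\,\mathscr{C}^{j}_{\theta}\G^{\a,\theta,\l}(t,x)$.

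For each Poisson term I would start from the elementary inequality $|y|\,e^{-y^{2}/2\sigma^{2}}\le \sigma\sqrt{\tfrac{c'}{(c'-1)e}}\,e^{-y^{2}/2c'\sigma^{2}}$ with $c'=\sqrt c$, $\sigma^{2}=\a t+m\theta$ and $y=x-\bar b\,t$; reinserting the normalizing factor and reconciling the $O(t)$-shift between the centre $\bar b\,t$ and the (strictly wider) comparison Gaussian by a completing-the-square argument of the type used in the proof of Lemma~\ref{lemestim2}, this yields
\begin{align}
|x-\bar b\,t|\,\G^{\a,\theta,\l}_{m}(t,x)\ \le\ C\,\sqrt{\a t+m\theta}\ \G^{c\a,c\theta,\l}_{m}(t,x).
\end{align}
I would then split $\sqrt{\a t+m\theta}\le\sqrt{MT}+\sqrt{M}\,\sqrt{m}$. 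The bounded summand $\sqrt{MT}$, after summation over the Poisson weights, gives a multiple of $\mathscr{C}^{j}_{c\theta}\G^{c\a,c\theta,\l}(t,x)$, i.e.\ the first term on the right of \eqref{eqbound15}. For the summand $\sqrt M\sqrt m=\sqrt M\sqrt{n+j}\le\sqrt M(\sqrt n+\sqrt j)$, the part carrying $\sqrt j$ again falls into the first term, while for the part carrying $\sqrt n$ I would use $\frac{\sqrt n}{n!}\le\frac{1}{(n-1)!}$ to raise the convolution index:
\begin{align}
e^{-\l t}\sum_{n\ge1}\frac{(\l t)^{n}}{n!}\,\sqrt n\,\G^{c\a,c\theta,\l}_{n+j}(t,x)\ \le\ \l t\,\mathscr{C}^{j+1}_{c\theta}\G^{c\a,c\theta,\l}(t,x).
\end{align}

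The last and genuinely delicate step is to bound $\l t\,\mathscr{C}^{j+1}_{c\theta}\G^{c\a,c\theta,\l}(t,x)$ by a multiple of the second term $\mathscr{C}^{j}_{4c\theta}\G^{c\a,4c\theta,\l}(t,x)$ of \eqref{eqbound15}. Here I would invoke Lemma~\ref{lemestim5} with $\eta=j+1$ (and base parameters $c\a,c\theta$), which turns the extra factor $\l t$ together with the raised index into the expression $C\,\G^{c\a,2(j+2)c\theta,\l}(t,x)$ carrying a wider jump variance, and then re-express that expression inside $\mathscr{C}^{j}_{4c\theta}\G^{c\a,4c\theta,\l}(t,x)$ by one more Gaussian comparison of the type of Lemma~\ref{lemestim2}. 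It is precisely for this last step that the enlargement factor $4$ (rather than $c$) is built into the variance parameter of the second term of \eqref{eqbound15}: the right-hand Gaussians must, term by term, be at least as spread out as those produced by Lemma~\ref{lemestim5}, while still absorbing the $O(t)$-shifts of their centres — which, exactly as in the proofs of Lemmas~\ref{lemestim1}--\ref{lemestim5}, is harmless only because the relevant ratio is $O(t)$ and hence uniformly bounded on $(0,T]$. The main obstacle is exactly this bookkeeping: multiplying by $|x|$ can be compensated either by a bounded constant (near the centre) or only by trading one factor of $|x|$ for an extra convolution plus a controlled widening of the jump variance, and it is the interplay of these trades with the small-variance, shifted-centre Gaussian comparisons that makes the estimate non-routine.
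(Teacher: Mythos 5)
Your argument coincides with the paper's proof up to the very last step: bounding $|x|\,\G^{\a,\theta,\l}_{n+j}(t,x)$ by $C\sqrt{\a t+(n+j)\theta}\;\G^{c\a,c\theta,\l}_{n+j}(t,x)$ (the paper gets this in one stroke from Lemma \ref{lemestim2} with $\eta=1$, so your drift-splitting $|x|\le|x-\bar b t|+|\bar b t|$ is redundant but harmless), splitting $\sqrt{\a t+(n+j)\theta}$ into a bounded part plus a part growing in $n$, and converting the latter into $\l t\,\mathscr{C}^{j+1}_{c\theta}\G^{c\a,c\theta,\l}(t,x)$. The genuine gap is in your final step when $j\ge 1$. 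Applying Lemma \ref{lemestim5} with $\eta=j+1$ produces $C\,\G^{c\a,2(j+2)c\theta,\l}(t,x)$, whose per-jump variance $2(j+2)c\theta$ strictly exceeds $4c\theta$ as soon as $j\ge1$; the comparison you then invoke, $\G^{c\a,2(j+2)c\theta,\l}(t,x)\le C\,\mathscr{C}^{j}_{4c\theta}\G^{c\a,4c\theta,\l}(t,x)$, would have to dominate a wider Gaussian mixture by a narrower one. Term by term the needed variance inequality $c\a t+2(j+2)c\theta\,n\le c\a t+4c\theta(n+j)$ reads $2jn\le 4j$ and fails for every $n\ge 3$; and this is not an artifact of the term-by-term method: for these compound-Poisson sums of Gaussians the logarithm of the tail behaves like $-\tfrac{|x|}{\sqrt{v}}\sqrt{2\log|x|}$ with $v$ the per-jump variance, and the $j$ extra convolutions on the right only add the fixed variance $4jc\theta$ to each term without changing $v$, so the ratio of your left-hand side to the right-hand side tends to infinity as $|x|\to\infty$ and no constant $C$ exists. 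You even state the correct requirement -- the right-hand Gaussians must, term by term, be at least as spread out -- but the choice $\eta=j+1$ violates it for every $j\ge1$.

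The repair is exactly the case distinction the paper makes: use the factor $\l t$ (i.e.\ Lemma \ref{lemestim5}) only for $j=0$, where $\eta=1$ yields precisely $\G^{c\a,4c\theta,\l}=\mathscr{C}^{0}_{4c\theta}\G^{c\a,4c\theta,\l}$. For $j\ge1$ the $\l t$ gain is unnecessary: bound $\l t\le MT$ and lower the convolution order by one via the term-by-term estimate $\G^{c\a,c\theta,\l}_{n+j+1}(t,x)\le C\,\G^{c\a,4c\theta,\l}_{n+j}(t,x)$, which is \eqref{eqestim2} with $\eta=1$ (this is the content of Lemma \ref{lemestim7}) and is legitimate precisely because $n+j\ge1$, i.e.\ every Gaussian on the right already carries at least one convolution's worth of variance $4c\theta$. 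Summing against the Poisson weights gives $\mathscr{C}^{j+1}_{c\theta}\G^{c\a,c\theta,\l}(t,x)\le C\,\mathscr{C}^{j}_{4c\theta}\G^{c\a,4c\theta,\l}(t,x)$, which is the second term of \eqref{eqbound15}. With this modification your argument closes; as written, the case $j\ge1$ does not.
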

\begin{proof}
By Lemma \ref{lemestim2} there is a constant $C_{0}$, only dependent on $m,M,T$ and $c$, such that
\begin{align*}
 |x|\mathscr{C}^{j}_{\theta}\G^{\a,\theta,\l}(t,x)&\leq C_{0} e^{-\l t}\sum_{n=0}^{\infty}
 \frac{(\l t)^n}{n!}\sqrt{\a t+(n+j)\theta} \G_{n+j}^{c\a,c\theta,\l}(t,x)\\ &\leq C_{0} \sqrt{M}(\sqrt{T}+j)
 \mathscr{C}^{j}_{c\theta}\G^{c\a ,c\theta,\l}(t,x)+C_0\sqrt{M}e^{-\l t}\sum_{n=0}^{\infty}\frac{(\l t)^n}{n!}n
 \G_{n+j}^{c\a ,c\theta,\l}(t,x)\\
 &\leq C_{0} \sqrt{M}(\sqrt{T}+j)
 \mathscr{C}^{j}_{c\theta}\G^{c\a ,c\theta,\l}(t,x)+C_0 M^{\frac{3}{2}} t\, \mathscr{C}^{j+1}_{c\theta}\G^{c\a,c\theta,\l}(t,x),
\end{align*}
for any $t\in(0,T]$ and $x\in\R$ and $\a,\theta,\l$ in \eqref{e30}. Therefore, the thesis follows
from  Lemma \ref{lemestim5} for $j=0$ and from Lemma \ref{lemestim7} for $j\ge 1$.
\end{proof}

\begin{lemma}\label{lemestim9}
For any $T>0$ and $\eta,k\in\N$ we have
\begin{align}
\mathscr{C}_{\theta}^{\eta}\bar{\G}^{\a,\theta,\l}(t,x)\leq \sqrt{k+1}\,
\mathscr{C}_{\theta}^{\eta+k}\bar{\G}^{\a,\theta,\l}(t,x),\qquad t\in]0,T], \ x\in\R.
\end{align}
\end{lemma}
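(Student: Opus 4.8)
The plan is to reduce the claimed inequality to an elementary termwise comparison of centered one-dimensional Gaussian kernels. First, iterating $\eta$ times the identity
$\mathscr{C}_{\theta}\bar{\G}^{\a,\theta,\l}(t,\cdot)(x)=e^{-\l t}\sum_{n=0}^{\infty}\frac{(\l t)^n}{n!}\bar{\G}_{n+1}^{\a,\theta}(t,x)$
recorded just before Proposition \ref{properr3} (which itself follows from $\mathscr{C}_{\theta}\bar{\G}_{m}^{\a,\theta}(t,\cdot)(x)=\bar{\G}_{m+1}^{\a,\theta}(t,x)$, i.e. from the fact that convolving a centered Gaussian of variance $\a t+m\theta$ with $\mathscr{N}_{0,\theta}$ raises the variance by $\theta$), one obtains, for every $\eta\in\N$,
\begin{align*}
\mathscr{C}_{\theta}^{\eta}\bar{\G}^{\a,\theta,\l}(t,x)=e^{-\l t}\sum_{n=0}^{\infty}\frac{(\l t)^n}{n!}\,\bar{\G}_{n+\eta}^{\a,\theta}(t,x).
\end{align*}
Comparing these two series coefficient by coefficient, it then suffices to establish the pointwise bound
\begin{align*}
\bar{\G}_{m}^{\a,\theta}(t,x)\leq \sqrt{k+1}\,\bar{\G}_{m+k}^{\a,\theta}(t,x),\qquad x\in\R,\ t\in(0,T],
\end{align*}
for every integer $m\geq 1$ and every $k\in\N$; indeed the index $m=n+\eta$ appearing above is always $\geq 1$ since $\eta\in\N$, and multiplying by $e^{-\l t}(\l t)^n/n!$ and summing over $n$ then yields the lemma.

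For the termwise estimate I would simply write out the ratio of the two kernels using $\bar{\G}_{m}^{\a,\theta}(t,x)=\frac{1}{\sqrt{2\pi(\a t+m\theta)}}\exp\left(-\frac{x^{2}}{2(\a t+m\theta)}\right)$. Since $\a t+(m+k)\theta>\a t+m\theta>0$, the Gaussian factor of $\bar{\G}_{m+k}^{\a,\theta}$ dominates that of $\bar{\G}_{m}^{\a,\theta}$ for every $x$, so the inequality reduces to the comparison of normalizing constants
\begin{align*}
\sqrt{\frac{\a t+(m+k)\theta}{\a t+m\theta}}\leq\sqrt{k+1}.
\end{align*}
Squaring and clearing denominators, this is equivalent to $0\leq (k+1)(\a t+m\theta)-\bigl(\a t+(m+k)\theta\bigr)=k\bigl(\a t+(m-1)\theta\bigr)$, which holds because $\a,\theta>0$, $t>0$ and $m-1\geq 0$.

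There is no genuine obstacle in this argument; the only point worth flagging is the role of the hypothesis $\eta\in\N$ (that is, $\eta\geq1$): it is precisely what guarantees $m=n+\eta\geq1$ in the termwise comparison, hence $\a t+(m-1)\theta\geq 0$, hence the admissibility of the constant $\sqrt{k+1}$, which is independent of $x$, $t$, $\a$, $\theta$, $\l$ and $n$ as required for the subsequent summation. For $m=0$ the bound would in general fail whenever $\theta>\a t$, which is why the statement is phrased for $\eta\in\N$ rather than $\eta\in\N_{0}$.
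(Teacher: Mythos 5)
Your proof is correct and follows essentially the same route as the paper: a termwise comparison of the series, reducing to the bound $\max_{x}\bar{\G}_{m}^{\a,\theta}(t,x)/\bar{\G}_{m+k}^{\a,\theta}(t,x)=\sqrt{(\a t+(m+k)\theta)/(\a t+m\theta)}\leq\sqrt{k+1}$ for $m=n+\eta\geq 1$, which is exactly the computation in the paper. Your additional remark on why $\eta\geq 1$ is needed (the case $m=0$ can fail when $\theta>\a t$) is accurate and consistent with the paper's restriction.
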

\begin{proof}
%For any $n\geq 0$ we have
%$$
%\G^{\a,\theta,\l}_{n+N}(t,x)=
%\frac{1}{\sqrt{\a t +(n+N)\theta}} e^{-\frac{\left(x-\left(\frac{\a}{2}-\l  e^{\frac{\theta}{2}}+\l  \right)t\right)^2}{2(\a t+(n+N+k)\theta)}}
%  \G^{\a,\theta,\l}_{n+N+k}(t,x)q_n(t,x),
%$$
%with
%$$
%q_n(t,x)=\frac{\sqrt{\a t +(n+N+k)\theta}}{\sqrt{\a t +(n+N)\theta}}\exp\left(  -\frac{\left(x-
%  \left(\frac{\a}{2}-\l  e^{\frac{\theta}{2}}+\l  \right)t\right)^2}
%  {2(\a t+(n+N)\theta)}+\frac{\left(x-
%  \left(\frac{\a}{2}-\l  e^{\frac{\theta}{2}}+\l  \right)t\right)^2}
%  {2(\a t+(n+N+k)\theta)}  \right).
%$$
A direct computation shows that
\begin{align}
  \max_{x\in \R} \frac{\bar{\G}^{\a,\theta}_{n+\eta}(t,x)}{\bar{\G}^{\a,\theta}_{n+\eta+k}(t,x)}=\frac{\sqrt{\a t +(n+\eta+k)\theta}}{\sqrt{\a t +(n+\eta)\theta}}\leq \sqrt{k+1} ,
\end{align}
for any $t\le T$, $n\ge 0$, $\eta\ge 1$ and $\a,\theta,\l$ in \eqref{e30}.  This concludes the proof.
\end{proof}

\bibliographystyle{chicago}
\bibliography{LPP-bib}

\clearpage
\begin{figure}
\centering
\begin{tabular}{ccc}
$n=1$ & $n=1$ & $n=1$ \\
\includegraphics[width=.31\textwidth,height=.182\textheight]{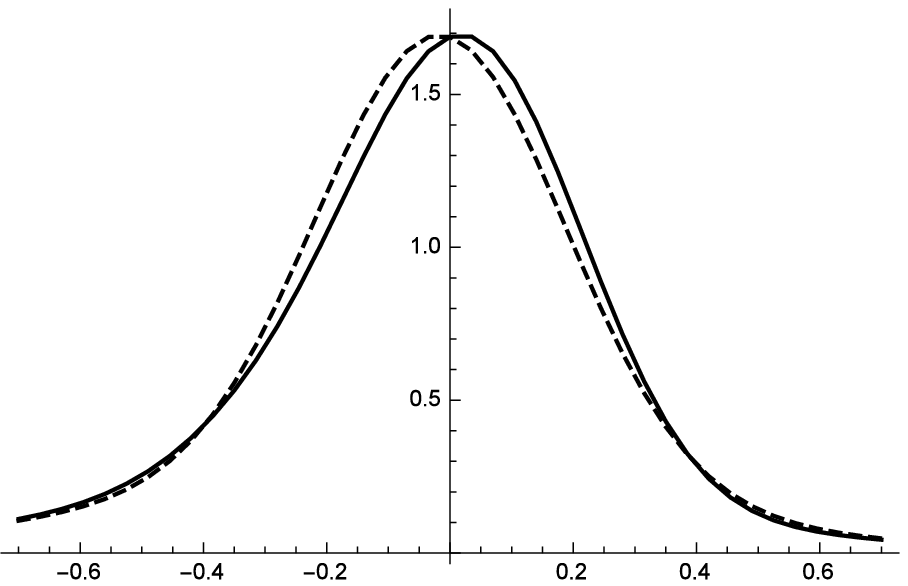} &
\includegraphics[width=.31\textwidth,height=.182\textheight]{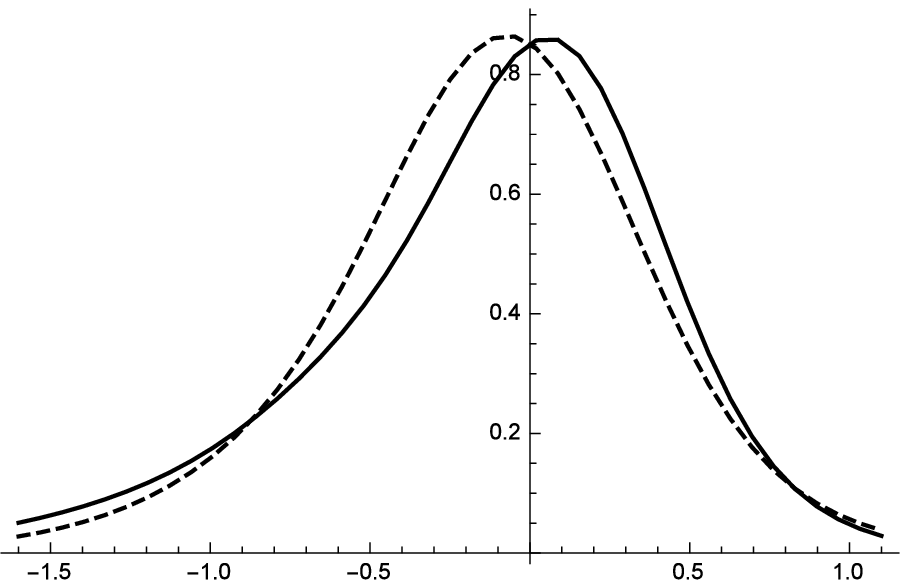} &
\includegraphics[width=.31\textwidth,height=.182\textheight]{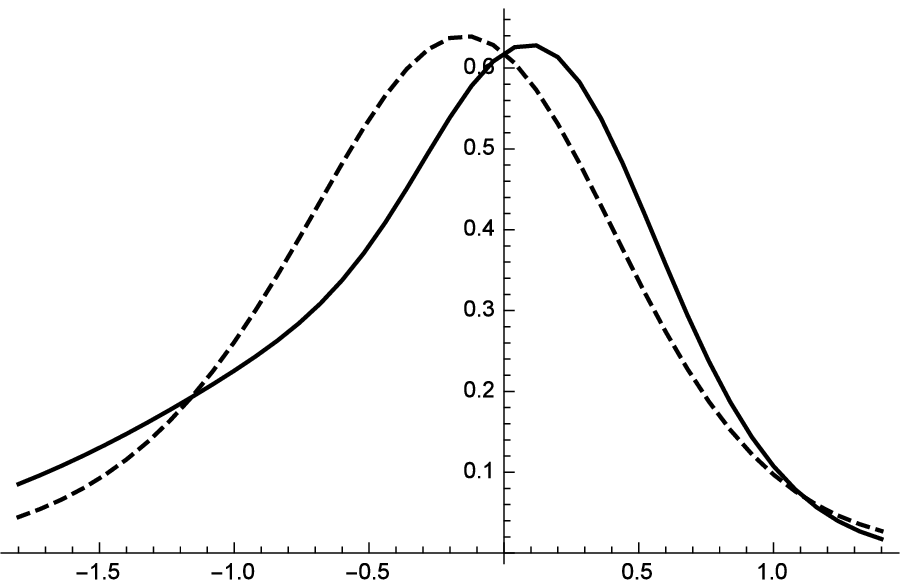} \\
$n=2$ & $n=2$ & $n=2$ \\
\includegraphics[width=.31\textwidth,height=.182\textheight]{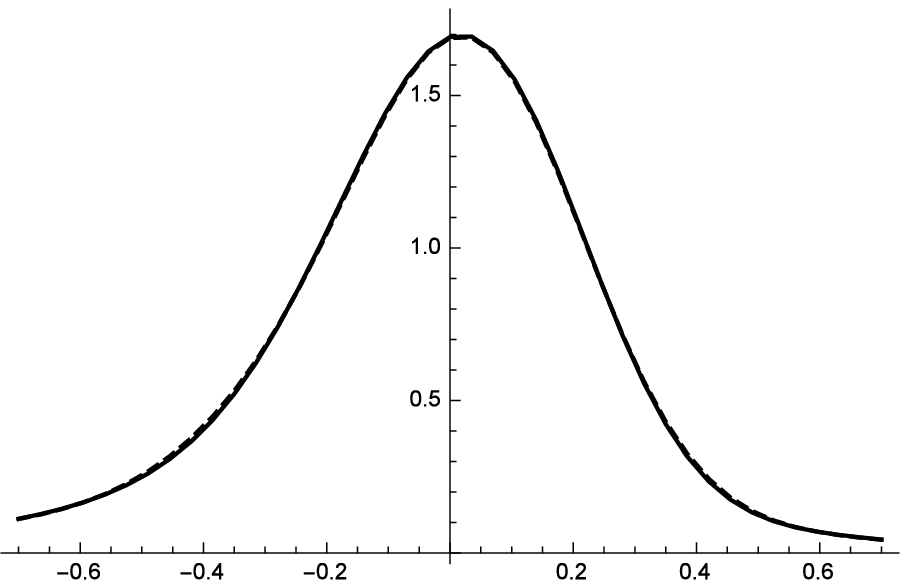} &
\includegraphics[width=.31\textwidth,height=.182\textheight]{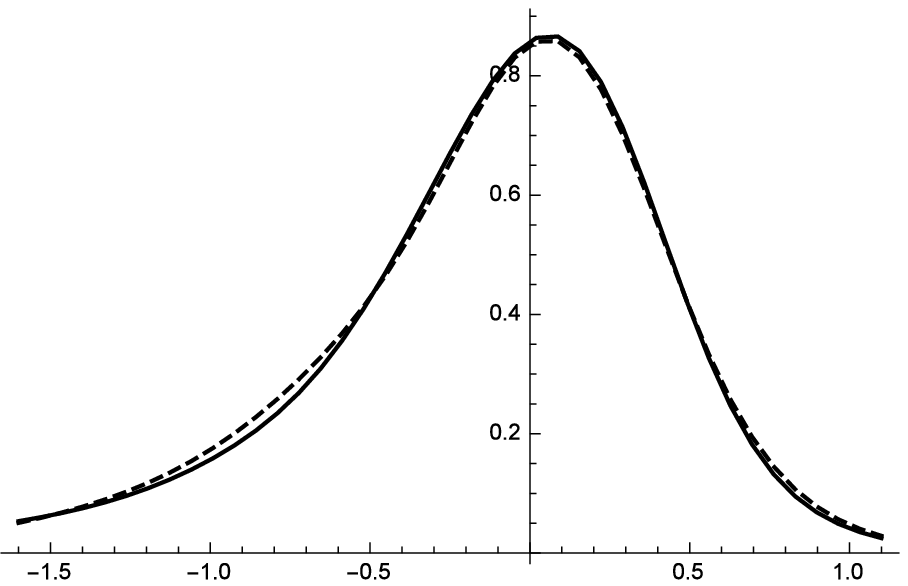} &
\includegraphics[width=.31\textwidth,height=.182\textheight]{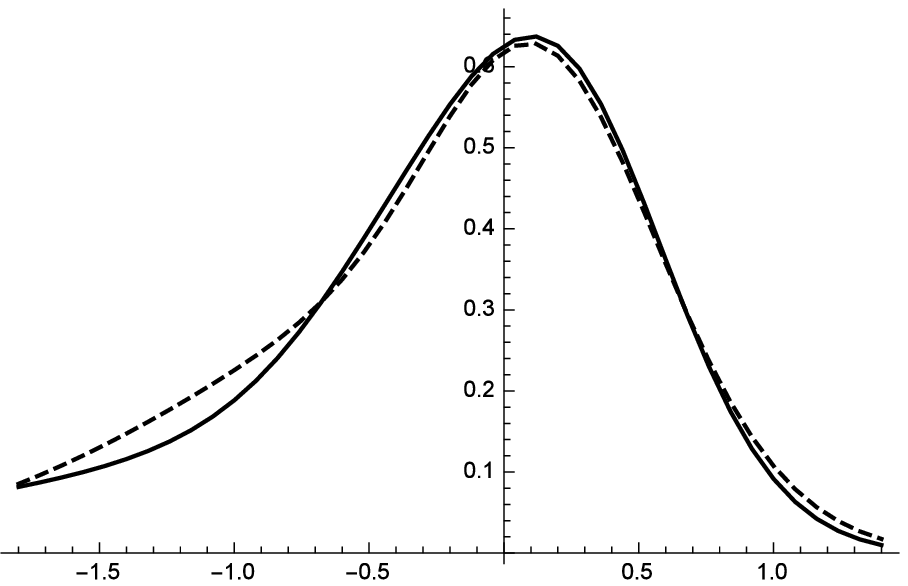} \\
$n=3$ & $n=3$ & $n=3$ \\
\includegraphics[width=.31\textwidth,height=.182\textheight]{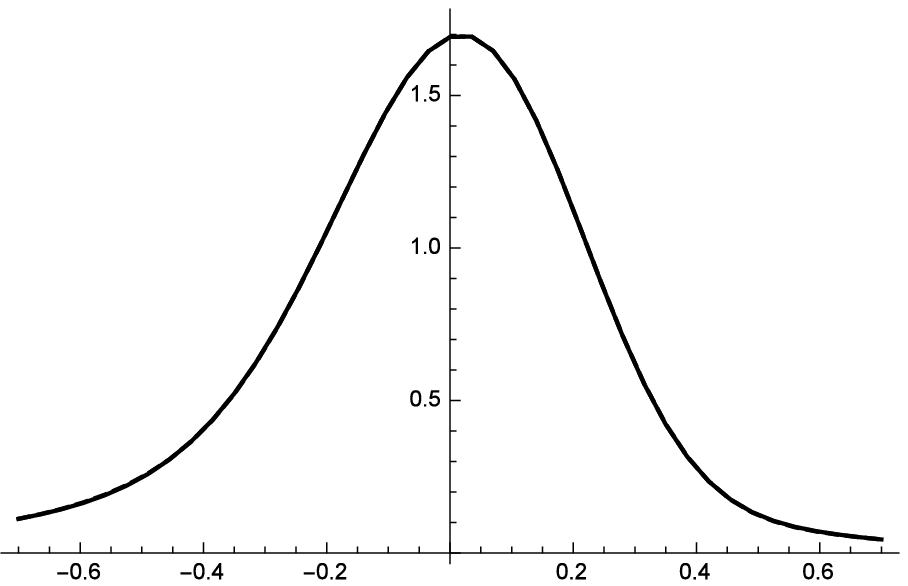} &
\includegraphics[width=.31\textwidth,height=.182\textheight]{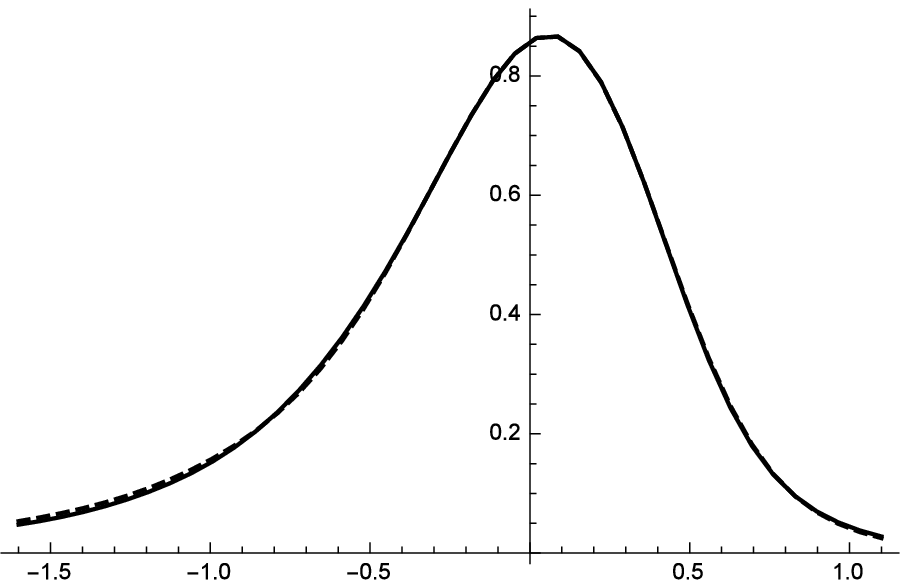} &
\includegraphics[width=.31\textwidth,height=.182\textheight]{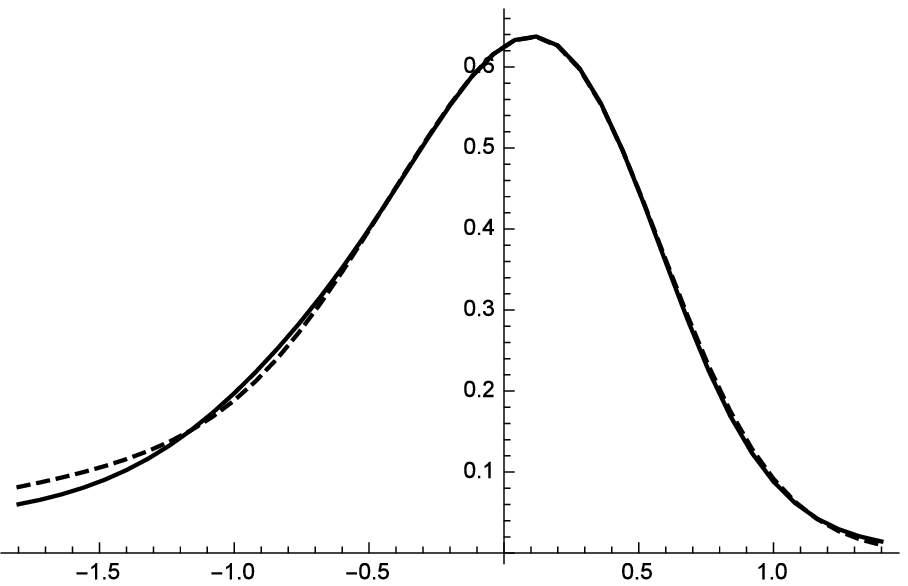} \\
$n=4$ & $n=4$ & $n=4$ \\
\includegraphics[width=.31\textwidth,height=.182\textheight]{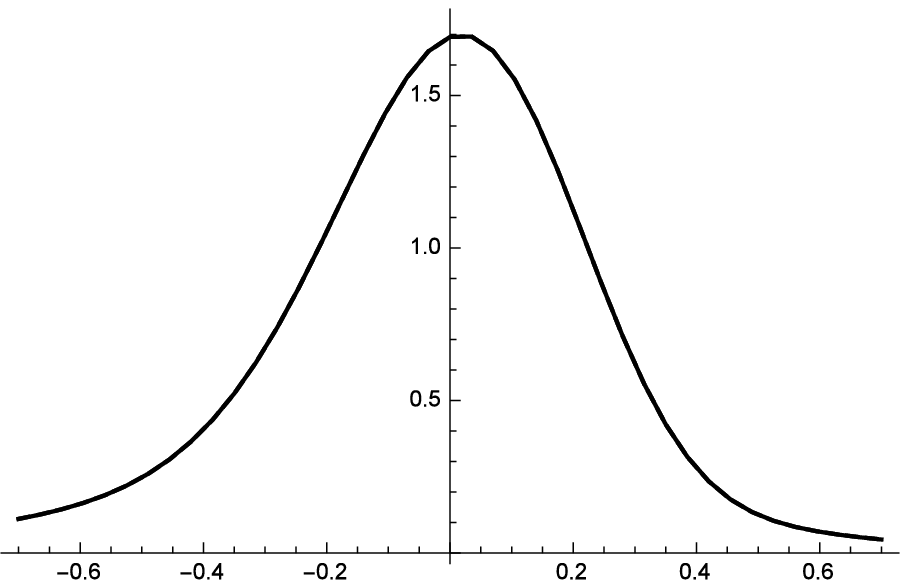} &
\includegraphics[width=.31\textwidth,height=.182\textheight]{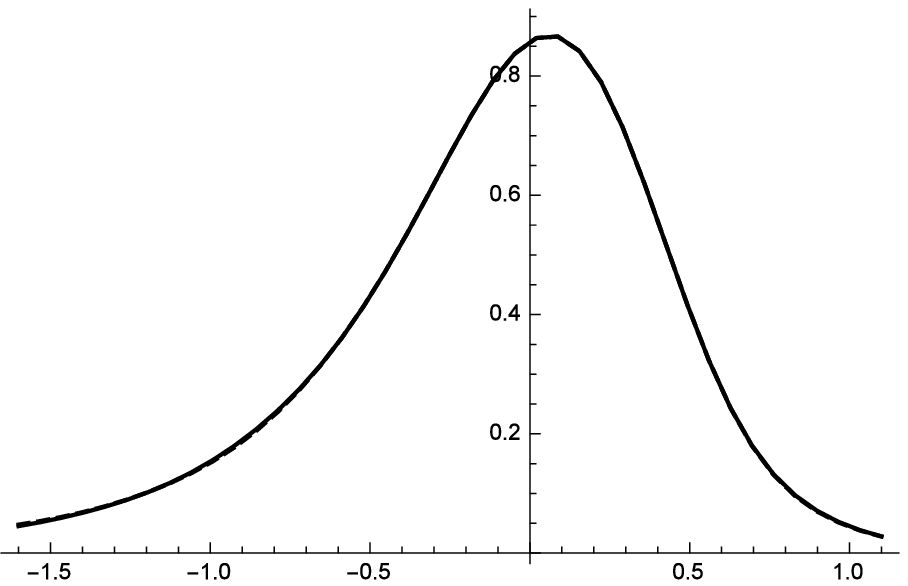} &
\includegraphics[width=.31\textwidth,height=.182\textheight]{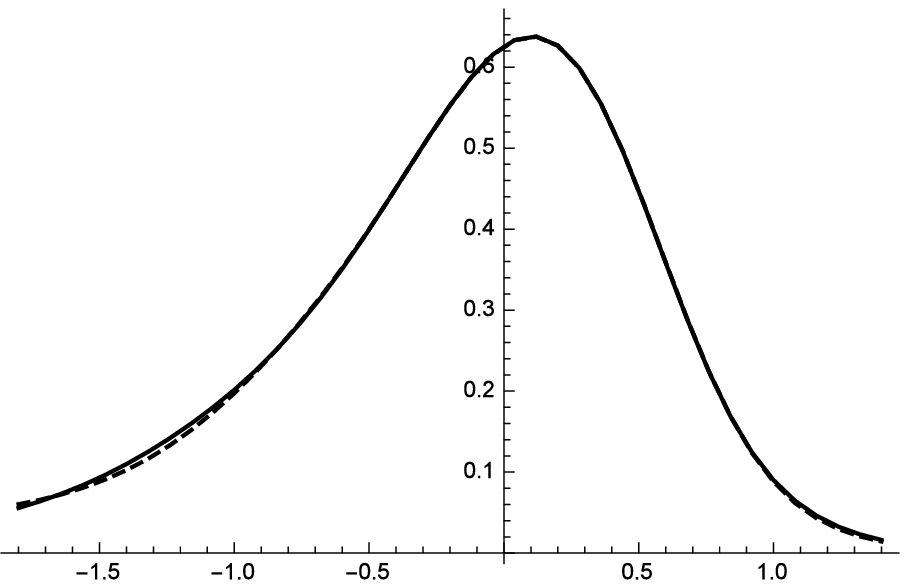} \\
$T-t=1.0$ & $T-t=3.0$ & $T-t=5.0$ \\
\end{tabular}
\caption{Using the model considered in Section \ref{sec:CEV} we plot $p^{(n)}(t,x;T,y)$
(solid black) and $p^{(n-1)}(t,x;T,y)$ (dashed black) as a function of $y$ for $n=\{1,2,3,4\}$ and
$t=\{1.0,3.0,5.0\}$ years.  For all plots we use the Taylor series expansion of Example \ref{example:Taylor}.
Note that as $n$ increases $p^{(n)}$ and $p^{(n-1)}$ become nearly indistinguishable.
Numerical values for $\sup_y|p^{(n)}(t,x;T,y) - p^{(n-1)}(t,x;T,y)|$ as well as computation times are given in Table \ref{tab:density}.
In all plots we use the parameter values are those listed in equation \eqref{eq:parameters}.}
\label{fig:density}
\end{figure}
% Mathematica2/Figure-1-1.00.nb

\begin{table}
\begin{minipage}{0.495\textwidth}
\centering
$\sup_y |p^{(n)}(t,x;T,y) - p^{(n-1)}(t,x;T,y)|$\\[0.5em]
\begin{tabular}{c|ccc}
\hline
$n$ & $T-t=1$ & $T-t=3$ & $T-t=5$ \\
\hline
1 & 0.1232 & 0.1138 & 0.1078 \\
2 & 0.0083 & 0.0160 & 0.0217 \\
3 & 0.0014 & 0.0056 & 0.0118 \\
4 & 0.0004 & 0.0028 & 0.0088 \\
\hline
\end{tabular}
\end{minipage}
\begin{minipage}{0.495\textwidth}
\centering
Computation time relative to $p^{(0)}$\\[0.5em]
\begin{tabular}{c|ccc}
\hline
$n$ & $T-t=1$ & $T-t=3$ & $T-t=5$ \\
\hline
1 & 1.14 & 1.07 & 1.04 \\
2 & 1.59 & 1.50 & 1.45 \\
3 & 2.32 & 2.28 & 2.21 \\
4 & 3.46 & 3.30 & 3.25 \\
\hline
\end{tabular}
\end{minipage}
\caption{Numerical results from Figure \ref{fig:density}. Left: We list as a function of $n$ and
$T-t$ the maximum difference between $p^{(n)}(t,x;T,y)$ and $p^{(n-1)}(t,x;T,y)$.
The supremum is taken over the range of values for $y$ shown in Figure \ref{fig:density}.
Right: We list as a function of $n$ and $T-t$
the average computation time of $p^{(n)}$ relative to $p^{(0)}$.
Relative computation times are described in the last paragraph of Section \ref{sec:CEV}.
In both tables we use the parameter values listed in equation \eqref{eq:parameters}.
}
\label{tab:density}
\end{table}
% Mathematica2/Figure-1-1.00.nb

%%%%%%%%%%%%%%%%  Comparison to Lorig %%%%%%%%%%%%%%%%%%%%%

\clearpage
\begin{figure}
\includegraphics[width=\textwidth,height=.5\textheight]{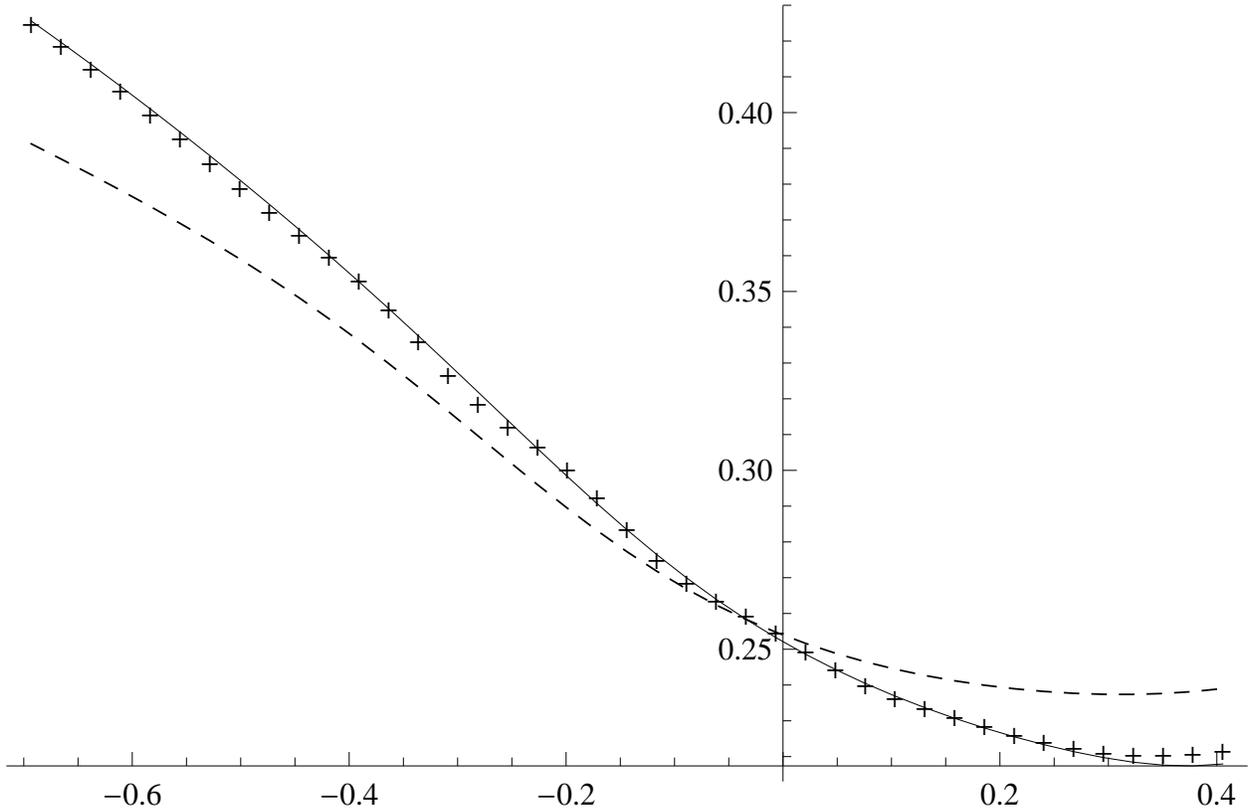}
\caption{Implied volatility (IV) is plotted as a function of $\log$-strike $k := \log K$ for the
model of Section \ref{sec:lorig}.  The dashed line corresponds to the IV induced by
$u^{(0)}(t,x)$.  The solid line corresponds to the IV induced by $u^{(2)}(t,x)$.  To compute
$u^{(i)}(t,x)$, $i\in\{0,2\}$, we use the Taylor series expansion of Example \ref{example:Taylor}  The
crosses correspond to the IV induced by the exact price, which is computed by truncating
(\ref{eq:w}) at $n=8$.
{Truncating (\ref{eq:w}) at $n=8$ ensures a high degree of accuracy since, according to \cite{lorig-jacquier-1}, the error in implied volatility encountered by truncating the series at any $n \geq 4$ is less than $10^{-4}$.}
Parameters for this plot are given in \eqref{eq:params}.} \label{fig:compare.lorig}
\end{figure}
%figure generated using Two-Point-ImpVol-Lorigb.nb

%%%%%%%%%%%%%%%%%%%%%%% Normal Inverse Gaussian   %%%%%%%%%%%%%%%%%%%%%%%%%%%

\clearpage
\begin{figure}
\centering
\includegraphics[width=\textwidth,height=0.55\textheight]{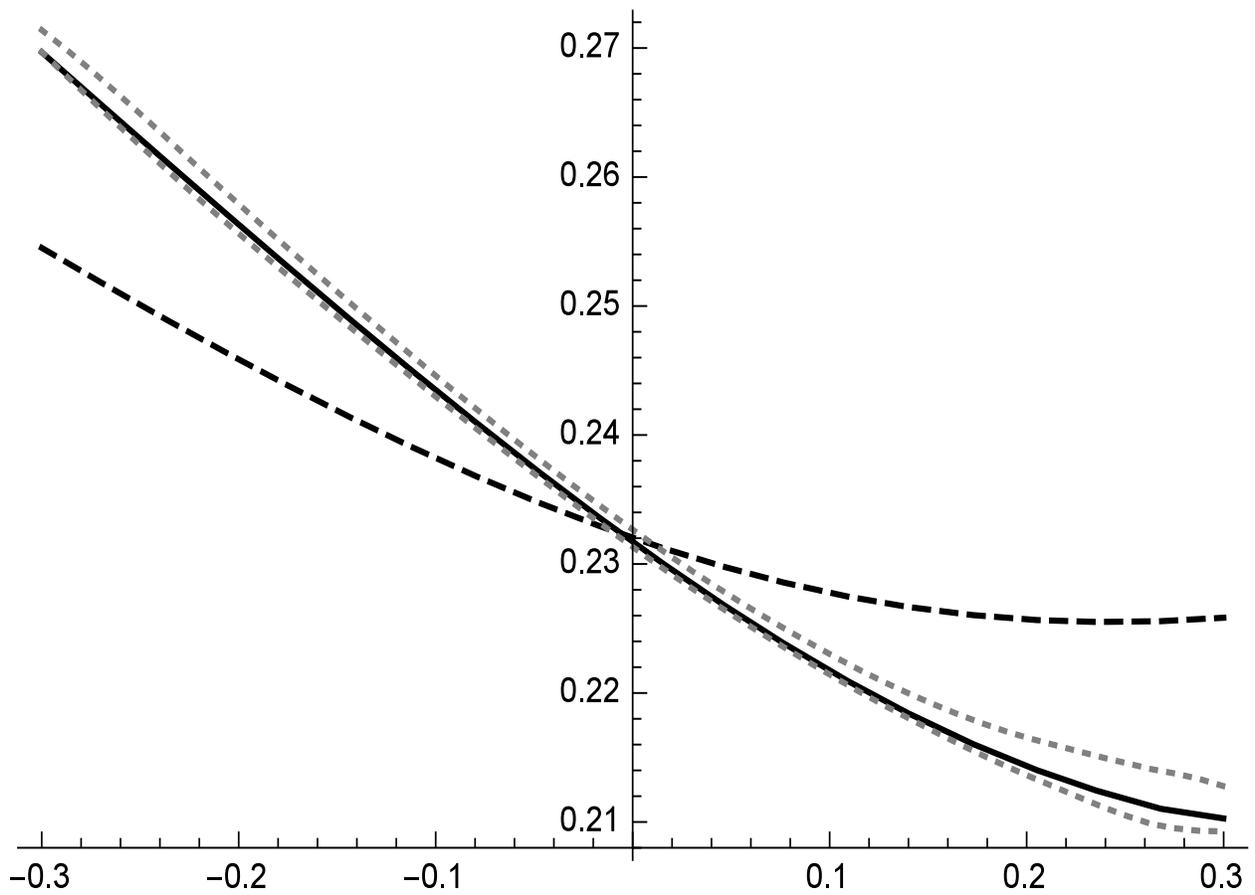}
\caption{Implied volatility (IV) is plotted as a function of $\log$-strike $k := \log K$ for the
model of Section \ref{sec:nig}.  The dashed line corresponds to the IV induced by
$u^{(0)}(t,x)$.  The solid line corresponds to the IV induced by $u^{(3)}(t,x)$.  The dotted lines  correspond to the $95\%$ confidence interval of IV resulting form a Monte Carlo simulation.  We use parameters given in equation \eqref{eq:nig.params}.}
\label{fig:iv-nig}
\end{figure}

%%%%%%%%%%%%%%%%%%%%%%% Bond Prices and Credit Spreads   %%%%%%%%%%%%%%%%%%%%%%%%%%%

\clearpage
\begin{figure}
\centering
\begin{tabular}{cc}
\includegraphics[width=.495\textwidth,height=.35\textheight]{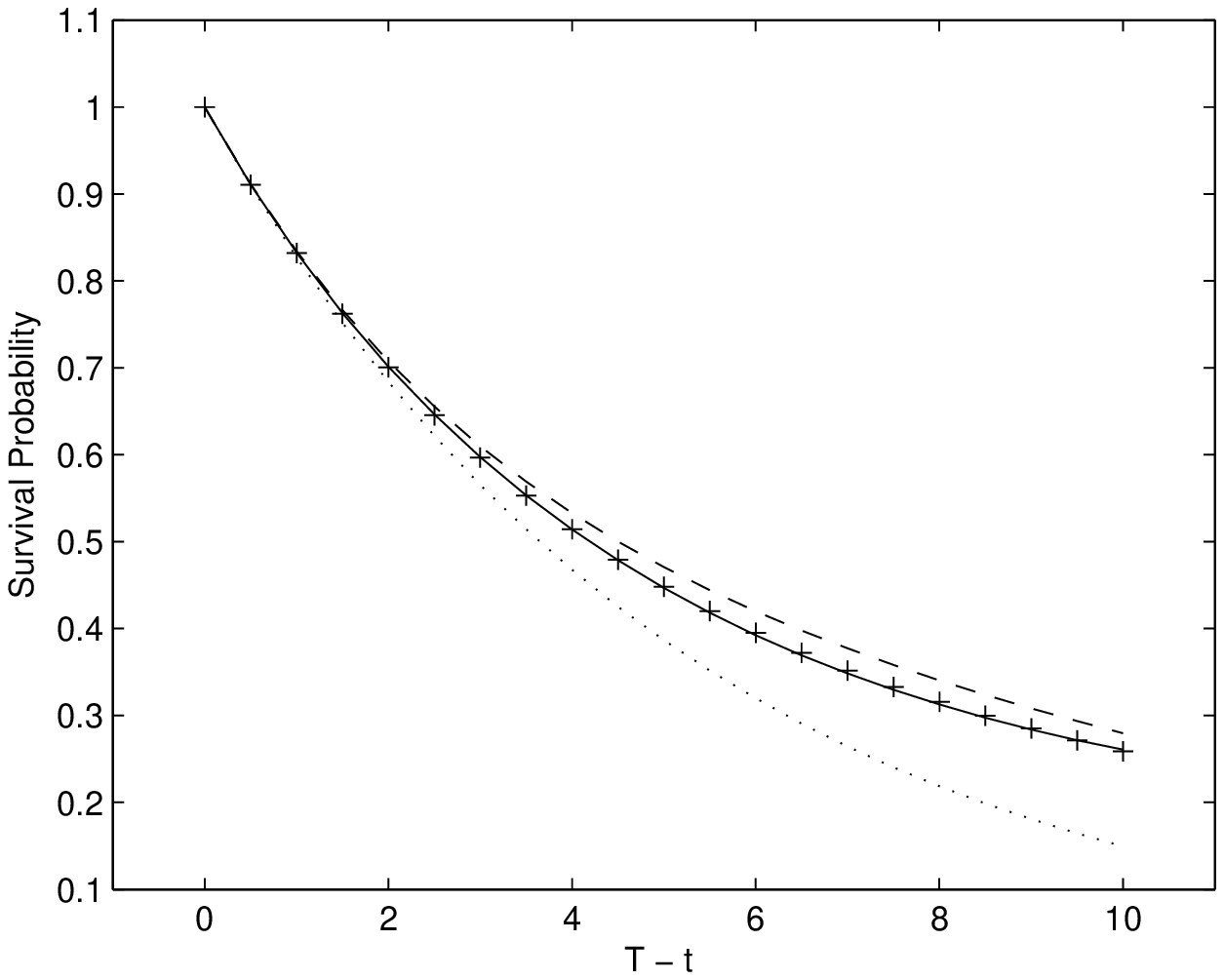} &
\includegraphics[width=.495\textwidth,height=.35\textheight]{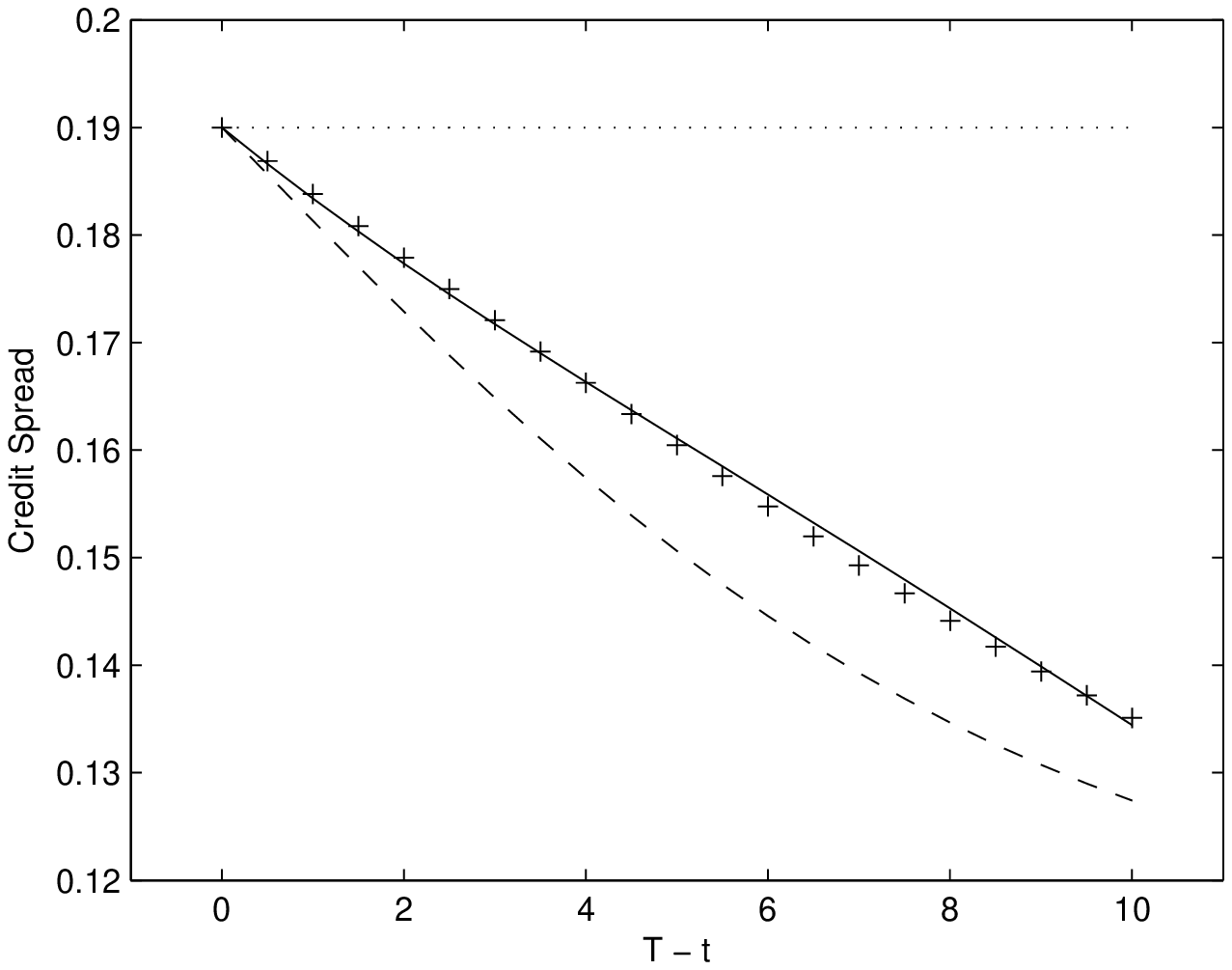}
\end{tabular}
\caption{
Left: survival probabilities $u(t,x;T) := \Pb_x[ \zeta > T | \zeta > t]$ for the JDCEV
model described in Section \ref{sec:JDCEV}. The dotted line, dashed line and solid line correspond
to the approximations $u^{(0)}(t,x;T)$, $u^{(1)}(t,x;T)$ and $u^{(2)}(t,x;T)$ respectively, all of
which are computed using Remark \ref{rmk:jdcev}.  The crosses indicate the exact survival
probability, computed by truncating equation (\ref{eq:u.exact}) at $n=70$.
Our numerical tests indicate that truncating (\ref{eq:u.exact}) at any $n \geq 40$ resulted in numerical values of $u$ that differ by less than $10^{-5}$.
Right:
%The yields $Y(t,x;T)$ on the defaultable bond described in Section \ref{sec:JDCEV}.  The $n$th order yield curve is computed using $Y^{(n)}(t,x;T):= - \log (u^{(n)}(T-t,x)) / (T-t)$.  The dotted line, dashed line and solid line correspond to the approximations $Y^{(0)}(T-t,x)$, $Y^{(1)}(T-t,x)$ and $Y^{(2)}(T-t,x)$ respectively.  The crosses indicate the exact yield, computed using (\ref{eq:u.exact}) and (\ref{eq:Y}).
the corresponding yields $Y^{(n)}(t,x;T):= - \log (u^{(n)}(t,x;T)) / (T-t)$ on a defaultable bond.
The parameters used in the plot are as follows: $x = \log(1)$, $\beta = -1/3$, $b=0.01$, $c=2$ and $a = 0.3$.}
\label{fig:survival}
\end{figure}
% Figure and plot generated using JDCEV2.nb

\begin{table*}
\centering
\begin{tabular}{c|cccc}
$T-t$ & $Y$ & $Y-Y^{(0)}$ & $Y-Y^{(1)}$ & $Y-Y^{(2)}$ \\
\hline
\hline
 1.0 & 0.1835 & -0.0065 & 0.0022 & 0.0001 \\
 2.0 & 0.1777 & -0.0123 & 0.0048 & 0.0003 \\
 3.0 & 0.1720 & -0.0180 & 0.0071 & 0.0003 \\
 4.0 & 0.1663 & -0.0237 & 0.0089 & -0.0001 \\
 5.0 & 0.1605 & -0.0295 & 0.0099 & -0.0006 \\
 6.0 & 0.1548 & -0.0352 & 0.0102 & -0.0011 \\
 7.0 & 0.1493 & -0.0407 & 0.0101 & -0.0013 \\
 8.0 & 0.1442 & -0.0458 & 0.0095 & -0.0011 \\
 9.0 & 0.1394 & -0.0506 & 0.0087 & -0.0005 \\
 10.0 & 0.1351 & -0.0549 & 0.0077 & 0.0007 \\
\hline
\end{tabular}
\caption{The yields $Y(t,x;T)$ on the defaultable bond described in Section \ref{sec:JDCEV}: exact
($Y$) and $n$th order approximation ($Y^{(n)}$).  We use the following parameters: $x = \log(1)$,
$\beta = -1/3$, $b=0.01$, $c=2$ and $\del = 0.3$.} \label{tab:creditspread}
\end{table*}

\begin{figure}
\centering
\begin{tabular}{cc}
$n=0$ & $n=0$\\
\includegraphics[height=.145\textheight]{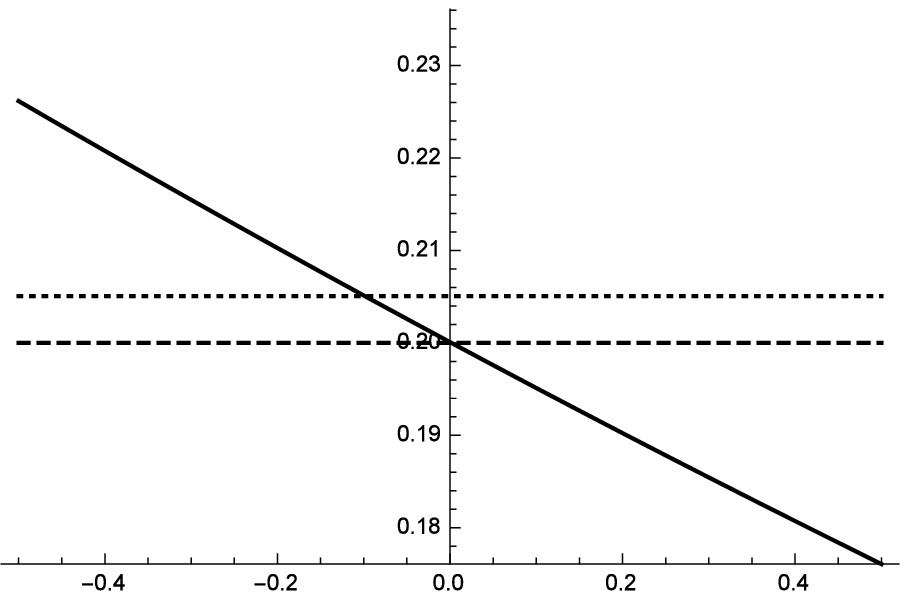} &
\includegraphics[height=.145\textheight]{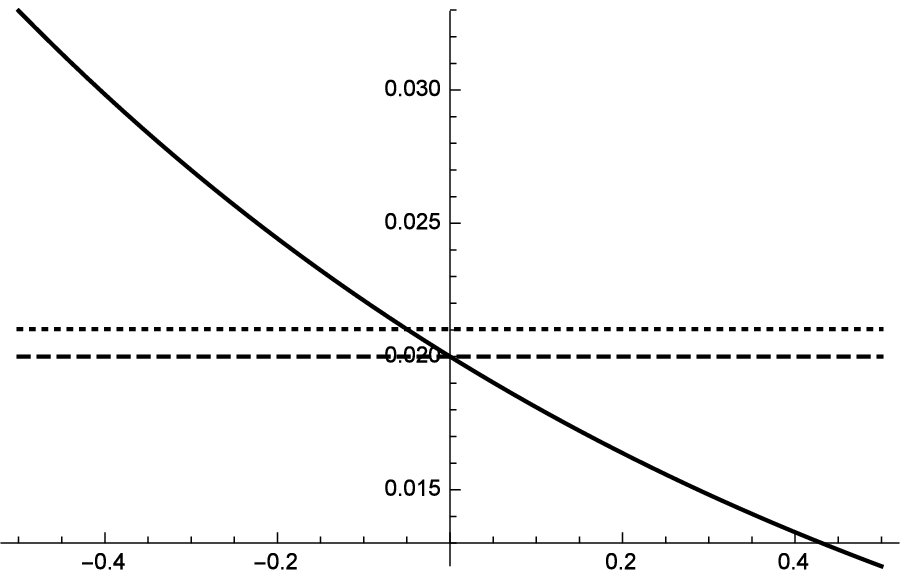} \\
$n=1$ & $n=1$\\
\includegraphics[height=.145\textheight]{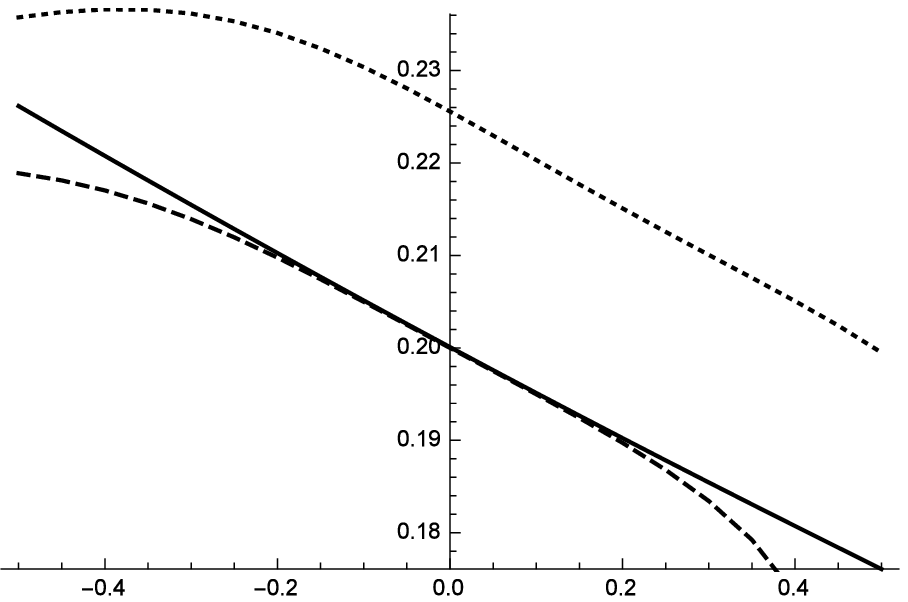} &
\includegraphics[height=.145\textheight]{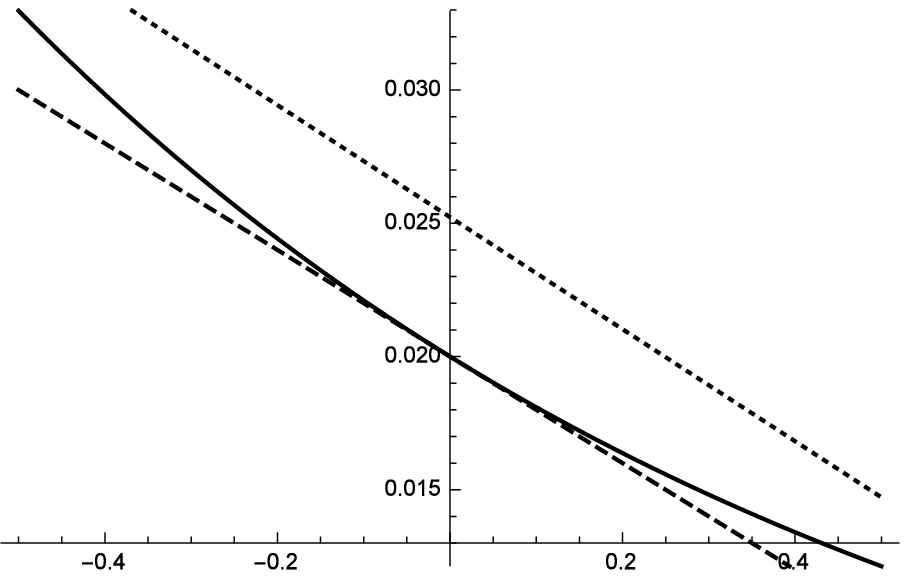} \\
$n=2$ & $n=2$\\
\includegraphics[height=.145\textheight]{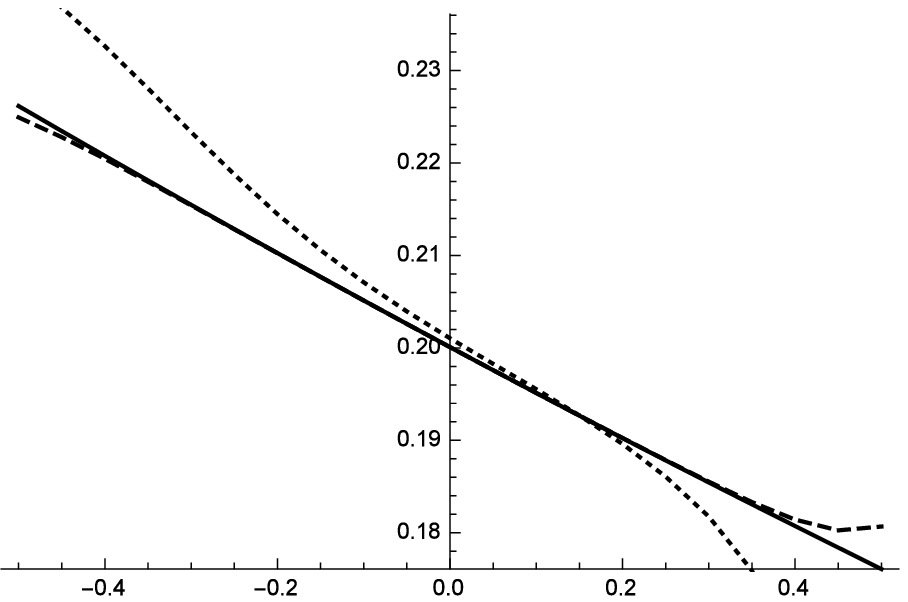} &
\includegraphics[height=.145\textheight]{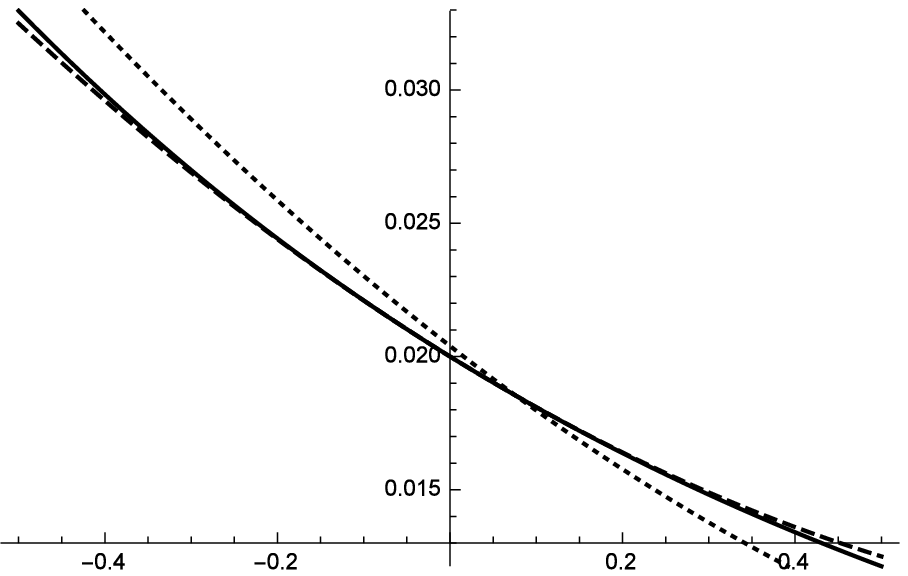} \\
$n=3$ & $n=3$\\
\includegraphics[height=.145\textheight]{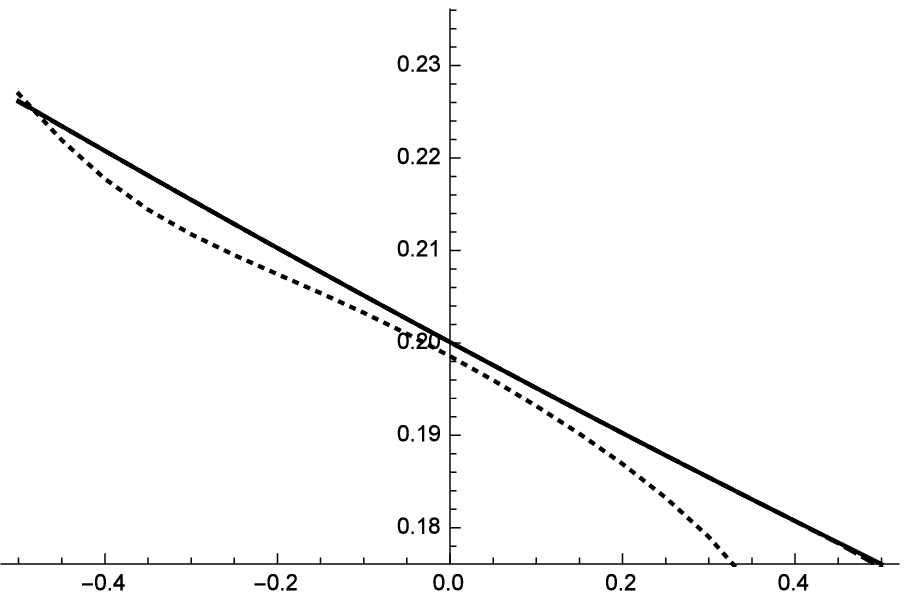} &
\includegraphics[height=.145\textheight]{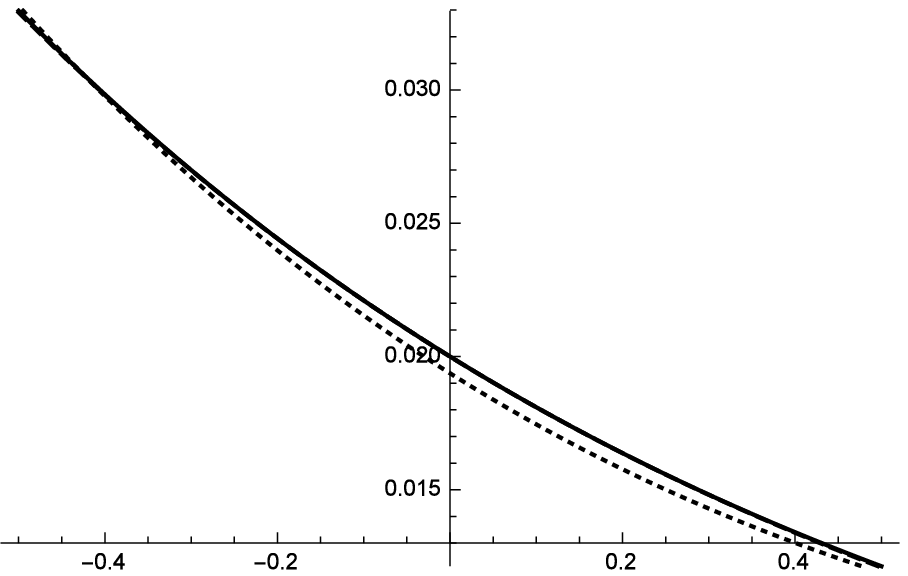} \\
$n=4$ & $n=4$\\
\includegraphics[height=.145\textheight]{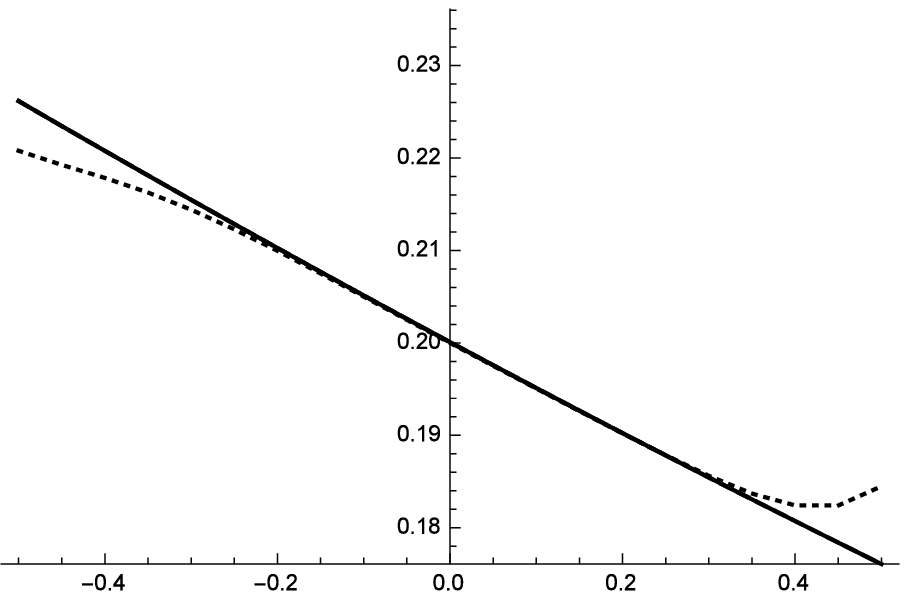} &
\includegraphics[height=.145\textheight]{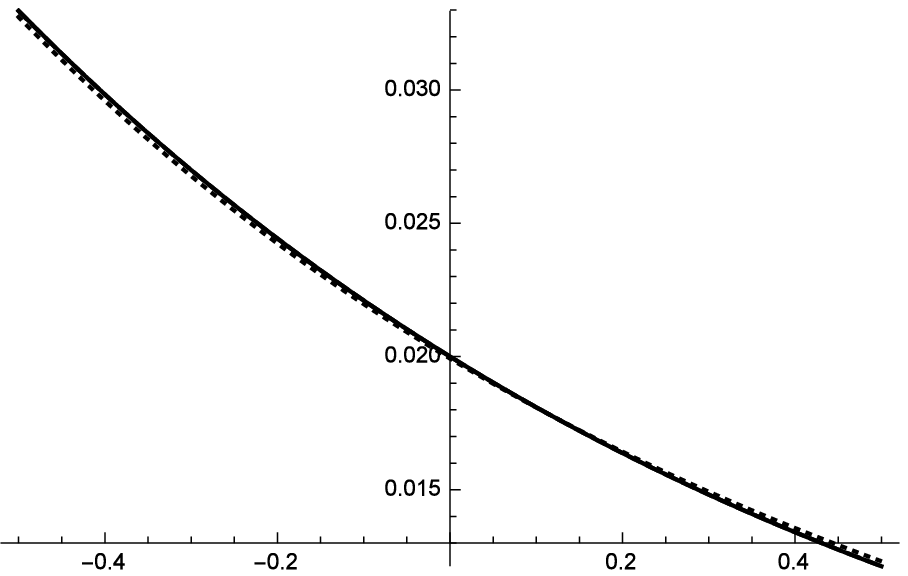} \\
\end{tabular}
\caption{
We consider the CEV model described in Section \ref{sec:hermite} with $x=0$, $T-t=1$, $\del=0.2$ and $\beta=1/2$.
LEFT: We plot  as a function of $\log$ moneyness $(\log K - x)$ the exact implied volatility $\text{IV}[u(t,x;K)]$ (solid) as well as the Taylor and Hermite approximations: $\text{IV}[u_\text{T}^{(n)}(t,x;K)]$ (dashed) and $\text{IV}[u_\text{H}^{(n)}(t,x;K)]$ (dotted).
RIGHT: We plot as a function of $x$ the exact diffusion coefficient $a(x)$ (solid) as well as the $n$th order Taylor and Hermite approximations: $a_\text{T}^{(n)}(x)$ (dashed) and $a_\text{H}^{(n)}(x)$ (dotted).
}
\label{fig:hermite}
\end{figure}
% Taylor-vs-Hermite-CEV-1.01

%%%%%%%%%%%%%%%%%%%%%%% Normal Inverse Gaussian   %%%%%%%%%%%%%%%%%%%%%%%%%%%

\clearpage
\begin{figure}
\centering
\includegraphics[width=\textwidth,height=0.55\textheight]{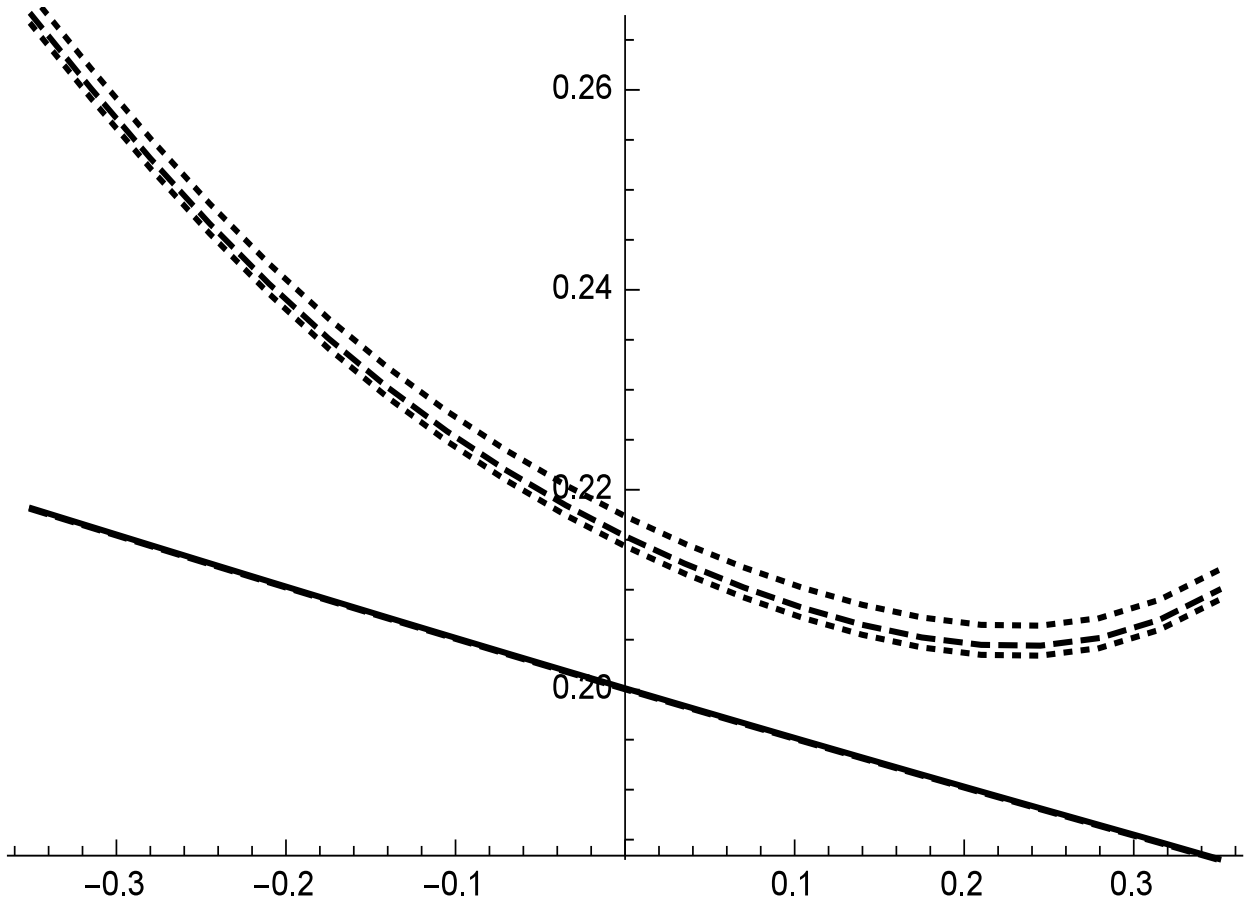}
\caption{Implied volatility (IV) is plotted as a function of $\log$-strike $k := \log K$ for the
model of Section \ref{sec:jump-vs-nojump}.
The solid line corresponds to the implied volatility induced by the exact call price in the case of no jumps.  The dotted lines indicate the $95\%$ confidence interval of implied volatility, computed via Monte Carlo simulation, for the model with jumps.  The dashed lines correspond to the implied volatility induced by our 3rd order Taylor series approximation: $\text{IV}[u^{(3)}(t,x;K)]$.  Note that the bottom dashed line and the solid line are nearly indistinguishable, while the top dashed line falls strictly within the two dotted lines.}
\label{fig:jump-vs-nojump}
\end{figure}

\end{document}